\documentclass[11pt]{amsart}

\usepackage[ansinew]{inputenc}
\usepackage[a4paper,dvips]{geometry}
\usepackage{latexsym}
\usepackage{amssymb}
\usepackage{bbm}
\usepackage{color}
\usepackage{tikz}
\usetikzlibrary{matrix,calc}

\usepackage{graphicx}
\geometry{a4paper, top=25mm, left=25mm, right=25mm, bottom=30mm,         headsep=10mm, footskip=12mm}

\newtheorem{theorem}{Theorem}[section]
\newtheorem{corollary}[theorem]{Corollary}
\newtheorem{question}[theorem]{Question}
\newtheorem{lemma}[theorem]{Lemma}

\newtheorem{definition}[theorem]{Definition}

\theoremstyle{remark}
\newtheorem{remark}{Remark}

\newcommand{\NN}{\mathbb{N}}

\newcommand{\RR}{\mathbb{R}}

\newcommand{\Sy}{\mathfrak{S}}

\title[Point charges and the van der Corput sequence]{Greedy energy minimization can count in binary: Point charges and the van der Corput sequence}

\author{Florian Pausinger}
\address{School of Mathematics \& Physics, Queen's University Belfast, BT7 1NN, Belfast, United Kingdom.}
\email{f.pausinger@qub.ac.uk}

\keywords{11B83, 11K31, 31C15, 49S05, 52C25}
\subjclass{Riesz energy, energy minimization, van der Corput sequence, Leja sequence, universality}

\date{21/1/2020}

\begin{document}

\maketitle


\begin{abstract}
This paper establishes a connection between a problem in Potential Theory and Mathematical Physics, arranging points so as to minimize an energy functional, and a problem in Combinatorics and Number Theory, constructing 'well-distributed' sequences of points on $[0,1)$. 
Let $f:[0,1] \rightarrow \mathbb{R}$ be (i) symmetric $f(x) = f(1-x)$, (ii) twice differentiable on $(0,1)$, and (iii) such that $f''(x)>0$ for all $x \in (0,1)$. 
We study the greedy dynamical system, where, given an initial set $\{x_0, \ldots, x_{N-1}\} \subset [0,1)$, the point $x_N$ is obtained as
$$ x_{N} = \arg\min_{x \in [0,1)} \sum_{k=0}^{N-1}{f(|x-x_k|)}.$$
We prove that if we start this construction with the single element $x_0=0$, then all arising constructions are permutations of the van der Corput sequence (counting in binary and reflected about the comma): \textit{greedy energy minimization recovers the way we count in binary.} This gives a new construction of the classical van der Corput sequence. 
The special case $f(x) = 1-\log(2 \sin(\pi x))$ answers a question of Steinerberger.
Interestingly, the point sets we derive are also known in a different context as Leja sequences on the unit disk.
Moreover, we give a general bound on the discrepancy of any sequence constructed in this way for functions $f$ satisfying an additional assumption.

\end{abstract}

\section{Introduction and main result}

\subsection{A Problem in Mathematical Physics.}  A classical question in Mathematical Physics, sometimes known as Thomson's problem \cite{thomson}, is the following: suppose you have $N$ electrons on $\mathbb{S}^2$ interacting via a Coulomb potential, what are the stable equilibria? We can associate to any set of $N$ points a notion of energy
$$ E(\left\{x_0, \dots, x_{N-1}\right\}) = \sum_{i \neq j}{\frac{1}{\|x_i - x_j\|}}$$
and the main questions are then: (i) what is the minimal energy? (ii) what are configurations of points attaining minimal energy? and (iii) what do these configurations look like? These old questions are far from answered: the minimal energy is known to have an asymptotic expansion, the first few terms are known; see e.g. \cite{saffneu3, brauchart, brauch, saff2, saff}. Relatively little is known about minimal energy configurations, they seem to arrange themselves in a hexagonal pattern which is one instance of the crystallization conjecture \cite{bl, radin, theil}. The problem is notoriously difficult: the special case of $N=5$ points was only very recently solved by R. Schwartz \cite{schwartz}. For some special values of $N$ there are constructions that are known; we refer to the seminal work of Cohn-Kumar \cite{cohn}. Whenever there are topological obstructions to triangular lattice, the defects seem to localize in scars \cite{many}.
Needless to say, many of these questions remain interesting and many of the results carry over to the case of a more general domain instead of $\mathbb{S}^2$ and a more general energy functional
$$ E(\left\{x_0, \dots, x_{N-1}\right\}) = \sum_{i \neq j}{f(\|x_i - x_j\|)},$$
where $f:\mathbb{R} \rightarrow \mathbb{R}$. The choice $f(x) = x^{-s}$, Riesz potentials, is among the most popular.
The only case where the problem is known to have been widely solved is the case
of the one-dimensional torus: for a rather large class of functions the optimal arrangement
is known to be equispaced points \cite{brauchart3}.

Given the difficulty of these questions, one would assume that it is much more difficult to study the dynamical problem, where one starts with a given set of points $\{x_0, \ldots, x_{N-1}\} \subset [0,1)$ (or possibly just a single point) and then defines the next element in the sequence in a greedy fashion via
\begin{equation} \label{functional} x_{N} = \arg\min_{x \in [0,1)} \sum_{k=0}^{N-1}{f(|x-x_k|)}, \end{equation}
i.e. adding the point in the location of the minimum of the energy (with the caveat that should the minimum not be unique, then any choice is admissible).  One of the main contributions of our paper is that for a wide class of functions $f$, this dynamical system can indeed be studied for one-periodic functions, i.e. on the one-dimensional torus $\mathbb{T}$ identified with the unit interval. Moreover, it gives rise to a surprisingly rigid dynamical structure.

\subsection{A Problem in Combinatorics and Number Theory.} \label{sec:problem} Suppose we want to distribute a sequence of points $X=(x_n)_{n=1}^{\infty}$ evenly over $[0,1)$ in the most regular fashion -- how would one do it?  If we knew in advance that we want to place $N$ points, then we would presumably place them at equidistant intervals. However, what if we wanted the sequence to be distributed regularly at all times (i.e. also at all intermediate stages)? We now make this notion precise and define the extreme discrepancy
of the first $N$ points of $X$ as
$$ D_N(X) = \sup_{0 \leq \alpha<\beta < 1} \left |  \frac{\# \{ 1 \leq i \leq N: \alpha\leq x_i < \beta\}}{N} - (\beta-\alpha) \right |.$$
It is easy to see that $D_N \geq 1/N$. How small can it be? In particular, is there a sequence $X$ such that $D_N(X) \leq c N^{-1}$ for all $N \in \mathbb{N}$? This question, originally due to van der Corput, was answered by van Aardenne-Ehrenfest who showed that no such result exists; see \cite{KN} for details. The problem was finally solved by Schmidt \cite{sch72} who proved that $D_N \geq c  N^{-1} \log{N}$ for infinitely many $N \in \mathbb{N}$; see \cite{survey1, KN, pausinger1, sch72} and references therein for smallest known constants. Sequences with $D_N(X) \leq c  N^{-1} \log{N}$ for a constant $c>0$ and all $N$ are called \emph{low discrepancy} sequences. In particular, this bound matches classical constructions of sequences, one of the most famous of which is the van der Corput sequence (\cite{vdc35a, vdc35b}):
$$ 0, \frac12, \frac14, \frac34, \frac18, \frac58, \frac38, \frac78, \frac{1}{16}, \frac{9}{16}, \frac{5}{16}, \dots$$ 
The original definition of the van der Corput sequence is based on writing integers in binary expansion: the $n-$th element is given by 
\begin{enumerate}
\item[(i)] writing the integer in binary, i.e. $22 = 10110_2$
\item[(ii)] inverting the order of the digits $10110 \rightarrow 01101$
\item[(iii)] writing a comma in front of it and interpreting it as a real number in $[0,1]$
$$ x_{22} = .01101_2 =\frac{1}{4} + \frac{1}{8} + \frac{1}{32} = \frac{13}{32}.$$
\end{enumerate}
The van der Corput sequence and its various generalisations are known to be very close to optimal with regards to discrepancy; see \cite{faure1, survey1, pausinger1}.

\subsection{A possible connection.} There are very few known constructions of low discrepancy sequences, most of them are based on structures from Number Theory or Combinatorics. 
Low discrepancy sequences are important in numerical integration since they ensure smallest possible approximation errors; see \cite{DP}. While the theory is very well understood in one dimension, there are to-date two different conjectures of the sharp lower bound on the discrepancy of arbitrary sequences in more dimensions; \cite{DP}. In particular, it is not clear whether there exist sequences more regular than anything we can construct so far. 

Motivated by this, Steinerberger \cite{stefan1, stefan2} recently proposed to study whether regular sequences could be constructed via dynamical systems of the type outlined in \eqref{functional}.
More precisely, suppose we are given $\{x_0, \ldots, x_{N-1} \} \subset [0,1)$, then he proposed to construct $x_N$ in a greedy manner as
\begin{equation} \label{eqn:stefan} 
x_N = \arg \min_{\min_k |x-x_k| \geq N^{-10}} \sum_{k=0}^{N-1} 1 - \log(2 \sin (\pi |x-x_k|)), 
\end{equation}
and if the minimum is not unique, any choice is admissible.
Steinerberger proves that, independently of the initial conditions, such sequences satisfy $D_N \leq c N^{-1/2} \log{N}$. Moreover, dropping the technical gap condition in \eqref{eqn:stefan}, he conjectures:
\begin{enumerate}
\item[(A)] (weak form): the discrepancy of every such sequence satisfies $$D_N \leq c N^{-1} \log{N};$$
\item[(B)] (strong form): and the constant $c$ does not depend on the initial set from which this construction is started.
\end{enumerate}
We give affirmative answers to these conjectures in Theorem \ref{thm:main} in the special case when we are given only one initial point $x_0$.
It was conjectured that if we start with $x_0 = 0$, then the van der Corput sequence is an admissible choice for the greedy selection rule; see Figure \ref{fig:vdc10}. The main purpose of our paper is to prove that this is indeed the case and that it holds at a much larger level of generality for dynamical systems of this type; for a precise statement of this result see Theorem \ref{thm:main}.
Moreover, we give a general discrepancy bound in Theorem \ref{thm:main2} for a subset of the functions considered in Theorem \ref{thm:main} and arbitrary initial choices which is slightly worse than Steinerberger's bound for the special case of 
$f(x)=1 - \log(2 \sin (\pi |x|))$.

\begin{figure}[h!]
\begin{center}
\includegraphics[scale=0.8]{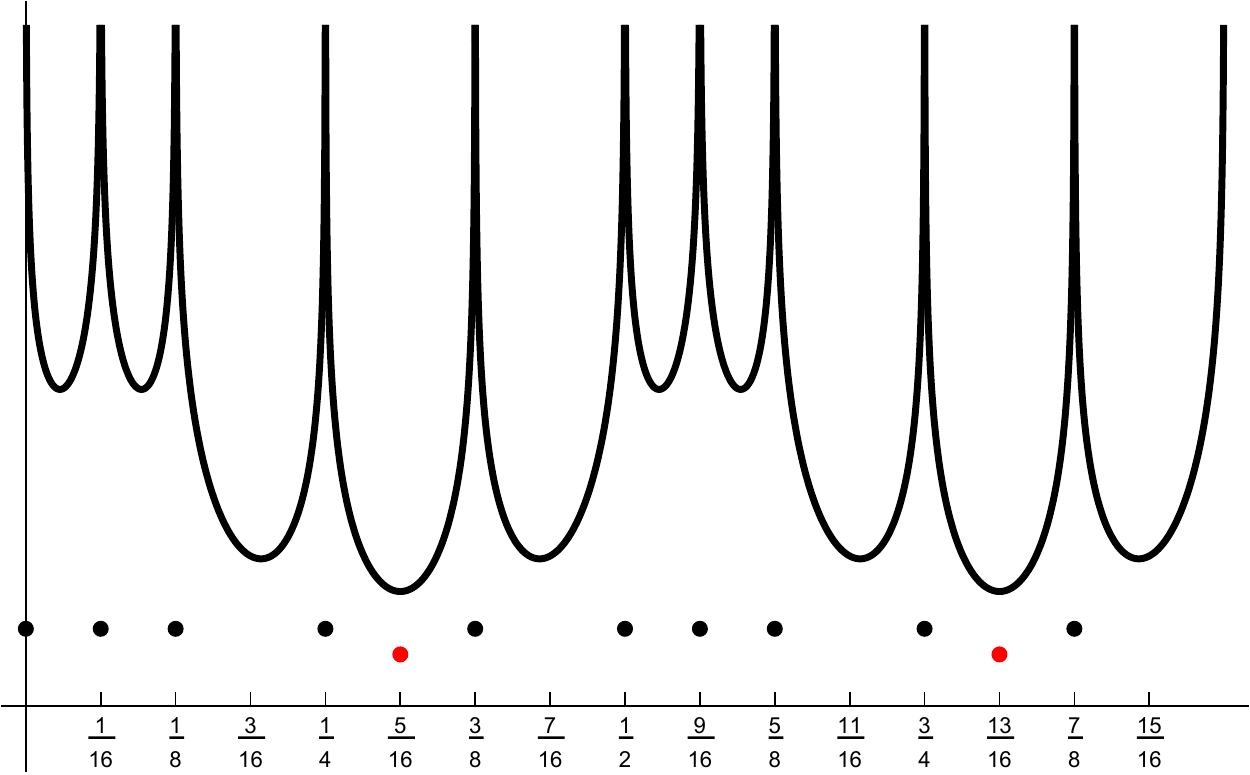}
\end{center}
\caption{The set $M_{10}$ of the first 10 points (black) of the classical van der Corput sequence and the function  $\sum_{k=0}^9 1 - \log(2 \sin(\pi |x-x_k|))$. The red points are the candidates for the next point as suggested by the greedy algorithm; note that the $x_{10}=.0101_2=5/16$.} \label{fig:vdc10}
\end{figure}

\subsection{Outline.} 
We state our main results, Theorem \ref{thm:main} and Theorem \ref{thm:main2}, in Section \ref{sec:result}; in particular this section also contains in Subsection \ref{sec:family} the definition of the sequences used to characterise the output of the greedy algorithm of Steinerberger as well as an outline of our proof strategy.
In Section \ref{sec:vdc} we review properties of the van der Corput sequence. We also recall important generalisations and the method of Faure to calculate the discrepancy of permuted van der Corput sequences. 
In Section \ref{sec:family2} we study the discrepancy of the particular generalised van der Corput sequences used to characterise the output of the algorithm. As a main result we show in Theorem \ref{thm:family} resp. in Corollary \ref{cor:family} that all such sequences have the same discrepancy. In Section \ref{sec:algorithm} we relate our results to the greedy algorithm. We show in Theorem \ref{thm:vdc} that the classical van der Corput sequence is an admissible output of the algorithm and we finally prove Theorem \ref{thm:main}. Section \ref{sec:main2} contains a proof of Theorem \ref{thm:main2}.

\section{Main results}
\label{sec:result}

\subsection{Preliminaries}
Let $\mathfrak{S}_b$ be the set of all permutations of $\{0,1,\ldots, b-1\}$, i.e. all permutations of the first $b$ non-negative integers.
Throughout this paper we sometimes write a concrete permutation $\sigma$ as tuple, e.g. $\sigma=(0,2,4,1,3)$ meaning that $0$ is mapped to $0$, $1$ is mapped to $2$, $2$ is mapped to $4$ and so on; i.e. the number at the $i$-th position of the tuple is the image of $i$ under $\sigma$. Moreover, for tuples of numbers $\tau_1,\tau_2$ of length $b_1, b_2$ we write $(\tau_1,\tau_2)$ for the concatenation of the two tuples; i.e. $(\tau_1,\tau_2)$ is a tuple of numbers of length $b_1+b_2$. Finally, for permutations (or tuples) of fixed length $b$, we are sometimes only interested in its initial segment of the first $k$ numbers. In this case we may write $\sigma=(\tau_k,\ast)$ meaning that the first $k$ elements of the permutation are determined by $\tau_k$ and the rest of the permutation can be any arrangement of the remaining $b-k$ numbers.

\subsection{Permuted van der Corput sequences} \label{sec:Defvdc}
Let $a_j(n)$ denote the $j$th coefficient in the binary representation 
$$ n = a_0 \cdot 2^0+ a_1 \cdot 2^1 + a_2 \cdot 2^2+ \ldots = \sum_{j=1}^{\infty} a_j 2^j $$
of an integer $n$, in which $0 \leq a_j (n) \leq 1$ and if $0\leq n < 2^m$, then $a_j(n)=0$ for all $j\geq m$. The binary radical inverse function is defined as $S_2: \NN_0 \rightarrow [0,1)$,
$$ S_2(n) = \frac{a_0(n)}{2} + \frac{a_1(n)}{2^2} +\frac{a_2(n)}{2^3} + \ldots = \sum_{j=0}^{\infty} \frac{a_j(n)}{2^{j+1}}.$$
Then the classical van der Corput sequence is defined as $\mathcal{S}_2=(S_2(n))_{n\geq 0}$.

Faure \cite{faure1} generalised the definition of the classical van der Corput sequence in two ways. First, he replaced the binary representation of an integer by its general $b$-adic representation for a fixed integer base $b\geq 2$. This allows for the definition of the $b$-adic radical inverse function $S_b$, which in turn can be used to define van der Corput sequences in general bases; i.e. $\mathcal{S}_b=(S_b(n))_{n\geq 0}$. 
Furthermore, for $\sigma \in \mathfrak{S}_b$ Faure defines the generalised (or permuted) van der Corput sequence $\mathcal{S}_b^{\sigma}=(S_b^{\sigma}(n))_{n \geq 0}$ for a fixed base $b\geq 0$ via the permuted $b$-adic radical inverse function
$$ S_b^{\sigma}(n) = \sum_{j=0}^{\infty} \frac{\sigma(a_j(n))}{b^{j+1}}. $$
Every such sequence is uniformly distributed modulo 1; \cite[Propri\'et\'e 3.1.1]{faure1}.

\subsection{Family of permutations} \label{sec:family}
We inductively define a set of permutations $\mathcal{P}_m \subset \mathfrak{S}_{2^m}$ in each basis $b_m=2^m$. We start with $b_1=2$ and $\mathcal{P}_1=\{(0,1)\}$ and we obtain the set $\mathcal{P}_{m+1}$ from $\mathcal{P}_m$ in the following way: 
We first multiply each permutation $\sigma \in \mathcal{P}_m$ with 2 and denote the resulting tuple of numbers as $2 \sigma$ and the set of all such tuples as $2 \mathcal{P}_m$. Next, each $2\sigma \in 2 \mathcal{P}_m$ gives rise to $2^m$ new tuples: For each odd $a$ with $1\leq a \leq 2^{m+1}$ we form a new tuple $2\sigma \oplus a$ by adding $a$ to $2\sigma$ (addition modulo $2^{m+1}$). The set of all such tuples is denoted by $2\mathcal{P}_m \oplus a$. Finally, the set $\mathcal{P}_{m+1}$ is defined as the set of all permutations $(2\sigma, 2\sigma' \oplus a)$ for $\sigma, \sigma' \in \mathcal{P}_m$ and odd $a$ with $1\leq a \leq 2^{m+1}$.

As examples we construct $\mathcal{P}_2$ and $\mathcal{P}_3$. First, we have that $2\mathcal{P}_1 =\{(0,2)\}$ and $2\mathcal{P}_1\oplus a =\{(1,3),(3,1)\}$. Consequently $\mathcal{P}_2=\{(0,2,1,3),(0,2,3,1)\}$. Next, we have 
$$2\mathcal{P}_2 =\{(0,4,2,6), (0,4,6,2)\},$$ 
and 
$$2\mathcal{P}_2 \oplus a =\{(1,5,3,7), (3,7,5,1),(5,1,7,3), (7,3,1,5), (1,5,7,3),(3,7,1,5),(5,1,3,7),(7,3,5,1)\}$$
from which we can build $\mathcal{P}_3$.


\subsection{Main result and open questions}
We can now state our first main result.
\begin{theorem}[Counting in Binary] \label{thm:main} Let $f:[0,1] \rightarrow \mathbb{R}$, be a function satisfying 
\begin{enumerate}
\item[(i)] $f(x) = f(1-x)$;
\item[(ii)] $f$ is twice differentiable on $(0,1)$;
\item[(iii)] $f''(x) >0$ for all $x \in (0,1)$;
\end{enumerate}
and let the sequence $X=(x_N)_{N=0}^{\infty} \subset [0,1)$ be defined by $x_0 = 0$ and
\begin{equation*}  x_{N} =  \arg\min_{x \in (0,1)} \sum_{k=0}^{N-1}{f(|x-x_k|)}, \end{equation*}
for $N>0$ and  where every global minimum is admissible if it happens not to be unique. Then for every fixed $N>0$ there exist an $m>0$ with $N\leq 2^m$ and a $\sigma \in \mathcal{P}_m$ such that $x_k= S_{2^m}^{\sigma}(k)$ for all $1\leq k \leq N$ and
$D_N(X) = D_N(\mathcal{S}_{2^m}^{\sigma})=D_N(\mathcal{S}_2)$.
\end{theorem}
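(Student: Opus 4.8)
The plan is to split the claim into a \emph{dynamical} part and a \emph{combinatorial} part. The dynamical part: for every $N$ there exist $m$ with $2^m>N$ and $\sigma\in\mathcal P_m$ with $x_k=S_{2^m}^\sigma(k)$ for all $1\le k\le N$ (for such $k<2^m$ one has $S_{2^m}^\sigma(k)=\sigma(k)2^{-m}$). The combinatorial part: $D_N(\mathcal S_{2^m}^\sigma)=D_N(\mathcal S_2)$ for every $m$, every $\sigma\in\mathcal P_m$ and every $N$. Granting both, the theorem follows at once, since $D_N$ depends only on the first $N$ coordinates. First I would normalise: replacing $f$ by $f-f(\tfrac12)$ shifts the energy $\sum_{k<N}f(|x-x_k|)$ by a constant and hence does not move its minimiser, so we may assume $f(\tfrac12)=0$; and since $f(x)=f(1-x)$ we have $f(|x-x_k|)=g(d(x,x_k))$, where $d$ is the metric on $\mathbb T=\RR/\ZZ$ and $g:=f|_{[0,1/2]}$ is, by (i)--(iii), nonnegative, strictly decreasing and strictly convex on $(0,\tfrac12]$ with $g(\tfrac12)=g'(\tfrac12)=0$. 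The energy of a configuration $C$ becomes $E_C(x)=\sum_{c\in C}g(d(x,c))$.

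For the dynamical part I would prove, by induction on $m$, that after $2^m$ steps the configuration equals the uniform grid $G_m=\{j2^{-m}:0\le j<2^m\}$, reached in an order lying in $\mathcal P_m$, and that steps $2^m,\dots,2^{m+1}-1$ place points exactly at the $2^m$ midpoints of the gaps of $G_m$. The base case is $x_1=\arg\min g(d(x,0))=\tfrac12$, unique, so $\{x_0,x_1\}=G_1$. Two facts drive the step. First, a \emph{grid lemma}: $E_{G_m}$ is even and $2^{-m}$-periodic, hence symmetric about every multiple of $2^{-(m+1)}$; moreover each summand $g(d(\cdot,c))$ is $C^2$ with strictly positive second derivative on every arc avoiding $c$ --- the only point needing care is the antipode $c+\tfrac12$, where $g'(\tfrac12)=0$ makes the one-sided second derivatives agree with $g''(\tfrac12)>0$ --- so $E_{G_m}$ is strictly convex on each closed gap, attains there its minimum $c_m$ at the midpoint, and $c_m$ is exceeded at every non-midpoint, in particular at every grid point. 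Second, the midpoint-filling subprocess is itself a greedy energy minimisation on a cyclic group: writing $y_i=x_{2^m+i}$, $a:=2^{m+1}y_0$ (odd once $y_0$ is a midpoint), and $y_i=(a+2z_i)2^{-(m+1)}\bmod 1$, one has $d(y_i,y_j)=2^{-m}\|z_i-z_j\|$, so the greedy rule restricted to midpoints is greedy energy minimisation on $\ZZ/2^m$ with the strictly convex symmetric potential $k\mapsto g(2^{-m}\|k\|)$ started at $z_0=0$; a parallel induction mirroring the recursion defining $\mathcal P_m$ --- the even residues $2\ZZ/2^m\cong\ZZ/2^{m-1}$ get filled, in a scaled $\mathcal P_{m-1}$-order, before any odd residue, the latter then forming a translate filled in a scaled, shifted $\mathcal P_{m-1}$-order --- shows the admissible orderings on $\ZZ/2^m$ are exactly $\mathcal P_m$. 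Concatenating the $\mathcal P_m$-order reaching $G_m$ (which reads $2\sigma$ after rescaling to base $2^{m+1}$) with the midpoint-order $2\sigma'\oplus a$ yields an element of $\mathcal P_{m+1}$; the grid lemma at level $m+1$ re-establishes the hypotheses, and for a given $N$ one chooses $m$ with $2^m>N$.

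The main obstacle is the step concealed above: that the greedy minimum of $E_C$ is \emph{always} attained at a midpoint of $G_m$ --- never inside an already-subdivided gap, nor at the off-centre local minimum of an un-subdivided gap whose centre is not a symmetry axis of the partial midpoint-configuration $Y=\{y_0,\dots,y_{\ell-1}\}$. I would handle it by strengthening the induction hypothesis to record the exact shape of $Y$: after $2^j$ midpoints $Y$ (identified with a subset of $\ZZ/2^m$) is the subgroup of order $2^j$, whose centres of symmetry among the \emph{un}-placed midpoints are precisely the next $2^j$ midpoints to be placed, while between consecutive powers of two the picks come in antipodal pairs $y_{2t+1}=y_{2t}+\tfrac12$. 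Consequently, for every un-subdivided gap whose centre will be filled next, $Y$ is symmetric about that centre, so $E_C=E_{G_m}+E_Y$ is strictly convex and symmetric there and is minimised at the centre, and all such central values are equal by translation-invariance along the coset of midpoints. It remains to beat these central values from the subdivided gaps and from the asymmetric un-subdivided gaps; for the former I would use quantitative strict convexity of $E_{G_m}$ together with $g\ge 0$ and $g$ decreasing (a point in a subdivided gap lies within $2^{-(m+1)}$ of a placed midpoint, hence carries a large term $g(d(\cdot,y_i))\ge g(2^{-(m+1)})$ on top of a strictly positive $E_{G_m}$-excess over $c_m$), and for the latter a pairing of the $y_i$ into antipodal pairs via $g(d(x,p))+g(d(x,p+\tfrac12))=g(t)+g(\tfrac12-t)$. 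The same bookkeeping identifies the tie set of global minima with the next step of the cyclic $\mathcal P_m$-order, which is exactly the multi-valuedness encoded by $\mathcal P_{m+1}$.

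For the combinatorial part I would use Faure's exact formula (reviewed in Section~\ref{sec:vdc}) for the extreme discrepancy $ND_N(\mathcal S_b^\sigma)$ in terms of the auxiliary functions attached to $\sigma$, and prove by induction on $m$ that $D_N(\mathcal S_{2^m}^\sigma)$ is independent of $\sigma\in\mathcal P_m$: a permutation in $\mathcal P_{m+1}$ arises from two permutations in $\mathcal P_m$ through the operations $\sigma\mapsto 2\sigma$ and a cyclic shift by an odd amount, under which Faure's auxiliary functions are unchanged, so the base-$2^{m+1}$ discrepancy coincides with the base-$2^m$ one; the base case $\mathcal P_1=\{(0,1)\}$ is the classical van der Corput sequence, whence $D_N(\mathcal S_{2^m}^\sigma)=D_N(\mathcal S_2)$ for all $N$. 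The only real work here is the doubling/shift invariance of Faure's functions --- elementary, but the reflected-comma normalisation of $S_b^\sigma$ has to be tracked carefully --- together with the standard observation that the relevant supremum is attained at grid arguments, which makes the recursion close. Assembling the two parts gives the displayed chain of equalities.
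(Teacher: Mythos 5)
Your overall architecture is the same as the paper's (first characterise the greedy dynamics by the permutations in $\mathcal{P}_m$, then show via Faure's machinery that all $\mathcal{S}_{2^m}^{\sigma}$, $\sigma\in\mathcal{P}_m$, have the discrepancy of $\mathcal{S}_2$), but there is a genuine gap exactly at the step you yourself flag as the main obstacle. The value your candidate minimisers achieve is $c_m+E_Y(c^*)$, where $c^*$ is the centre of a next-admissible gap, and $E_Y(c^*)$ is a sum of up to $2^m-1$ nonnegative terms, each of size up to $g(2^{-m})$. Your proposed bound for points $x$ in an already-subdivided gap, namely $E_{G_m}(x)+E_Y(x)\geq c_m+\delta(x)+g(d(x,y_i))$ with $g(d(x,y_i))\geq g(2^{-(m+1)})$, therefore does not beat it: since $f$ maps into $\RR$, $g(0)$ is finite and $g$ may be almost flat near $0$ (take $f(x)=(x-\tfrac12)^2$), so the single term $g(2^{-(m+1)})$ is dwarfed by the many-term sum $E_Y(c^*)$, and the grid excess $\delta(x)$ vanishes precisely at the centre of the subdivided gap where you need it. Similarly, the antipodal pairing $g(d(x,p))+g(d(x,p+\tfrac12))$ only rewrites $E_Y$ as a sum of pair potentials whose individual minimisers $p\pm\tfrac14$ differ from pair to pair, so it does not by itself rule out that an asymmetric un-subdivided gap undercuts the symmetric-gap centres. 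What is missing is the device that lets the paper avoid every such cross-gap value comparison: decompose the whole configuration $G_m\cup Y$ according to the binary digits of $N$ into shifted \emph{full} dyadic grids (Lemmas \ref{selfsimilar} and \ref{selfsimilar2}), observe that each block's energy attains its global minimum exactly on the complete set of midpoints of its own subdivision (Lemmas \ref{lem:function} and \ref{minima}), and that these minimum sets are nested, so the minimum set of the sum is exactly the smallest of them (Theorem \ref{thm:vdc} and Lemma \ref{lem:admissible}). With that exact structural argument the identification of the tie set, and hence of the decision tree with $\mathcal{P}_{m+1}$, closes without any quantitative estimate; with your estimates as stated it does not.

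A secondary inaccuracy lies in the combinatorial half: Faure's auxiliary functions are \emph{not} unchanged under $\sigma\mapsto 2\sigma$ and the odd cyclic shift (the maximum of $\psi_{2^m}^{\sigma}$ grows with $m$). What is true, and what the paper proves, is that on $[0,1/2]$ the change is independent of which $\sigma\in\mathcal{P}_m$ one starts from (Lemma \ref{lem:doubling} together with Lemma \ref{lem:SameDisc}), while the behaviour on $[1/2,1]$ is forced by a reflection that requires the swapping permutation and the $(m+1)$-inverse structure (Lemmas \ref{lem:swap}, \ref{lem:BWsymmetry}, \ref{lem:swap2}); and the equality $D_N(\mathcal{S}_{2^m}^{\sigma})=D_N(\mathcal{S}_2)$ for every $N$ is then obtained by combining this with the intrication identity $\psi_{2^m}^{\sigma_m}=F_m^{\sigma_1}$ of Lemma \ref{lem:intr2} and Faure's exact formula (Corollary \ref{cor:family}), not from invariance of the functions themselves. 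So your route here is the paper's route, but the mechanism you cite would need to be replaced by this corrected one.
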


\begin{remark}
Note that our assumptions on $f$ imply that $f$ is a strictly convex function on $[0,1]$; but not every strictly convex function necessarily satisfies $f''(x) >0$ for all $x \in (0,1)$. However, every uniformly convex function, i.e. $f$ such that $f''(x)\geq \theta >0$, satisfies our assumptions.
\end{remark}

\begin{remark}
The point sets we obtain are closely related to Leja sequences on the unit disk. Let $K$ be a nonempty compact subset of the complex plane. A sequence $(a_n)_{n \geq 0}$ of points in $K$ is a Leja sequence for $K$ if the following extremal metric property holds for $N\geq 1$,
$$  \prod_{n=0}^{N-1} | (a_N-a_n) | = \max_{z \in K} \prod_{n=0}^{N-1} | (z-a_n) |.$$
Thus, the $(N+1)$st term $a_N$ of a Leja sequence must maximise the product of distances to the $N$ previous ones; see \cite{leja1,leja2}. 
There are very few classes of compact sets for which Leja sequences are explicitly known. However, if $K$ is the unit disk then the structure of the corresponding Leja sequences are known and described in \cite[Theorem 5 and Corollary 2]{leja1}.
Importantly, the maximum principle implies that these points lie on the boundary of $K$; i.e. on the unit circle if $K$ is the unit disk.

Note that Leja sequences are defined via an explicit functional which is studied on different compact sets. In our setting, we fix the domain and vary the functionals. It is interesting that our minimising sequences have the same structure as the particular Leja sequences on the unit disk. 
\end{remark}

\begin{remark}
We believe Theorem \ref{thm:main} is a strong indicator that the dynamical version 
$$ x_{N} =  \arg\min_x \sum_{k=0}^{N-1}{f(|x-x_k|)},$$
of the static equilibrium problem in mathematical physics might give rise to interesting structures. In the one-dimensional case, it certainly connects in a very substantial way to structures in number theory. To emphasize this, we explicitly state that:
\begin{quote}
Greedy energy minimization on the one-dimensional torus automatically recovers the way we count in binary.
\end{quote}
\begin{question}
Which structures arise when we use different classes of function that are not covered by the theorem? Can we get similar structures? Or totally different dynamics?
\end{question}
\end{remark}
\begin{remark} The result has nontrivial implications for the study of uniform distribution. It starts by providing a novel definition of the van der Corput sequence; i.e. start the greedy algorithm \eqref{functional} with $\{0\}$ and always pick the smallest of the suggested minima.  The second statement in the main Theorem, i.e. Discrepancy being preserved over all possible choices, shows that potential theoretic approaches along the lines of what was proposed by Steinerberger \cite{stefan1, stefan2} might indeed have intimate ties to discrepancy. 
\end{remark}
Theorem \ref{thm:main} tells us that we can use symmetric and uniformly convex functions to construct the van der Corput sequence via greedy energy minimisation.
\begin{question}
Which functions $f$ and seeds $\{x_0, \ldots, x_{N-1}\}$ generate low discrepancy sequences via greedy energy minimisation as in \eqref{functional}? 
\end{question}
\begin{question}
Is there another family of functions which can be used to reconstruct permuted van der Corput sequences in prime bases $b>2$ via greedy energy minimisation?
\end{question}
\begin{question}
We have verified Steinerberger's conjectures for the case of one initial element. What about the case of an arbitrary set $\{x_0, x_1\}$ of two (or more) points in $[0,1]$?
\end{question}
We suspect that the classical van der Corput sequence and its permutations studied in this paper form in a way a unique (and natural) link between sequences constructed via greedy energy minimisation and sequences obtained from traditional methods in number theory.

\subsection{Outline of proof.} \label{sec:proofStrategy}
We prove Theorem \ref{thm:main} in 2 steps. In the first step (i.e. in Section \ref{sec:family2}) we calculate the discrepancy of permuted van der Corput sequences $\mathcal{S}_{2^m}^{\sigma}$, $\sigma \in \mathcal{P}_m$. We conclude in Corollary \ref{cor:family} that all such sequences have the same extreme discrepancy as the classical van der Corput sequence $\mathcal{S}_2$. Our argument uses the general machinery of Faure for the calculation of the discrepancy of permuted van der Corput sequences which we recall in Section \ref{sec:vdc} and is based on various symmetries exhibited by the permutations in $\mathcal{P}_m$.

In the second step (i.e. in Section \ref{sec:algorithm}) we relate our particular family of permuted van der Corput sequences to the greedy algorithm of Steinerberger. We show that every such sequence can be obtained from the algorithm (Lemma \ref{lem:admissible2}) and that every output of the algorithm can be described by such a sequence (Lemma \ref{lem:admissible}). Thus, we obtain a full characterisation of the possible outputs of the greedy algorithm and by the results of the first step we also know the discrepancy of any such output.

This concludes the proof of Theorem \ref{thm:main}.

\subsection{General discrepancy bound.}
Finally, we conclude with a general discrepancy bound for sequences generated from the greedy algorithm and an arbitrary initial set of values. We assume wlog that $f$ has mean 0 and require two additional assumptions on $f$. We obtain the discrepancy bound for functions $f$ that are bounded and whose Fourier coefficients $\hat{f}(k)$ defined as
$$\widehat{f}(k) = \int_0^1 f(x) \mathrm{e}^{-2 \pi i k x} \ dx $$
satisfy a condition on their decay.

\begin{theorem} \label{thm:main2}
Let $f:[0,1] \rightarrow \RR$ be as in Theorem \ref{thm:main}, assume in addition that $f$ is bounded and normalised to have mean value 0, and that
$$\hat{f}(k) \geq c |k|^{-2}$$ for a constant $c>0$ and all $k \neq 0$.
Define the sequence $X=(x_N)_{N=0}^{\infty} \subset [0,1)$ via an initial set $\{x_0, \ldots, x_{m-1}\} \subset [0,1)$ and the rule 
\begin{equation*}  x_{N} =  \arg\min_{x \in (0,1)} \sum_{k=0}^{N-1}{f(|x-x_k|)}, \end{equation*}
for $N\geq m$, where every global minimum is admissible if it happens not to be unique. Then this sequence satisfies
$$ D_N(X) \leq \frac{\tilde{c}}{N^{1/3}}, $$
in which the constant $\tilde{c} >0$ only depends on the initial set.
\end{theorem}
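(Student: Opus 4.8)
The plan is to run the classical chain \emph{bounded running energy $\Rightarrow$ small exponential sums $\Rightarrow$ small discrepancy}, using the convexity hypotheses only in the form: the one-periodic even extension $\tilde f$ of $f$ has mean zero, is bounded, and has strictly positive Fourier coefficients bounded below by $c|j|^{-2}$. First I would record the mean-value fact. Since $f(x)=f(1-x)$, one has $f(|x-y|)=\tilde f(x-y)$ for $x,y\in[0,1)$, so
\[ F_N(x):=\sum_{k=0}^{N-1}f(|x-x_k|)=\sum_{k=0}^{N-1}\tilde f(x-x_k) \]
has $\int_0^1 F_N(x)\,dx=N\widehat f(0)=0$. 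As $\tilde f$ is continuous on the torus and $F_N(0)=F_N(1)$, the infimum of $F_N$ over $(0,1)$ equals its infimum over $[0,1]$, hence the greedy point satisfies $F_N(x_N)\le 0$ for every $N\ge m$.

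Next I would introduce the running pair energy $E_N:=\sum_{0\le i<j\le N-1}f(|x_i-x_j|)$ and exploit the telescoping identity $E_{N+1}-E_N=\sum_{k=0}^{N-1}\tilde f(x_N-x_k)=F_N(x_N)\le 0$, so that $E_N\le E_m=:C_0$ for all $N\ge m$, with $C_0$ finite and depending only on $f$ and the initial set. Writing $S_N(j):=\sum_{k=0}^{N-1}e^{2\pi i j x_k}$ and expanding each term of the finite double sum in its Fourier series — legitimate by pointwise convergence (Dirichlet--Jordan test, since $f$ is continuous and piecewise monotone, hence of bounded variation) together with nonnegativity of all terms $\widehat f(j)|S_N(j)|^2$ — one gets
\[ 2E_N+N f(0)=\sum_{i,j=0}^{N-1}\tilde f(x_i-x_j)=\sum_{j\ne 0}\widehat f(j)\,|S_N(j)|^2 . \]
Combining this with $E_N\le C_0$ and $\widehat f(j)\ge c|j|^{-2}>0$ yields, for all large $N$,
\[ \sum_{j\ne 0}\frac{|S_N(j)|^2}{j^2}\le\frac1c\sum_{j\ne 0}\widehat f(j)|S_N(j)|^2\le\frac{2C_0+Nf(0)}{c}\le A\,N \]
for a constant $A$ depending only on $f$ and the initial set.

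Finally I would invoke the Erd\H{o}s--Tur\'an inequality $D_N(X)\le c_1 K^{-1}+c_1 N^{-1}\sum_{k=1}^{K}k^{-1}|S_N(k)|$ (valid for any $K\ge 1$ with an absolute constant $c_1$) and estimate the sum by Cauchy--Schwarz against the weight just bounded:
\[ \sum_{k=1}^{K}\frac{|S_N(k)|}{k}\le\Bigl(\sum_{k=1}^{K}1\Bigr)^{1/2}\Bigl(\sum_{k=1}^{K}\frac{|S_N(k)|^2}{k^2}\Bigr)^{1/2}\le\sqrt{K}\cdot\sqrt{A\,N}. \]
Hence $D_N(X)\le c_1 K^{-1}+c_1\sqrt{A}\,\sqrt{K/N}$, and the choice $K=\lceil N^{1/3}\rceil$ gives $D_N(X)\le\tilde c\,N^{-1/3}$, enlarging $\tilde c$ to absorb the finitely many $N$ for which the passage to ``$\le AN$'' or the trivial bound $D_N\le 1$ is used instead.

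I do not expect a genuine obstacle; the estimates are routine. The only care-points are the pointwise convergence used to interchange the sum over $j$ with the finite sum over pairs $(i,j)$, and the open-versus-closed-interval subtlety in the $\arg\min$, both handled above by continuity of $\tilde f$. What is essential — and this is precisely where the two extra hypotheses of the theorem are used — is that $f$ be bounded (so the self-energy term $Nf(0)$ is linear and harmless, replacing Steinerberger's gap condition) and that the Fourier coefficients be bounded \emph{below}, not merely decay. The exponent $1/3$ is exactly what a $|k|^{-2}$ lower bound can give, since Cauchy--Schwarz can then only be run against $\sum|S_N(k)|^2/k^2$; had one instead $\widehat f(k)\ge c\,k^{-1}$ (the decay rate of $1-\log(2\sin\pi x)$, which is however unbounded and thus outside the present theorem), the same scheme applied to $\sum|S_N(k)|^2/k\le A'N$ and optimised over $K$ would return $D_N\le c\,N^{-1/2}\sqrt{\log N}$, matching Steinerberger's bound and accounting for the slight loss acknowledged in the introduction.
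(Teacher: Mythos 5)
Your proposal is correct, and its core is exactly the paper's argument: the mean-zero normalisation together with the greedy rule forces each newly added point to contribute nonpositive energy, so the full pair sum $\sum_{i,j}f(|x_i-x_j|)$ is at most $Nf(0)$ plus a constant, and expanding this sum in Fourier series with $\hat f(k)\ge c|k|^{-2}$ gives $\sum_{k\neq 0}|S_N(k)|^2/k^2\le A N$. The only divergence is the last step: the paper feeds this weighted $L^2$ bound directly into LeVeque's inequality, which produces the exponent $1/3$ in one line, whereas you re-derive the same implication via the Erd\H{o}s--Tur\'an inequality, Cauchy--Schwarz, and the optimised cutoff $K\sim N^{1/3}$ --- a slightly longer but equally valid route, and one that makes transparent (as your closing remark about $\hat f(k)\gtrsim k^{-1}$ shows) how the decay rate of the Fourier lower bound dictates the final exponent. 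A small point in your favour: bounding the running energy by $E_N\le E_m$ handles an arbitrary initial seed a bit more carefully than the paper's lemma, whose stated bound $nf(0)$ tacitly ignores the seed's pair energy; this is precisely why $\tilde c$ must depend on the initial set, and your bookkeeping makes that explicit.
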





\section{Properties of the van der Corput sequence}
\label{sec:vdc}

Faure \cite{faure1} developed a comprehensive machinery to calculate the discrepancy of permuted van der Corput sequences as defined in Section \ref{sec:Defvdc}.
Our proof of Theorem \ref{thm:main} makes use of this machinery; i.e. we use Faure's method to show that all permutations $\sigma \in \mathcal{P}_m$ generate permuted van der Corput sequences with the same discrepancy as the classical van der Corput sequence in base 2.
To facilitate our proof we recall the main properties of van der Corput sequences as well as Faure's method in this section. Our main result of this section is Lemma \ref{lem:intr2} in which we relate the classical van der Corput sequence to our family of permutations $\mathcal{P}_m$, $m\geq 1$.


Let $[\alpha,\beta) \subseteq [0,1)$.
For $N\geq 1$ and $X=(x_n)_{n\geq 0}$, let 
$$A([\alpha,\beta),N,X):\# \{ 0 \leq i \leq N-1: \alpha \leq x_i < \beta\}$$ 
denote the number of indices $0 \leq i\leq N-1$ for which $\{x_i\} \in [\alpha,\beta)$.
An infinite sequence $X$ is \emph{uniformly distributed modulo 1} (u.d. mod 1) if
\begin{equation} \label{udt1}
\underset{N\rightarrow \infty} \lim \frac{A([\alpha,\beta), N, X) }{N} = \beta - \alpha, 
\end{equation}
for every subinterval $[\alpha,\beta)$ of $[0,1)$. The uniform distribution property \eqref{udt1} of an infinite sequence is usually quantified with one of several different notions of discrepancy. 
We put 
$$R([\alpha,\beta), N, X)=A([\alpha,\beta),N,X)- (\beta-\alpha) N,$$
and obtain for the extreme discrepancy, $D_N(X)$, of the first $N$ points of $X$
$$N D_N(X)=\underset{[\alpha,\beta) \subseteq \mathrm{I}} \sup | R([\alpha,\beta), N, X) |.$$
Note that Faure does not divide the discrepancy by $N$; i.e. his formulas for the discrepancy (e.g. in \cite{faure1}) give $N D_N(X)$ in our notation.



\subsection{Self similarity of van der Corput sequences. }
The classical van der Corput sequence and its permuted generalisations are self similar as shown in the following two lemmas. Let 
$$
M_N := \{ S_2(i) : 0 \leq i \leq N-1 \}, 
$$
denote the set of the first $N$ points of the van der Corput sequence and let
$$ M_{N_1,N_2} := M_{N_2} \setminus M_{N_1} $$
for $N_2>N_1$ denote a segment of the van der Corput sequence of length $N_2-N_1$.

\begin{lemma} \label{selfsimilar}
Let $m_k, m_{k-1}, \ldots, m_1$ be a decreasing sequence of non-zero integers. Let $N_1=\sum_{j=1}^{k} 2^{m_j}$ and let $N_2 = N_1 + 2^{m_0}$ for $m_0\leq m_1$. Then $N_2-N_1 = 2^{m_0}$ and 
$$ M_{N_1, N_2} = M_{2^{m_0}} + \sum_{j=1}^k \frac{1}{2^{m_j + 1}}. $$
\end{lemma}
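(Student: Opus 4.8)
The plan is to establish the identity by a direct manipulation of binary digit expansions, using the injectivity of the radical inverse function $S_2$ to move freely between index sets and point sets.

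First I would dispose of the bookkeeping. Since $N_2 = N_1 + 2^{m_0}$ we immediately get $N_2 - N_1 = 2^{m_0}$. Distinct non-negative integers have distinct terminating binary expansions, so $S_2$ is injective on $\NN_0$; hence $M_N$ consists of exactly $N$ points for every $N$, and $M_{N_1,N_2} = M_{N_2}\setminus M_{N_1} = \{\, S_2(i) : N_1 \le i \le N_2 - 1 \,\}$. It therefore suffices to prove
\[ \{\, S_2(N_1 + t) : 0 \le t \le 2^{m_0}-1 \,\} \;=\; M_{2^{m_0}} + c, \qquad\text{where } c := \sum_{j=1}^{k} \frac{1}{2^{m_j+1}} . \]

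Next I would read off the binary expansion of $N_1 + t$. By construction $N_1 = \sum_{j=1}^{k} 2^{m_j}$ with $1 \le m_1 < m_2 < \dots < m_k$, so the binary representation of $N_1$ has a $1$ exactly in positions $m_1,\dots,m_k$ and a $0$ in every other position; in particular positions $0,1,\dots,m_1-1$ all vanish. For $0 \le t < 2^{m_0}$ the digits of $t$ are supported on positions $0,\dots,m_0-1$, and since $m_0 \le m_1$ these positions lie inside the zero-block of $N_1$ and are disjoint from $\{m_1,\dots,m_k\}$. Consequently the addition $N_1 + t$ is carry-free, and
\[ a_j(N_1+t) = \begin{cases} a_j(t), & 0 \le j < m_0,\\ 1, & j \in \{m_1,\dots,m_k\},\\ 0, & \text{otherwise.}\end{cases} \]
Checking this carry-free property is the only point requiring any care, and it is exactly what the hypotheses ($m_0 \le m_1$ together with $m_k > \dots > m_1 \ge 1$) guarantee.

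Finally I would apply $S_2$ term by term. From the digit description,
\[ S_2(N_1+t) \;=\; \sum_{j=0}^{m_0-1} \frac{a_j(t)}{2^{j+1}} \;+\; \sum_{j=1}^{k} \frac{1}{2^{m_j+1}} \;=\; S_2(t) + c, \]
where the last equality uses $S_2(t) = \sum_{j=0}^{m_0-1} a_j(t)\, 2^{-(j+1)}$, valid because $t < 2^{m_0}$. As $t$ ranges over $\{0,1,\dots,2^{m_0}-1\}$ the values $S_2(t)$ range over $M_{2^{m_0}}$, so the displayed formula exhibits each point of $M_{N_1,N_2}$ as a point of $M_{2^{m_0}}$ translated by $c$, giving $M_{N_1,N_2} = M_{2^{m_0}} + c$. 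I do not anticipate a genuine obstacle here: the entire content is the carry-free addition of a short low-order block to $N_1$, i.e.\ the base-$2$ self-similarity of the van der Corput sequence made explicit, and the hypothesis $m_0\le m_1$ is precisely what keeps that addition carry-free.
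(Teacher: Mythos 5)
Your argument is correct and is essentially the paper's own proof: the paper likewise writes each $n$ with $N_1\le n\le N_2-1$ via its binary digits, splitting off the fixed high-order bits in positions $m_1,\dots,m_k$ (contributing the constant shift $\sum_{j=1}^k 2^{-(m_j+1)}$) while the low-order digits sweep through $\{0,\dots,2^{m_0}-1\}$ and produce $M_{2^{m_0}}$. Your version merely makes explicit the carry-free addition and the injectivity of $S_2$, which the paper leaves implicit.
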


\begin{proof}
We have that
\begin{align*} 
M_{N_1, N_2} &= \{ S_2(N_1), \dots, S_2(N_2 -1) \} \\ 
&= \left \{ \sum_{i=0}^{m_0} \frac{a_i(n)}{2^{i+1}} + \sum_{j=1}^k \frac{1}{2^{m_j +1}} : N_1 \leq n \leq N_2 -1 \right\} \\
&= \left \{ \sum_{i=0}^{m_0} \frac{a_i(n)}{2^{i+1}} : 0 \leq n \leq 2^{m_0}-1 \right \} + \sum_{j=1}^k \frac{1}{2^{m_j +1}} \\
&=M_{2^{m_0}} + \sum_{j=1}^k \frac{1}{2^{m_j + 1}}.
\end{align*}
\end{proof}

We can repeat the calculation from the proof of Lemma \ref{selfsimilar} to obtain the following general version for
$$M_N^{\sigma} := \{ S_b^{\sigma}(i) : 0 \leq i \leq N-1,\  \sigma \in \mathfrak{S}_b \}, \ \ \ \text{ and } \ \ \ M_{N_1,N_2}^{\sigma} := M_{N_2}^{\sigma} \setminus M_{N_1}^{\sigma}, $$
for $N_2>N_1$.
\begin{lemma} \label{selfsimilar2}
Let $m_k, m_{k-1}, \ldots, m_1$ be a decreasing sequence of non-zero integers, $b\in \NN$, $b\geq 2$ and $\sigma \in \Sy_b$ with $\sigma(0)=0$. Let $N_1=\sum_{j=1}^{k} a_{m_j} b^{m_j}$ and let $N_2 = N_1 + b^{m_0}$ for $m_0\leq m_1$. Then $N_2-N_1 = b^{m_0}$ and 
$$ M^{\sigma}_{N_1, N_2} = M_{b^{m_0}} + \sum_{j=1}^k \frac{\sigma(a_{m_j}(N_1))}{b^{m_j + 1}}. $$
\end{lemma}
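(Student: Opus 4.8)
The plan is to mimic almost verbatim the computation in the proof of Lemma~\ref{selfsimilar}, generalizing each step from base $2$ to an arbitrary base $b \geq 2$ and inserting the permutation $\sigma$ at the appropriate places. First I would write out the segment explicitly as
$$ M^{\sigma}_{N_1,N_2} = \{ S_b^{\sigma}(N_1), \ldots, S_b^{\sigma}(N_2-1) \}. $$
The key arithmetic observation is that, since $m_k > m_{k-1} > \cdots > m_1 > m_0$ (a strictly decreasing sequence of non-zero integers with $m_0 \leq m_1$, so in fact $m_0 < m_1 < \cdots < m_k$ once we note $m_0 \leq m_1$ forces $m_0$ strictly below $m_1$ unless equal — more care needed here, see below), every integer $n$ with $N_1 \leq n \leq N_2-1$ has the same base-$b$ digits as $N_1$ in all positions $\geq m_1$, while in positions $0, 1, \ldots, m_0-1$ (and possibly $m_0$) the digits range over all of $\{0,\ldots,b-1\}^{m_0}$ exactly as $n - N_1$ runs from $0$ to $b^{m_0}-1$. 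Concretely, $a_j(n) = a_j(N_1)$ for $j \geq m_1$, and $a_j(n) = a_j(n - N_1)$ for $j < m_1$; the digits of $N_1$ in positions $m_0, \ldots, m_1-1$ are zero by construction, so adding $n - N_1 < b^{m_0}$ causes no carry into position $m_1$.

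Applying $S_b^{\sigma}$ and using $\sigma(0) = 0$ to kill the contributions of the zero digits of $N_1$ in the intermediate positions, I would get
$$ S_b^{\sigma}(n) = \sum_{i=0}^{m_0-1} \frac{\sigma(a_i(n-N_1))}{b^{i+1}} + \sum_{j : m_j \geq m_1} \frac{\sigma(a_{m_j}(N_1))}{b^{m_j+1}}, $$
and since the second sum is independent of $n$, the set $M^{\sigma}_{N_1,N_2}$ is a translate of $\{ \sum_{i=0}^{m_0-1} \sigma(a_i(\ell))/b^{i+1} : 0 \leq \ell \leq b^{m_0}-1 \} = M^{\sigma}_{b^{m_0}}$ by the constant $\sum_{j=1}^{k} \sigma(a_{m_j}(N_1))/b^{m_j+1}$. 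Note that the statement as written has $M_{b^{m_0}}$ (not $M^{\sigma}_{b^{m_0}}$) on the right-hand side; I would need to check whether this is a typo or whether in the intended application $\sigma$ restricted to the low-order digits acts trivially — most likely the right-hand side should read $M^{\sigma}_{b^{m_0}}$, matching the base-$2$ case where $\sigma = \mathrm{id}$ on $\{0,1\}$ is implicit.

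The main obstacle, such as it is, is bookkeeping rather than mathematics: one must handle the boundary position $m_0$ carefully (the hypothesis $m_0 \leq m_1$ allows equality, in which case the digit of $N_1$ in position $m_0$ need not vanish, and one must argue the carry structure still works, or else interpret the hypothesis as $m_0 < m_1$), and one must be sure that the coefficients $a_{m_j}$ appearing in $N_1 = \sum_j a_{m_j} b^{m_j}$ are genuinely the base-$b$ digits of $N_1$, i.e. that $1 \leq a_{m_j} \leq b-1$ and the $m_j$ are distinct — this is exactly what makes $a_{m_j}(N_1) = a_{m_j}$ and justifies writing $\sigma(a_{m_j}(N_1))$. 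Since the excerpt explicitly says ``We can repeat the calculation from the proof of Lemma~\ref{selfsimilar}'', I expect the proof to be the four-line \texttt{align*} block from that proof with $2 \mapsto b$, an extra $\sigma$ inserted, and a one-line invocation of $\sigma(0)=0$; no new idea is required.
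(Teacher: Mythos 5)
Your proposal coincides with the paper's treatment: the paper offers no separate proof of Lemma~\ref{selfsimilar2}, saying only that one repeats the calculation from Lemma~\ref{selfsimilar} with $2$ replaced by $b$ and $\sigma$ inserted (using $\sigma(0)=0$ to drop the zero digits of $N_1$), which is exactly what you do, and your handling of the digit/carry bookkeeping is correct. Your worry that the right-hand side should read $M^{\sigma}_{b^{m_0}}$ is moot: since $\sigma$ permutes the digit set $\{0,\dots,b-1\}$ in each position, the first $b^{m_0}$ points of $\mathcal{S}_b^{\sigma}$ form the same set $\{\ell\, b^{-m_0} : 0 \leq \ell < b^{m_0}\}$ for every $\sigma$, so $M^{\sigma}_{b^{m_0}}$ and $M_{b^{m_0}}$ agree as sets and the statement is fine as written.
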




\subsection{Discrepancy of permuted van der Corput sequences}
In the following we introduce Faure's system of basic functions which can be used to calculate the discrepancy of generalised van der Corput sequences. The theory of Faure was applied to the study of various point sets and sequences over the last 40 years and is very well explained and illustrated in the recent survey \cite{survey1}.
The exact formulas for the discrepancy of generalised van der Corput sequences are based on a set of elementary functions which are defined for any permutation $\sigma \in \Sy_b$. Let 
$$\mathcal{X}_b^{\sigma}:= \left( \frac{\sigma(0)}{b}, \frac{\sigma(1)}{b},\ldots ,\frac{\sigma(b-1)}{b} \right).$$ 
For $h \in \{0,1,\ldots, b-1\}$ and $x \in \left[\frac{k-1}{b},\frac{k}{b}\right[$, in which $1\leq k \leq b$ is an integer, we define $$\varphi_{b,h}^{\sigma}(x):=\left\{
\begin{array}{ll}
A([0,h/b[,k,\mathcal{X}_b^{\sigma})-h x & \mbox{ if } 0 \le h \le \sigma(k-1),\\[5pt]
(b-h)x- A([h/b,1[,k,\mathcal{X}_b^{\sigma}) & \mbox{ if }\sigma(k-1)< h < b.
\end{array}\right.$$
The functions $\varphi_{b,h}^{\sigma}$ are continuous on $[0,1[$, piecewise affine and are extended to the real numbers by periodicity.
Moreover, we define
\begin{align*}
\psi_b^{\sigma,+} = \max_{0 \leq h < b} \varphi_{b,h}^{\sigma}, \ \ \ \ \psi_b^{\sigma,-} = \max_{0 \leq h < b} (- \varphi_{b,h}^{\sigma}), \ \ \ \
\psi_b^{\sigma}=\psi_b^{\sigma,+}+\psi_b^{\sigma,-}.
\end{align*}
For an infinite, one-dimensional sequence $X$ we set
$$ N D_N^+ = \sup_{0 \leq \alpha \leq1} R( [0,\alpha], N, X ), \ \ \ \text{ and } \ \ \  N D_N^- = \sup_{0 \leq \alpha \leq1} (- R( [0,\alpha], N, X )).$$
Then we get from \cite[Th\'{e}or\`{e}me 1]{faure1} for $N\geq 1$ that
\begin{align*} \label{formula:DN}
N D_N^+(\mathcal{S}_{b}^{\sigma}) = \sum_{j=1}^{\infty} \psi_b^{\sigma,+} (N/b^j), \ \   & N D_N^-(\mathcal{S}_{b}^{\sigma}) = \sum_{j=1}^{\infty} \psi_b^{\sigma,-} (N/b^j), \ \
N D_N(\mathcal{S}_{b}^{\sigma}) = \sum_{j=1}^{\infty} \psi_b^{\sigma} (N/b^j).
\end{align*}
Note that the infinite series in these formulas can indeed be computed exactly; for details we refer to \cite[Section~3.3.6, Corollaire 1]{faure1}. 
The exact formulas can be used for the asymptotic analysis of the discrepancy of generalised van der Corput sequences. By \cite[Th\'{e}or\`{e}me 2]{faure1}, 
$$
s(\mathcal{S}_b^{\sigma}) =\limsup_{N \rightarrow \infty} \frac{N D_N(\mathcal{S}_b^{\sigma}) }{\log N} = \frac{\alpha_b^{\sigma}}{ \log b}
$$
with
\begin{equation} \label{asym}
\alpha_b^{\sigma} = \underset{m\geq 1} \inf \ \underset{x \in \RR}  \sup \left( \frac{1}{m} \sum_{j=1}^m \psi_b^{\sigma} (x/b^j) \right).
\end{equation}
Furthermore, we introduce the function
\begin{equation} \label{Fn}
F_m^{\sigma}(x)= \sum_{j=0}^{m-1} \psi_b^{\sigma} (x b^j),
\end{equation}
and rewrite \eqref{asym} as
$$ \alpha_b^{\sigma} = \inf_{m\geq 1}  \left( \max_{x \in [0,1]} F_m^{\sigma}(x)/m  \right). $$
Note that the local maxima of $\psi_b^{\sigma}$ have arguments of the form $x=k/b$ for $k\in \NN$ with $0\leq k \leq b-1$. As shown in \cite[Lemme 4.2.2]{faure1}, the sequence $(\max_{x \in [0,1]} F_m^{\sigma}(x)/m)_{m\geq 1}$ is decreasing. In particular,
\begin{equation} \label{asymSeq}
\alpha_b^{\sigma} \leq \ldots \leq \ \max_{x \in [0,1]} F_2^{\sigma}(x)/2 \ \leq \  \max_{x \in [0,1]} F_1^{\sigma}(x) =  \max_{x \in \RR} \ \psi_b^{\sigma},
\end{equation}
with $\alpha_b^{\sigma} = \lim_{m \rightarrow \infty} \max_{x \in [0,1]} F_m^{\sigma}(x)/m$.

In  \cite[Th\'{e}or\`{e}me 6]{faure1} the asymptotic constants for van der Corput sequences that are generated from the identity permutation in base $b$ are calculated:
$$ s(\mathcal{S}_b) = \left\{
\begin{array}{ll}
\frac{b-1}{4 \log b} & \mbox{ if $b$ is odd,}\\
\frac{b^2}{4 (b+1) \log b}& \mbox{ if $b$ is even. }
\end{array} \right.$$
Consequently, the only sequence generated from an identity permutation that improves the result of the classical sequence in base $2$ is $\mathcal{S}_3$; for more information see the recent survey \cite{pausinger1}.

\subsection{Symmetries}

The Symmetry as well as the Swapping Lemma facilitate the analysis of structurally similar permutations and state that a shift, reflection or reversal of a permutation does not change the discrepancy of the generated sequence. This is not surprising since we interpret $[0,1]$ as a circle.

\begin{lemma}[Symmetry Lemma]  \label{lem:SameDisc}
Let $0<a<b$ be an integer, let $\sigma \in \Sy_b$ and let $\sigma', \sigma'' \in \Sy_b$ be defined as
$$ \sigma'(x) = \sigma(x) + a \pmod{b}, \ \ \ \text{ and } \ \ \  \sigma''(x) = - \sigma(x) \pmod{b}$$
for $0\leq x \leq b-1$. Then it holds for all $N$ that
\begin{equation*} 
D_N(\mathcal{S}_b^{\sigma}) = D_N(\mathcal{S}_b^{\sigma'}) \ \ \ \text{ and } \ \ \ D_N(\mathcal{S}_b^{\sigma}) = D_N(\mathcal{S}_b^{\sigma''}).
\end{equation*}
\end{lemma}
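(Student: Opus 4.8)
The plan is to push everything through Faure's exact formula $N D_N(\mathcal{S}_b^{\sigma})=\sum_{j\ge 1}\psi_b^{\sigma}(N/b^j)$ recalled above: since $D_N$ is read off from $\psi_b^{\sigma}$ alone, it suffices to prove the two \emph{function} identities $\psi_b^{\sigma'}=\psi_b^{\sigma}$ and $\psi_b^{\sigma''}=\psi_b^{\sigma}$ on all of $\RR$ (one-periodicity is automatic from the construction). The starting observation is that, as an ordered $b$-tuple of points, $\mathcal{X}_b^{\sigma'}$ is obtained from $\mathcal{X}_b^{\sigma}$ by rotating every coordinate by $a/b$ on the torus, and $\mathcal{X}_b^{\sigma''}$ by the reflection $t\mapsto -t \bmod 1$; in particular, for every $k$ the first $k$ entries transform the same way. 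Throughout, recall that $\psi_b^{\tau}(x)=\max_{0\le h<b}\varphi_{b,h}^{\tau}(x)-\min_{0\le h<b}\varphi_{b,h}^{\tau}(x)$, i.e. $\psi_b^{\tau}(x)$ is exactly the \emph{range} of the finite set $\{\varphi_{b,h}^{\tau}(x):0\le h<b\}$, a quantity invariant under adding a constant to all elements and under $t\mapsto -t$.

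For the shift, fix $x$ in a cell $[(k-1)/b,k/b)$ and set $c_h=c_h(k)=A([0,h/b),k,\mathcal{X}_b^{\sigma})=\#\{i<k:\sigma(i)<h\}$. Rewriting the definition of $\varphi_{b,h}^{\sigma}$ and using $A([h/b,1),k,\mathcal{X}_b^{\sigma})=k-c_h$, one gets the uniform formula $\varphi_{b,h}^{\sigma}(x)=(c_h-hx)$ if $h\le\sigma(k-1)$, and $\varphi_{b,h}^{\sigma}(x)=(c_h-hx)+(bx-k)$ if $h>\sigma(k-1)$. Now $A([0,h/b),k,\mathcal{X}_b^{\sigma'})$ counts the first $k$ points of $\mathcal{X}_b^{\sigma}$ in the length-$h/b$ arc with left endpoint $(b-a)/b$; splitting according to whether this arc wraps past $1$ (equivalently $h>a$ or $h\le a$) expresses it through the $c_g$. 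Substituting and collecting the $(bx-k)$-corrections coming from the two-piece definition of $\varphi$ — here one further distinguishes the two scenarios $\sigma(k-1)<b-a$ and $\sigma(k-1)\ge b-a$ — one checks that there is a constant $C(x)$, independent of $h$, with
$$\{\,\varphi_{b,h}^{\sigma'}(x):0\le h<b\,\}=\{\,\varphi_{b,h}^{\sigma}(x):0\le h<b\,\}+C(x).$$
Hence $\psi_b^{\sigma'}(x)=\psi_b^{\sigma}(x)$, giving the first identity.

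For the reflection $\sigma''$ the same scheme applies, now with $A([0,h/b),k,\mathcal{X}_b^{\sigma''})$ counting the first $k$ points of $\mathcal{X}_b^{\sigma}$ in the reflected arc $-[0,h/b)\bmod 1$; this evaluates to $c_1(k)+k-c_{b-h+1}(k)$ for $h\ge 1$. Running the analogous bookkeeping (two scenarios, according to whether $\sigma(k-1)=0$) yields an $h$-independent constant $C'(x)$ with
$$\{\,\varphi_{b,h}^{\sigma''}(x):0\le h<b\,\}=-\{\,\varphi_{b,h}^{\sigma}(x):0\le h<b\,\}+C'(x),$$
a reflection of the original value set about $0$ followed by a common shift; since the range of a finite set is invariant under $t\mapsto -t$, $\psi_b^{\sigma''}=\psi_b^{\sigma}$, which is the second identity. (Equivalently: reflecting the circle interchanges $\psi_b^{\sigma,+}$ and $\psi_b^{\sigma,-}$, and $\psi_b^{\sigma}=\psi_b^{\sigma,+}+\psi_b^{\sigma,-}$ is symmetric in them — the quantitative form of the evident fact that reversing orientation preserves the two-sided discrepancy.)

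The only genuine work lies in the two middle paragraphs: because $\varphi_{b,h}^{\sigma}$ is assembled from two pieces (the interval $[0,h/b)$ measured from the left versus $[h/b,1)$ measured from the right) and because the rotated or reflected copy of $[0,h/b)$ may wrap around $1$, the correspondence between the two families of elementary functions is \emph{not} simply $h\mapsto h\pm a$; across a handful of sub-cases one must verify that all the $(bx-k)$-adjustments reorganize into a single constant $C(x)$ (resp. $C'(x)$). This is elementary but is exactly where care is needed. I expect no difficulty in the remaining steps: the passage from the $\psi$-identities to the claimed equalities of $D_N$ for all $N$ is immediate from Faure's formula, with no convergence issues.
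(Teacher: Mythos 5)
Your argument is correct, but be aware that the paper does not prove this lemma at all: it delegates the shift statement to Th\'eor\`eme 4.4 of \cite{cf93} and the reflection statement to Lemma~2.1 of \cite{patop18}, so what you have written is a self-contained reconstruction in the spirit of those references rather than a parallel to a proof in the text. Your reduction is the right one: by Faure's exact formula $ND_N(\mathcal{S}_b^{\sigma})=\sum_{j\geq 1}\psi_b^{\sigma}(N/b^j)$ it suffices to prove $\psi_b^{\sigma'}=\psi_b^{\sigma}$ and $\psi_b^{\sigma''}=\psi_b^{\sigma}$ pointwise, and since $\varphi_{b,0}^{\tau}\equiv 0$ the function $\psi_b^{\tau}(x)$ is indeed the range of the value set $\{\varphi_{b,h}^{\tau}(x):0\leq h<b\}$. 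Your two key claims -- that this value set is translated by an $h$-independent constant under $\sigma\mapsto\sigma+a$ and negated-plus-translated under $\sigma\mapsto-\sigma$ -- are true, and the case bookkeeping you defer does close; it can be avoided entirely by setting $\varphi_{b,b}^{\tau}:=0=\varphi_{b,0}^{\tau}$ and looking at the increments $\varphi_{b,h+1}^{\tau}(x)-\varphi_{b,h}^{\tau}(x)=\mathbbm{1}[h\in\tau(\{0,\dots,k-1\})]-x+(bx-k)\,\mathbbm{1}[h=\tau(k-1)]$ for $x$ in the $k$-th cell: replacing $\sigma$ by $\sigma'$ shifts this increment sequence cyclically by $a$, replacing it by $\sigma''$ reverses it cyclically, and since the increments sum to zero over a full period the partial-sum value sets are respectively a common translate and a negated translate of the original, with no sub-cases. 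One small inaccuracy: your parenthetical claim that reflection interchanges $\psi_b^{\sigma,+}$ and $\psi_b^{\sigma,-}$ is not literally true for $\sigma''$ (for $b=3$, $\sigma=\id$, $x\in(1/3,2/3)$ one has $\psi_b^{\sigma,-}(x)=0$ while $\psi_b^{\sigma'',+}(x)=1-x>0$); that exact interchange is the paper's Swapping Lemma for the different operation $\sigma\circ\mu_b$, and for $\sigma''$ only the sum $\psi_b^{\sigma''}=\psi_b^{\sigma}$ survives -- which is precisely what your main (range-invariance) argument delivers, so the aside should simply be dropped.
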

The first part of the lemma was noted in \cite[Th\'eor\`eme 4.4]{cf93}; the second assertion was discussed in \cite[Lemma~2.1]{patop18}.

\begin{lemma}[Swapping Lemma] \label{lem:swap}
Let $\sigma, \mu_b \in \Sy_{b}$ with $\mu_b(k)=b-k-1$ for $0 \leq k \leq b-1$ being the swapping permutation, then we have that
\begin{align*}
\psi_{b}^{\sigma \circ \mu_b, +} &= \psi_b^{\sigma,-} \\
\psi_{b}^{\sigma \circ \mu_b, -} &= \psi_b^{\sigma,+}.
\end{align*}
\end{lemma}
This was shown in \cite[Lemme 4.4.1]{faure1}.

\subsection{Intrication}

Faure defined an operation \cite[Section~3.4.3]{faure1} which takes two arbitrary permutations $\sigma, \tau$ in bases $b$ and $c$ and outputs a new permutation, $\sigma \cdot \tau$ in base $b\cdot c$. The motivation for this definition comes from the following property which was first noted in \cite[Proposition 3.4.3]{faure1}.

\begin{lemma}[Intrication] \label{intr}
For $\sigma \in \Sy_b$ and $\tau \in \Sy_c$ define $\sigma \cdot \tau \in \Sy_{bc}$ as
$$ \sigma \cdot \tau (k'' b+ k' ) = c \, \sigma(k') + \tau(k''),$$
for $0\leq k' < b$ and $0 \leq k'' < c$. Then,
\begin{equation*} 
\psi_{bc}^{\sigma \cdot \tau}(x) = \psi_{b}^{\sigma}(c x) + \psi_{c}^{\tau}(x),
\end{equation*}
such that 
$$\max_{x \in \RR} \ \psi_{bc}^{\sigma \cdot \tau}(x) \leq \max_{x \in \RR} \ \psi_{b}^{\sigma}(x) + \max_{x \in \RR} \ \psi_{c}^{\tau}(x).$$
\end{lemma}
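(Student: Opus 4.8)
The plan is to prove the pointwise identity $\psi_{bc}^{\sigma\cdot\tau}(x) = \psi_b^{\sigma}(cx) + \psi_c^{\tau}(x)$ directly from the definition of Faure's basic functions $\varphi_{b,h}^{\sigma}$, and then read off the claimed inequality on maxima. The starting point is the nested product structure of the finite point set $\mathcal{X}_{bc}^{\sigma\cdot\tau}$: writing an index $n\in\{0,\dots,bc-1\}$ as $n=k''b+k'$ with $0\le k'<b$, $0\le k''<c$, the $n$-th point is
$$\frac{(\sigma\cdot\tau)(n)}{bc}=\frac{c\,\sigma(k')+\tau(k'')}{bc}=\frac{\sigma(k')}{b}+\frac{\tau(k'')}{bc},$$
so it lies in the level-$b$ cell $[\sigma(k')/b,(\sigma(k')+1)/b)$ at the relative position $\tau(k'')/c$ inside that cell, with $k'$ the slow digit and $k''$ the fast digit. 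Correspondingly, writing a threshold index $h\in\{0,\dots,bc-1\}$ as $h=h_1c+h_2$ with $0\le h_1<b$, $0\le h_2<c$, one has $h/(bc)=h_1/b+h_2/(bc)$, and a point $\sigma(k')/b+\tau(k'')/(bc)$ lies in $[0,h/(bc))$ if and only if $\sigma(k')<h_1$, or $\sigma(k')=h_1$ and $\tau(k'')<h_2$ (the fine part never overflows a level-$b$ cell since $\tau(k'')/(bc)<1/b$).

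I would first convert this into a splitting of the counting functions $A([0,h/(bc)),k,\mathcal{X}_{bc}^{\sigma\cdot\tau})$: writing $k-1=qb+r$ with $0\le r<b$ and $0\le q<c$, the first $k$ points of $\mathcal{X}_{bc}^{\sigma\cdot\tau}$ consist of $q$ complete level-$b$ blocks together with the first $r+1$ points of one further block, so the count against $[0,h/(bc))$ decomposes into a "coarse" contribution controlled by $\sigma$ in base $b$ and a "fine" contribution controlled by $\tau$ in base $c$. Matching this against the two branches in the definition of $\varphi_{bc,h}^{\sigma\cdot\tau}$, and checking that the dichotomy $h\le(\sigma\cdot\tau)(k-1)$ versus $h>(\sigma\cdot\tau)(k-1)$ splits exactly according to whether $(h_1,h_2)$ lies below, equal to, or above $(\sigma(r),\tau(q))$ lexicographically, yields a clean formula for $\varphi_{bc,h}^{\sigma\cdot\tau}(x)$, valid for $x$ in the cell $[(k-1)/(bc),k/(bc))$, in terms of $\varphi_{b,h_1}^{\sigma}(cx)$ (note $cx\bmod 1$ lies in the level-$b$ cell indexed by $r+1$) and $\varphi_{c,h_2}^{\tau}(x)$ (note $x$ lies in the level-$c$ cell indexed by $q+1$). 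Taking the supremum over $h$, the key point is that it decouples into independent suprema over $h_1\in\{0,\dots,b-1\}$ and $h_2\in\{0,\dots,c-1\}$; this gives $\psi_{bc}^{\sigma\cdot\tau,+}(x)=\psi_b^{\sigma,+}(cx)+\psi_c^{\tau,+}(x)$, and the same argument applied to $-\varphi$ gives the $\psi^-$ analogue. Summing the two identities yields $\psi_{bc}^{\sigma\cdot\tau}(x)=\psi_b^{\sigma}(cx)+\psi_c^{\tau}(x)$.

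The inequality on maxima then follows formally. Since $x\mapsto cx$ covers $[0,1)\pmod 1$ and $\psi_b^{\sigma}$ is $1$-periodic, $\max_{x\in\RR}\psi_b^{\sigma}(cx)=\max_{x\in\RR}\psi_b^{\sigma}(x)$, and hence
$$\max_{x\in\RR}\psi_{bc}^{\sigma\cdot\tau}(x)=\max_{x\in\RR}\bigl(\psi_b^{\sigma}(cx)+\psi_c^{\tau}(x)\bigr)\le\max_{x\in\RR}\psi_b^{\sigma}(x)+\max_{x\in\RR}\psi_c^{\tau}(x).$$
The main obstacle is the middle step: $\varphi_{b,h}^{\sigma}$ is genuinely piecewise — two branches governed by the comparison of $h$ with $\sigma(k-1)$, and affine in $x$ within each cell — so the bookkeeping that tracks how the index pair $(h_1,h_2)$, the cell pair $(r,q)$, and the permutation $\sigma\cdot\tau$ interact across both branches, together with the verification that the maximum over $h$ genuinely decouples rather than only submultiplying, is where essentially all of the work lies; everything else is either definitional or a one-line consequence.
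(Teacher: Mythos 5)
The paper gives no proof of this lemma at all (it is quoted from Faure, Propri\'et\'e 3.4.3 of \cite{faure1}), so your argument has to stand on its own, and there is a genuine gap at precisely the step you yourself flag as carrying all the weight. Your setup is fine: the splitting of the points as $\sigma(k')/b+\tau(k'')/(bc)$, of the threshold as $h=h_1c+h_2$, and of the cell index as $k-1=qb+r$ is correct (minor slip: since $n=k''b+k'$, it is $k'$ that is the fast digit and $k''$ the slow one, which is what your block decomposition of the first $k$ points actually uses), and so are the observations that $cx\bmod 1$ lies in the level-$b$ cell $r+1$, that $x$ lies in the level-$c$ cell $q+1$, and that the branch dichotomy for $\sigma\cdot\tau$ is the lexicographic comparison of $(h_1,h_2)$ with $(\sigma(r),\tau(q))$. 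The problem is the asserted ``clean formula'': with the fixed pairing $h=h_1c+h_2$ there is no additive identity $\varphi_{bc,h}^{\sigma\cdot\tau}(x)=\varphi_{b,h_1}^{\sigma}(cx)+\varphi_{c,h_2}^{\tau}(x)$. Writing out the counts, equality does hold when $h_1\le\sigma(r)$ and $h_2\le\tau(q)$ (both factors in their first branch), but in the mixed-branch cases correction terms of the form $(cx\bmod 1)$ minus an indicator survive. Concretely, take $b=c=2$, $\sigma=\tau=(0,1)$, so $\sigma\cdot\tau=(0,2,1,3)$; for $x\in(0,1/4)$ one computes $\varphi_{4,1}^{(0,2,1,3)}(x)=3x$, whereas $\varphi_{2,0}^{\sigma}(2x)+\varphi_{2,1}^{\tau}(x)=x$. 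On that cell the additive identity instead pairs $h=1$ with $(h_1,h_2)=(1,1)$ and $h=3$ with $(0,1)$, while on the cell $[1/2,3/4)$ the standard pairing does work: the correspondence $h\leftrightarrow(h_1,h_2)$ under which the $\varphi$-level identity is true is cell-dependent.

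Consequently the decoupling $\max_h\varphi_{bc,h}^{\sigma\cdot\tau}(x)=\max_{h_1}\varphi_{b,h_1}^{\sigma}(cx)+\max_{h_2}\varphi_{c,h_2}^{\tau}(x)$ is not a formal consequence of the bookkeeping you describe; it is exactly the content of the lemma, and it is asserted rather than proved. A correct proof along your lines must either exhibit the cell-dependent bijection between $h\in\{0,\ldots,bc-1\}$ and pairs $(h_1,h_2)$ (it depends on whether $h_1$ occurs among $\sigma(0),\ldots,\sigma(r)$ and on the comparison of $h_2$ with $\tau(q)$) under which $\varphi_{bc,h}^{\sigma\cdot\tau}(x)=\varphi_{b,h_1}^{\sigma}(cx)+\varphi_{c,h_2}^{\tau}(x)$ holds on each cell, or else prove the two inequalities between $\psi_{bc}^{\sigma\cdot\tau,\pm}(x)$ and $\psi_{b}^{\sigma,\pm}(cx)+\psi_{c}^{\tau,\pm}(x)$ separately; neither is supplied. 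Your intermediate claims themselves (the $\psi^{+}$ and $\psi^{-}$ identities, which are what Faure actually proves and then adds) are true, and your final step, periodicity of $\psi_b^{\sigma}$ plus $\max(f+g)\le\max f+\max g$, is fine, so the outline is repairable; but as written the central step would fail.
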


\begin{remark} If we set $\sigma=\tau$, then the intrication $\sigma \cdot \sigma$ gives a permutation in base $b^2$ whose $\psi$-function is the function $F_2^{\sigma}$ defined in \eqref{Fn}. 
In this special case the new permutation generates the same sequence as the original permutation.
\end{remark}

In particular, we see that repeated intrications of the permutation $(0,1)$ generates permutations in bases $b_m=2^m$ of a particular form:
In the notation of Lemma \ref{intr}, let $b=c=2$ and let $\sigma=\tau=(0,1)$. Then we get
\[ \begin{array}{c|c|cr}
(k'',k') & 2 k'' + k' & 2 \sigma(k') + \tau(k'') \\
\hline
(0,0) & 0 & 2\cdot 0 + 0 = 0 \\
(0,1) & 1 & 2\cdot 1 + 0 = 2 \\
(1,0) & 2 &2\cdot 0 + 1 = 1 \\
(1,1) & 3 &2\cdot 1 + 1 = 3 
\end{array}\]
If we take now the resulting permutation $(0,2,1,3)$ in base $b_2=4$ and form the intrication with $(0,1)$ then we see that the new permutation in base $8$ is obtained from two copies of the permutation in base $4$; i.e. setting $\sigma_2:=\sigma\cdot \tau$ we can write the new permutation as
$$ \sigma_3= \sigma_2 \cdot (0,1) = (2 \sigma_2, 2\sigma_2 \oplus 1). $$
Using Lemma \ref{intr} it is easy to see that this holds in general:
\begin{lemma} \label{lem:intr2}
If $\sigma_1=(0,1)$ and $\sigma_m := \sigma_{m-1}\cdot (0,1)$ for $m\geq 2$, then 
$ \sigma_m=(2 \sigma_{m-1}, 2 \sigma_{m-1}\oplus 1). $ In particular,
$$\psi_{2^m}^{\sigma_m} (x)= F_m^{\sigma_1} (x)$$
for all $x\in [0,1]$.
\end{lemma}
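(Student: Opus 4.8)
The plan is to prove Lemma \ref{lem:intr2} by induction on $m$, using the explicit formula for the intrication operation from Lemma \ref{intr} as the engine for both the combinatorial identity $\sigma_m = (2\sigma_{m-1}, 2\sigma_{m-1}\oplus 1)$ and the functional identity $\psi_{2^m}^{\sigma_m}(x) = F_m^{\sigma_1}(x)$.

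\medskip

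For the combinatorial statement, I would unwind the definition $\sigma_m = \sigma_{m-1}\cdot(0,1)$ with $b = 2^{m-1}$, $c = 2$, $\sigma = \sigma_{m-1}$, $\tau = (0,1)$ in Lemma \ref{intr}: for $0 \le k' < 2^{m-1}$ and $k'' \in \{0,1\}$ we get $\sigma_m(k'' \cdot 2^{m-1} + k') = 2\,\sigma_{m-1}(k') + \tau(k'')$. Reading this as a tuple of length $2^m$: the first half (indices $k'' = 0$, i.e. $0 \le j < 2^{m-1}$) has entries $2\,\sigma_{m-1}(j)$, which is precisely the tuple $2\sigma_{m-1}$ in the notation of Subsection \ref{sec:family}; the second half (indices $k'' = 1$) has entries $2\,\sigma_{m-1}(j) + 1$, which is $2\sigma_{m-1} \oplus 1$ (and no modular reduction is needed here since all entries are already below $2^m$). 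This gives $\sigma_m = (2\sigma_{m-1}, 2\sigma_{m-1}\oplus 1)$. One should also remark that $\sigma_m(0) = 0$ for all $m$, which follows inductively and is used implicitly.

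\medskip

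For the functional statement, the key is the identity $\psi_{bc}^{\sigma\cdot\tau}(x) = \psi_b^\sigma(cx) + \psi_c^\tau(x)$ from Lemma \ref{intr}. With $b = 2^{m-1}$, $c = 2$, $\sigma = \sigma_{m-1}$, $\tau = \sigma_1 = (0,1)$ this reads $\psi_{2^m}^{\sigma_m}(x) = \psi_{2^{m-1}}^{\sigma_{m-1}}(2x) + \psi_2^{\sigma_1}(x)$. By the induction hypothesis $\psi_{2^{m-1}}^{\sigma_{m-1}}(y) = F_{m-1}^{\sigma_1}(y) = \sum_{j=0}^{m-2}\psi_2^{\sigma_1}(2^j y)$, so substituting $y = 2x$ gives $\psi_{2^{m-1}}^{\sigma_{m-1}}(2x) = \sum_{j=0}^{m-2}\psi_2^{\sigma_1}(2^{j+1}x) = \sum_{i=1}^{m-1}\psi_2^{\sigma_1}(2^i x)$. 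Adding the remaining term $\psi_2^{\sigma_1}(x) = \psi_2^{\sigma_1}(2^0 x)$ yields $\sum_{i=0}^{m-1}\psi_2^{\sigma_1}(2^i x) = F_m^{\sigma_1}(x)$, closing the induction. The base case $m = 1$ is trivial: $\psi_2^{\sigma_1}(x) = F_1^{\sigma_1}(x)$ by definition \eqref{Fn}.

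\medskip

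I do not expect a serious obstacle here; the lemma is essentially a bookkeeping consequence of Lemma \ref{intr} together with the observation recorded in the remark after Lemma \ref{intr} that intrication of a permutation with a copy of a base-$2$ permutation stacks two scaled copies. The one point requiring a little care is matching the "$\oplus$" notation of Subsection \ref{sec:family} (addition modulo $2^m$) with the plain addition $2\sigma_{m-1}(j) + 1$ coming out of the intrication formula — one must check that the modular reduction is vacuous in this case because every entry of $2\sigma_{m-1}$ is an even number in $\{0,2,\ldots,2^m-2\}$, so adding $1$ never reaches $2^m$. The other bit of care is index arithmetic in the shift $j \mapsto i = j+1$ when changing summation variables, which is routine.
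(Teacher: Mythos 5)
Your induction via Lemma \ref{intr} is exactly the argument the paper intends (the paper simply asserts the general case after working out the $m=2,3$ examples), and both your combinatorial bookkeeping, including the observation that the reduction modulo $2^m$ in $2\sigma_{m-1}\oplus 1$ is vacuous, and your telescoping of $\psi_{2^m}^{\sigma_m}(x)=\psi_{2^{m-1}}^{\sigma_{m-1}}(2x)+\psi_2^{\sigma_1}(x)$ into $F_m^{\sigma_1}(x)$ are correct. No gaps; this is the paper's approach spelled out in full detail.
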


Hence, every permutation $\sigma_m$ is contained in the set $\mathcal{P}_m$ which we defined in Section \ref{sec:family}.

\section{A family of permutations}
\label{sec:family2}

In this section we study the discrepancy of sequences generated from permutations in $\mathcal{P}_m \subset \mathfrak{S}_{2^m}$. The main result of this section is the observation that all the functions $\psi_{2^m}^{\sigma}$ are identical for $\sigma \in \mathcal{P}_m$. Thus, using Lemma \ref{lem:intr2}, we see that all permutations in $\mathcal{P}_m$ generate permuted van der Corput sequences with the same asymptotic discrepancy constant (in fact, with the same discrepancy) as the classical van der Corput sequence. We summarise the main observations of this section in Corollary \ref{cor:family}.

We start with an important technical observation.
\begin{lemma}\label{lem:doubling}
Let $b=2^m$ and let $1\leq k \leq 2^m$. If $\psi_{2^m}^{\sigma}=\psi_{2^m}^{\sigma'}$ on $[0,1]$ for $\sigma, \sigma' \in \mathcal{P}_m$, then $\psi_{2^{m+1}}^{(2 \sigma,\ast)}=\psi_{2^{m+1}}^{(2 \sigma',\ast)}$ on $[0,1/2]$ for the tuples $2\sigma, 2\sigma' \in 2  \mathcal{P}_m$.
\end{lemma}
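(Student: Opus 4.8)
The plan is to reduce the statement to Faure's Intrication Lemma (Lemma~\ref{intr}), by exploiting the fact that on the left half $[0,1/2]$ of the torus the functions $\psi$ only ``see'' the left half of a permutation.

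The first step is a \emph{locality} observation: for any $\rho\in\Sy_{2^{m+1}}$, any $h\in\{0,1,\dots,2^{m+1}-1\}$, and any $x\in[0,1/2)$, the value $\varphi_{2^{m+1},h}^{\rho}(x)$ depends only on the entries $\rho(0),\dots,\rho(2^m-1)$. Indeed, such an $x$ lies in $[(k-1)/2^{m+1},k/2^{m+1})$ for some $k$ with $1\le k\le 2^m$, so the case distinction in the definition of $\varphi$ uses $\rho(k-1)$ with $k-1<2^m$, while the counting terms $A([0,h/2^{m+1}),k,\mathcal{X}_{2^{m+1}}^{\rho})$ and $A([h/2^{m+1},1),k,\mathcal{X}_{2^{m+1}}^{\rho})$ only inspect the first $k\le 2^m$ coordinates of $\mathcal{X}_{2^{m+1}}^{\rho}$. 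Taking maxima over $h$ and using continuity at $x=1/2$, the functions $\psi_{2^{m+1}}^{\rho,\pm}$ and hence $\psi_{2^{m+1}}^{\rho}$, restricted to $[0,1/2]$, depend only on $\rho(0),\dots,\rho(2^m-1)$. In particular $\psi_{2^{m+1}}^{(2\sigma,\ast)}$ is unambiguously defined on $[0,1/2]$ (independently of the completion $\ast$) and coincides there with $\psi_{2^{m+1}}^{\rho}$ for \emph{any} $\rho$ having $2\sigma$ as its first $2^m$ entries.

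The second step is to pick a convenient such $\rho$. Taking $\rho:=\sigma\cdot(0,1)\in\Sy_{2^{m+1}}$, the intrication of $\sigma$ with the identity permutation in base $2$, one reads off from the definition in Lemma~\ref{intr} that $\rho$ is the tuple $(2\sigma(0),\dots,2\sigma(2^m-1),\,2\sigma(0)+1,\dots,2\sigma(2^m-1)+1)$, whose first $2^m$ entries are exactly $2\sigma$; and Lemma~\ref{intr} (with $b=2^m$, $c=2$, $\tau=(0,1)$) gives
$$ \psi_{2^{m+1}}^{\sigma\cdot(0,1)}(x)=\psi_{2^m}^{\sigma}(2x)+\psi_{2}^{(0,1)}(x)\qquad(x\in\RR). $$
Combining the two steps, for every $x\in[0,1/2]$ we get
$$ \psi_{2^{m+1}}^{(2\sigma,\ast)}(x)=\psi_{2^m}^{\sigma}(2x)+\psi_{2}^{(0,1)}(x), $$
and the analogous identity holds with $\sigma$ replaced by $\sigma'$. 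Hence if $\psi_{2^m}^{\sigma}=\psi_{2^m}^{\sigma'}$ on $[0,1]$, then for $x\in[0,1/2]$ we have $2x\in[0,1]$, so the first terms agree while the second terms are literally the same, and therefore $\psi_{2^{m+1}}^{(2\sigma,\ast)}=\psi_{2^{m+1}}^{(2\sigma',\ast)}$ on $[0,1/2]$.

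The only place that needs care is the locality observation in the first step: it is an elementary but slightly fiddly unwinding of Faure's piecewise-affine definitions, and one must be attentive to the boundary point $x=1/2$ (where the relevant interval index would formally involve $\rho(2^m)$, so one invokes continuity of the $\varphi_{2^{m+1},h}^{\rho}$ on $[0,1[$). Everything after that is immediate from Lemma~\ref{intr}. Alternatively one could bypass Step~1 by proving a ``one-sided'' analogue of the Intrication Lemma directly, but this essentially reproduces Faure's computation.
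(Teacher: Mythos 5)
Your proof is correct. The locality observation is sound: for $x\in[(k-1)/2^{m+1},k/2^{m+1})$ with $1\le k\le 2^m$, the case split in Faure's definition uses only $\rho(k-1)$ and the counting terms $A(\cdot,k,\mathcal{X}_{2^{m+1}}^{\rho})$ inspect only the first $k\le 2^m$ coordinates, so $\psi_{2^{m+1}}^{(2\sigma,\ast)}$ is indeed well defined on $[0,1/2]$ independently of the completion, with the boundary point $x=1/2$ handled by the continuity of the $\varphi$-functions that the paper records. Choosing $\rho=\sigma\cdot(0,1)$, whose first $2^m$ entries are exactly $2\sigma$, and applying Lemma~\ref{intr} then gives the identity $\psi_{2^{m+1}}^{(2\sigma,\ast)}(x)=\psi_{2^m}^{\sigma}(2x)+\psi_{2}^{(0,1)}(x)$ on $[0,1/2]$, from which the conclusion is immediate.

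Your route differs from the paper's in a worthwhile way. The paper does not invoke the Intrication Lemma here; instead it redoes the computation at the level of the $\varphi$-functions, splitting into even $h=2\tilde h$ (where $\varphi_{2^{m+1},h}^{(2\sigma,\ast)}(x/2)=\varphi_{2^m,\tilde h}^{\sigma}(x)$) and odd $h=2\tilde h-1$ (where an extra summand $x/2$ appears), and then argues that since the change from base $2^m$ to $2^{m+1}$ is independent of $\sigma$, equal $\psi$-functions stay equal. That argument works, but passing from the $\varphi$-level relations to the $\psi$-level conclusion requires the small extra remark that the $\sigma$-independent shift $x/2$ is constant in $\tilde h$, so the maxima transform predictably; your reduction avoids this entirely because Lemma~\ref{intr} is already stated at the $\psi$-level, and your hypothesis $\psi_{2^m}^{\sigma}=\psi_{2^m}^{\sigma'}$ is exactly what is needed. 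The trade-off is that you need the locality step (including the care at $x=1/2$) to justify replacing the arbitrary completion $\ast$ by the specific completion $2\sigma\oplus 1$ coming from the intrication; the paper's direct computation never needs this substitution. As a bonus, your argument yields the explicit formula $\psi_{2^{m+1}}^{(2\sigma,\ast)}(x)=\psi_{2^m}^{\sigma}(2x)+x$ on $[0,1/2]$, which is slightly more information than the lemma asks for and is consistent with Lemma~\ref{lem:intr2}.
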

\begin{proof}
Let $h \in \{0,1, \ldots, 2^{m+1}-1 \}$ and let $x\in [(k-1)/2^{m+1}, k/2^{m+1}[$ in which $1 \leq k \leq 2^m$ is an integer.
Note that the tuples $2 \sigma$ and $2 \sigma'$ suffice to calculate $\psi_{2^{m+1}}^{(2 \sigma,\ast)}$ and $\psi_{2^{m+1}}^{(2 \sigma',\ast)}$ on $[0,1/2]$.
To prove the result we compare the functions $\varphi_{2^{m+1},h}^{(2\sigma,\ast)}$ to the corresponding function $\varphi_{2^{m},\tilde{h}}^{\sigma}$, in which $0\leq h \leq 2^{m+1}$ and $0 \leq \tilde{h} \leq 2^m$.

{\bf Case 1: Even $h$. } If $h=2 \tilde{h}$ is even then 
$$ \varphi_{2^{m+1},h}^{(2\sigma,\ast)} \left( \frac{x}{2} \right) = \varphi_{2^{m},\tilde{h}}^{\sigma} \left( x \right)$$
for $x \in [0,1[$.
To see this, we look at the definition of the $\varphi$-functions and observe that 
$$A\left( \left[0,h/2^{m+1} \right[,k, \mathcal{X}_{2^{m+1}}^{(2\sigma,\ast)} \right) = A\left( \left[0,\tilde{h}/2^{m}\right[,k, \mathcal{X}_{2^{m}}^{\sigma}  \right),$$
and $hx/2= 2 \tilde{h}x/2=\tilde{h}x$.

{\bf Case 2: Odd $h$. } If $h=2 \tilde{h}-1$ is odd then
$$ \varphi_{2^{m+1},h}^{(2\sigma,\ast)} \left( \frac{x}{2} \right) = \varphi_{2^{m},\tilde{h}}^{\sigma} \left( x \right) + \frac{x}{2}$$
for $x \in [0,1[$.
Follows from a similar argument.

Consequently, the $\varphi$-functions change independently of the particular $\sigma$ when going from $m$ to $m+1$ and hence if the related $\psi$ functions where identical for $\sigma, \sigma'$, they remain identical.
\end{proof}

\begin{definition}
Two sets $\tau, \tau'$ are $m$-inverse if they are disjoint and their union is the set $\{0,1,\ldots, 2^m-1\}$.
\end{definition}

To illustrate this definition, observe that the two sets $\{0,1,2,4,6\}$ and $\{3,5,7\}$ are $3$-inverse.
Being $m$-inverse implies an important equality. We formulate the next lemma in the specialised way in which we will use it later.
\begin{lemma}\label{lem:BWsymmetry}
Let $b=2^{m+1}$, let $0\leq k \leq 2^m-1$ and let $\tau_k$ and $\tau_{2^m-k}$ be two tuples of length $k$ and $2^m-k$.
If $\tau_{2^m-k}$ and $\{0,2, \ldots, 2^m-2\} \cup \tau_k$ are $(m+1)$-inverse, then for $\xi \in [0,1]$
$$ \psi_{2^{m+1}}^{(\tau_{2^m-k}, \ast)} \left( \frac{2^m - k-\xi}{2^{m+1}} \right) = \psi_{2^{m+1}}^{(\{0,2,\ldots, 2^m-2 \} \cup \tau_k, \ast)}  \left( \frac{2^m + k+\xi}{2^{m+1}} \right). $$
\end{lemma}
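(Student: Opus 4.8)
The plan is to unwind both sides via Faure's definition of the $\varphi$-functions and show they match term-by-term, using the $(m+1)$-inverse hypothesis to convert a count of small values of one permutation into a count of large values of the other. First I would fix $h \in \{0,1,\ldots,2^{m+1}-1\}$ and locate the two arguments: the left-hand argument $(2^m-k-\xi)/2^{m+1}$ lies in the interval $[(2^m-k-1)/2^{m+1}, (2^m-k)/2^{m+1}[$, so the relevant cell index is $k_L = 2^m-k$, while the right-hand argument $(2^m+k+\xi)/2^{m+1}$ lies in $[(2^m+k)/2^{m+1},(2^m+k+1)/2^{m+1}[$, with cell index $k_R = 2^m+k+1$. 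Note $k_L + k_R = 2^{m+1}+1$, i.e. these two cells are reflections of each other about the midpoint $1/2$; this is the geometric content of the lemma, matching the remark that $[0,1]$ is viewed as a circle.

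Next I would compare $\varphi_{2^{m+1},h}^{(\tau_{2^m-k},\ast)}$ at the left argument with $\varphi_{2^{m+1},2^{m+1}-h}^{(\{0,2,\ldots,2^m-2\}\cup\tau_k,\ast)}$ at the right argument (a reflection in $h$ as well as in $x$). The key combinatorial identity is that since $\tau_{2^m-k}$ and $\{0,2,\ldots,2^m-2\}\cup\tau_k$ are $(m+1)$-inverse, the first $k_L=2^m-k$ entries of the one permutation and the first $k_R-1 = 2^m+k$ entries of the other partition $\{0,1,\ldots,2^{m+1}-1\}$; hence for any threshold $h$,
\begin{equation*}
A\!\left(\left[0,\tfrac{h}{2^{m+1}}\right[, k_L, \mathcal{X}_{2^{m+1}}^{(\tau_{2^m-k},\ast)}\right) + A\!\left(\left[0,\tfrac{h}{2^{m+1}}\right[, k_R, \mathcal{X}_{2^{m+1}}^{(\{0,2,\ldots,2^m-2\}\cup\tau_k,\ast)}\right) = h,
\end{equation*}
which rearranges to relate an $A([0,h/b[,\cdot)$-count for the first permutation to an $A([h/b,1[,\cdot)$-count (equivalently an $A([0,(b-h)/b[,\cdot)$-count) for the second. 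Plugging this into the two-case definition of $\varphi$, and using that the linear parts behave as $h\cdot\frac{2^m-k-\xi}{2^{m+1}}$ versus $(2^{m+1}-h)\cdot\frac{2^m+k+\xi}{2^{m+1}}$ — whose difference is an integer multiple of the counts — I expect to obtain
\begin{equation*}
\varphi_{2^{m+1},h}^{(\tau_{2^m-k},\ast)}\!\left(\tfrac{2^m-k-\xi}{2^{m+1}}\right) = \varphi_{2^{m+1},\,b-h}^{(\{0,2,\ldots,2^m-2\}\cup\tau_k,\ast)}\!\left(\tfrac{2^m+k+\xi}{2^{m+1}}\right)
\end{equation*}
for all $h$ (with a suitable convention at $h=0$). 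Taking the maximum over $h$ of $\pm\varphi$ — which simply re-indexes $h\mapsto b-h$ — then yields equality of $\psi^{\sigma,+}$ with $\psi^{\sigma',+}$ and of $\psi^{\sigma,-}$ with $\psi^{\sigma',-}$ at the reflected points, and hence of $\psi = \psi^{+}+\psi^{-}$, which is the claim.

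The main obstacle I anticipate is bookkeeping the boundary cases cleanly: the behaviour at the cell endpoints (e.g. $\xi=0$ or $\xi=1$), the role of $h=0$ versus $h=b$, and most importantly the appearance of the \emph{fixed} block $\{0,2,\ldots,2^m-2\}$ as a prescribed prefix of the second tuple — I must verify that this even-integer block is exactly what is forced by the $(m+1)$-inverse condition together with $\tau_{2^m-k}$ being an initial segment of a $2\sigma'$-type tuple (whose first $2^m$ entries are all even), so that the count identity above has the right index ranges $k_L$ and $k_R$. Handling the piecewise definition of $\varphi$ (the split at $h\lessgtr\sigma(k-1)$) requires checking that the reflection $h\mapsto b-h$ interchanges the two cases correctly, which should follow from the count identity but needs to be spelled out. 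I would organize the write-up as: (1) identify the cells; (2) state and prove the count identity from $(m+1)$-inverseness; (3) derive the pointwise $\varphi$-identity case by case; (4) take maxima to conclude.
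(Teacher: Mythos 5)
Your overall strategy is the paper's: locate the two reflected cells, extract a counting identity from the $(m+1)$-inverse hypothesis, turn it into a relation between the $\varphi$-functions, and pass to $\psi$. But the key step as you formulate it is incorrect in two related places. First, your displayed identity counts over $k_R=2^m+k+1$ points of the second tuple, while the partition coming from $(m+1)$-inverseness involves only its first $k_R-1=2^m+k$ entries; the $(k_R)$-th entry is the first element of the $\ast$-block and is not controlled by the hypothesis, so with $k_R$ the identity fails. With $N_1=2^m-k$ and $N_2=2^m+k$ the correct identity is $A([0,h/b[,N_1,\mathcal{X}_{b}^{(\tau_{2^m-k},\ast)})+A([0,h/b[,N_2,\mathcal{X}_{b}^{(\{0,2,\ldots\}\cup\tau_k,\ast)})=h$. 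Second, the rearrangement ``an $A([h/b,1[,\cdot)$-count, equivalently an $A([0,(b-h)/b[,\cdot)$-count'' is not valid: $A([h/b,1[,N_2,\cdot)=N_2-A([0,h/b[,N_2,\cdot)$ is a complement within the same $N_2$ points, not a count at the reflected threshold $(b-h)/b$. Consequently the pointwise identity you aim for, $\varphi_{b,h}$ of the first tuple at the left point equal to $\varphi_{b,b-h}$ of the second at the right point, is false (already at the cell endpoints: it would force $R([0,h/b[,N_2,\cdot)+R([0,(b-h)/b[,N_2,\cdot)=0$ for all $h$, a symmetry the prefix $\{0,2,\ldots\}\cup\tau_k$ does not have; try $b=8$, $\tau_k=(1)$). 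What the counting identity actually yields is the same-$h$, opposite-sign relation
$$\varphi_{b,h}^{(\tau_{2^m-k},\ast)}\left(\tfrac{2^m-k-\xi}{2^{m+1}}\right)=-\,\varphi_{b,h}^{(\{0,2,\ldots\}\cup\tau_k,\ast)}\left(\tfrac{2^m+k+\xi}{2^{m+1}}\right),$$
which is what the paper's proof uses; it follows that $\psi^{+}$ and $\psi^{-}$ are \emph{interchanged} between the two points, not preserved as you claim, and only their sum $\psi=\psi^{+}+\psi^{-}$ is equal. Since the lemma only asserts equality of $\psi$, your argument is repairable, but the sign/indexing of the central step must be fixed.

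A further caution, relevant to the boundary bookkeeping you flag: the counting identity determines the $\varphi$'s only at the cell endpoint $\xi=0$; in the interior of the cells the slopes depend on the ``new'' entries, namely the last element of $\tau_{2^m-k}$ on the left and the first element of the $\ast$-block on the right, and the latter is not fixed by the $(m+1)$-inverse hypothesis. So the same-$h$, opposite-sign relation on the whole cell needs these two entries to agree; in the intended application this is exactly what the swap in Lemma \ref{lem:swap2} delivers (the paper's own proof also treats this point briskly). Your planned case-by-case derivation of the pointwise $\varphi$-identity should make this dependence explicit rather than only checking the $h\lessgtr\sigma(k-1)$ branch split.
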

\begin{proof}
Let $N_1=2^m -k$, $N_2 = 2^m +k$ and let $y \in \{1/2^{m+1}, 2/2^{m+1}, \ldots, 1 \}$. Then $N_1 + N_2 = 2^{m+1}$. Since $\tau_{2^m-k}$ and $\{0,2, \ldots, 2^m-2\} \cup \tau_k$ are $(m+1)$-inverse we have that
$$ A([0,y[, N_1, \mathcal{X}_{2^{m+1}}^{(\tau_{2^m-k}, \ast)}) + A([0,y[, N_2, \mathcal{X}_{2^{m+1}}^{(\{0,2,\ldots, 2^m-2 \} \cup \tau_k, \ast)}) = 2^{m+1} y. $$
From this we get that
\begin{align*}
A([0,y[, N_1, \mathcal{X}_{2^{m+1}}^{(\tau_{2^m-k}, \ast)}) - (N_1-\xi) y &= 
(2^{m+1} y - A([0,y[, N_2,\mathcal{X}_{2^{m+1}}^{(\{0,2,\ldots, 2^m-2 \} \cup \tau_k, \ast)}) ) - (2^{m+1} - N_2 -\xi) y\\
& = - A([0,y[, N_2,\mathcal{X}_{2^{m+1}}^{(\{0,2,\ldots, 2^m-2 \} \cup \tau_k, \ast)}) + (N_2+\xi) y.
\end{align*}
Hence, all corresponding $\varphi$ functions on $\left[  \frac{2^m+k}{2^{m+1}}, \frac{2^m+k+1}{2^{m+1}}\right]$ are reflections of the according $\varphi$-functions on $\left[ \frac{2^m-k-1}{2^{m+1}}, \frac{2^m-k}{2^{m+1}} \right]$ with opposite signs. 
Consequently, the corresponding $\psi$ functions restricted to the same intervals are reflections of each other with respect to the line $x=1/2$.
\end{proof}


To prepare for the proof of the main theorem of this section, we make one more technical observation, which we state again in the special form in which we will use it below.

\begin{lemma} \label{lem:swap2}
Let $\sigma \in \mathcal{P}_m$ and let $\sigma':=\sigma \circ \mu_{2^m}$, in which $\mu_{2^m}$ is the swapping permutation. Let $\tau_k$ denote the first $k$ elements of $2 \sigma \oplus a$ and $\tau'_{2^m-k}$ denote the first $2^m-k$ elements of $2 \sigma' \oplus a$ for odd $1 \leq a \leq 2^{m+1}$ and addition modulo $2^{m+1}$. Then $\tau_k \cup \{0,2, \ldots, 2^{m+1}-2 \}$ and $\tau'_{2^m-k}$ are $(m+1)$-inverse.
\end{lemma}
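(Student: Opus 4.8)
The goal is to show that the tuple $\tau'_{2^m-k}$ (the first $2^m-k$ entries of $2\sigma'\oplus a$) and the set $\{0,2,\ldots,2^{m+1}-2\}\cup\tau_k$ (where $\tau_k$ is the first $k$ entries of $2\sigma\oplus a$) partition $\{0,1,\ldots,2^{m+1}-1\}$; equivalently, since all three pieces are built from disjoint parity classes, it suffices to check that $\{0,2,\ldots,2^{m+1}-2\}$ accounts for exactly the even residues, and that $\tau_k$ and $\tau'_{2^m-k}$ together account for exactly the odd residues $\{1,3,\ldots,2^{m+1}-1\}$, each exactly once. The even part is immediate: $2\sigma\oplus a$ and $2\sigma'\oplus a$ consist of odd numbers (since $2\sigma$, $2\sigma'$ are even and $a$ is odd), so $\tau_k,\tau'_{2^m-k}\subseteq\{1,3,\ldots,2^{m+1}-1\}$ and are disjoint from $\{0,2,\ldots,2^{m+1}-2\}$. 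A simple cardinality count gives $|\tau_k|+|\{0,\ldots,2^{m+1}-2\}\cap 2\ZZ|+|\tau'_{2^m-k}| = k + 2^m + (2^m-k) = 2^{m+1}$, so the whole claim reduces to showing that $\tau_k$ and $\tau'_{2^m-k}$ are disjoint (then automatically their union is all odd residues).

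The heart of the matter is therefore the following: writing $2\sigma = (s_0,\ldots,s_{2^m-1})$, we have $2\sigma' = 2(\sigma\circ\mu_{2^m}) = (s_{2^m-1},s_{2^m-2},\ldots,s_0)$, i.e. $2\sigma'$ is the reversal of $2\sigma$. Hence $\tau_k = (s_0+a,\ldots,s_{k-1}+a)$ and $\tau'_{2^m-k} = (s_{2^m-1}+a,s_{2^m-2}+a,\ldots,s_k+a)$, all additions mod $2^{m+1}$. As sets, $\tau_k = \{s_i+a : 0\le i\le k-1\}$ and $\tau'_{2^m-k} = \{s_i+a : k\le i\le 2^m-1\}$. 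Since $\sigma$ is a permutation of $\{0,\ldots,2^m-1\}$, the values $s_0,\ldots,s_{2^m-1}$ are the $2^m$ distinct even numbers $\{0,2,\ldots,2^{m+1}-2\}$ in some order; so the two index ranges $\{0,\ldots,k-1\}$ and $\{k,\ldots,2^m-1\}$ produce disjoint value-sets $\{s_i : i<k\}$ and $\{s_i : i\ge k\}$, and adding the constant $a$ mod $2^{m+1}$ preserves disjointness. Thus $\tau_k\cap\tau'_{2^m-k}=\varnothing$, and combined with the parity observation and the count above, $\tau'_{2^m-k}$ and $\{0,2,\ldots,2^{m+1}-2\}\cup\tau_k$ are indeed $(m+1)$-inverse.

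The only genuinely delicate point is bookkeeping the reversal: one must confirm that ``first $2^m-k$ elements of $2\sigma'\oplus a$'' really corresponds to the index set $\{k,\ldots,2^m-1\}$ of $2\sigma$, i.e. that $\mu_{2^m}$ sends position $j$ to position $2^m-1-j$ so that reversing and then truncating to length $2^m-k$ keeps precisely the complement of the first $k$ positions of $2\sigma$. This is just the definition $\mu_{2^m}(j)=2^m-1-j$ unwound carefully, together with the fact that multiplying a permutation entrywise by $2$ and adding $a$ mod $2^{m+1}$ commutes with reindexing. No hard analysis is involved — the proof is entirely combinatorial, and I would present it in essentially the three steps above: (1) reversal identification, (2) disjointness of the two odd-valued tuples via injectivity of $\sigma$, (3) the parity-and-cardinality wrap-up.
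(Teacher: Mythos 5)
Your proof is correct and follows essentially the same route as the paper's: the key step in both is that $\mu_{2^m}$ reverses $\sigma$, so the first $2^m-k$ entries of $2\sigma'\oplus a$ coincide as a set with the last $2^m-k$ entries of $2\sigma\oplus a$, after which disjointness and the partition of the odd residues follow. You simply spell out the parity and cardinality bookkeeping that the paper leaves implicit.
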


\begin{proof}
Note that the swapping permutation simply reverses the elements of $\sigma$; i.e. $$\sigma(i)=(\sigma \circ \mu_{2^m})(2^m-i-1)$$ for all $0\leq i \leq 2^m-1$. Hence, the last $2^m-k$ elements of $\sigma$ are exactly the first $2^m-k$ elements of $\sigma'$.
\end{proof}

Lemma \ref{lem:BWsymmetry} and Lemma \ref{lem:swap2} imply that for all $\sigma \in \mathcal{P}_{m}$ the function $\psi_{2^m}^{\sigma}(x)$ for $x\in [1/2,1]$ is a reflection of $\psi_{2^m}^{\sigma}(x)$ with $x \in [0,1/2]$; i.e. the graph of $\psi_{2^m}^{\sigma}(x)$ with $x \in [0,1/2]$ is reflected at the line $x=1/2$.








\begin{theorem} \label{thm:family}
For $m\geq 1$ let $\sigma, \sigma' \in  \mathcal{P}_m$. Then $\psi_{2^m}^{\sigma}(x) = \psi_{2^m}^{\sigma'}(x)$ for all $x \in [0,1]$.
\end{theorem}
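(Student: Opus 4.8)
The plan is to prove the statement by induction on $m$, using the two symmetry lemmas (Lemma \ref{lem:BWsymmetry} and Lemma \ref{lem:swap2}) to reduce equality on $[0,1]$ to equality on $[0,1/2]$, and then using the doubling lemma (Lemma \ref{lem:doubling}) to transport equality on $[0,1/2]$ at level $m+1$ back to equality on $[0,1]$ at level $m$. The base case $m=1$ is trivial since $\mathcal{P}_1 = \{(0,1)\}$ is a singleton. So assume $\psi_{2^m}^{\sigma} = \psi_{2^m}^{\sigma'}$ on $[0,1]$ for all $\sigma,\sigma' \in \mathcal{P}_m$, and let $\rho = (2\sigma, 2\sigma'' \oplus a)$ and $\rho' = (2\tau, 2\tau'' \oplus a')$ be two arbitrary elements of $\mathcal{P}_{m+1}$, where $\sigma,\sigma'',\tau,\tau'' \in \mathcal{P}_m$ and $a,a'$ are odd.

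First I would establish equality on $[0,1/2]$. By Lemma \ref{lem:doubling}, since $\psi_{2^m}^{\sigma} = \psi_{2^m}^{\tau}$ on $[0,1]$ by the inductive hypothesis, we get $\psi_{2^{m+1}}^{(2\sigma,\ast)} = \psi_{2^{m+1}}^{(2\tau,\ast)}$ on $[0,1/2]$; but the restriction of $\psi_{2^{m+1}}^{\rho}$ to $[0,1/2]$ depends only on the tuple $2\sigma$ (the first $2^m$ entries), as remarked in the proof of Lemma \ref{lem:doubling}, and similarly $\psi_{2^{m+1}}^{\rho'}$ restricted to $[0,1/2]$ depends only on $2\tau$. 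Hence $\psi_{2^{m+1}}^{\rho} = \psi_{2^{m+1}}^{\rho'}$ on $[0,1/2]$.

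Next I would deduce equality on $[1/2,1]$ from equality on $[0,1/2]$, for each fixed $\rho \in \mathcal{P}_{m+1}$. This is exactly the content of the remark following Lemma \ref{lem:swap2}: combining Lemma \ref{lem:swap2} (which, for $\sigma \in \mathcal{P}_m$ and $\sigma' = \sigma \circ \mu_{2^m}$, identifies the first $2^m-k$ entries of $2\sigma' \oplus a$ with the complement, relative to $\{0,2,\ldots,2^{m+1}-2\}$, of the first $k$ entries of $2\sigma \oplus a$) with Lemma \ref{lem:BWsymmetry} (which turns such an $(m+1)$-inverse relation into a reflection of the relevant $\psi$-pieces about $x=1/2$) shows that $\psi_{2^{m+1}}^{\rho}(x)$ for $x \in [1/2,1]$ is the reflection about $x=1/2$ of $\psi_{2^{m+1}}^{\rho}(x)$ for $x \in [0,1/2]$ — here one needs that $\rho$ uses the \emph{same} odd shift $a$ in its second half as would appear in the swapped version, which is automatic since the shift $a$ is common to the whole block $2\sigma'' \oplus a$. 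Since the same holds for $\rho'$, and since $\rho,\rho'$ agree on $[0,1/2]$, they agree on $[1/2,1]$ as well, hence on all of $[0,1]$. One should also check the single point $x=1/2$ and the endpoints, but the $\psi$-functions are continuous (as noted after the definition of $\varphi_{b,h}^\sigma$), so this is immediate from agreement on the two closed halves.

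The main obstacle I anticipate is verifying carefully that Lemma \ref{lem:swap2} applies to the actual second block $2\sigma'' \oplus a$ of $\rho$: Lemma \ref{lem:swap2} is stated for the tuple $2\sigma \oplus a$ with $\sigma \in \mathcal{P}_m$, and one must confirm that reversing this block (applying $\mu_{2^m}$ on the inside) produces a tuple whose \emph{first} $2^m-k$ entries are $(m+1)$-inverse to $\{0,2,\ldots,2^{m+1}-2\} \cup (\text{first } k \text{ entries of } 2\sigma'' \oplus a)$, and that the reversed tuple $2(\sigma''\circ\mu_{2^m}) \oplus a$ is itself of the form used in the hypothesis of Lemma \ref{lem:BWsymmetry} — i.e. that $\sigma'' \circ \mu_{2^m}$ lies in $\mathcal{P}_m$ or at least that its $\psi$-function agrees with those of $\mathcal{P}_m$. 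This last point follows from the Swapping Lemma (Lemma \ref{lem:swap}) together with the fact, which one needs en route, that $\psi_{2^m}^{\sigma,+}$ and $\psi_{2^m}^{\sigma,-}$ are individually symmetric for $\sigma \in \mathcal{P}_m$ (or that their sum is what matters); assembling these pieces in the right order, and keeping track of which half-interval and which shift each statement refers to, is the delicate bookkeeping that the proof must get right.
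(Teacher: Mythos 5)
Your outline follows the paper's own proof almost step for step: induction on $m$, Lemma \ref{lem:doubling} to get agreement of all $\psi$-functions on $[0,1/2]$, and Lemma \ref{lem:swap2} combined with Lemma \ref{lem:BWsymmetry} and the Swapping Lemma to recover $[1/2,1]$ by reflection. There is, however, one concrete ingredient missing from your chain for the right half. What Lemma \ref{lem:BWsymmetry} actually gives for $\rho=(2\sigma,2\sigma''\oplus a)$ is that $\psi_{2^{m+1}}^{\rho}$ on the interval indexed by $2^m+k$ equals the reflection of $\psi_{2^{m+1}}^{(\tau'_{2^m-k},\ast)}$ on the mirrored interval, where $\tau'_{2^m-k}$ is an initial segment of $2(\sigma''\circ\mu_{2^m})\oplus a$ --- a tuple consisting of \emph{odd} values. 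To turn this into a reflection of $\psi_{2^{m+1}}^{\rho}$ itself on $[0,1/2]$ you must identify $\psi_{2^{m+1}}^{(2(\sigma''\circ\mu_{2^m})\oplus a,\ast)}$ with $\psi_{2^{m+1}}^{(2\sigma,\ast)}$ on $[0,1/2]$, and the tools you name do not suffice for that: the Swapping Lemma plus the inductive hypothesis give $\psi_{2^m}^{\sigma''\circ\mu_{2^m}}=\psi_{2^m}^{\sigma}$, and Lemma \ref{lem:doubling} then handles the factor $2$, but it cannot absorb the shift $\oplus a$, since its hypothesis concerns tuples of the form $2\sigma$ (even values only). The paper closes exactly this hole with the Symmetry Lemma (Lemma \ref{lem:SameDisc}): adding a constant modulo $2^{m+1}$ leaves the $\psi$-function unchanged, hence all tuples in $2\mathcal{P}_m\oplus a$ (and their swapped versions) share the common $\psi$ on $[0,1/2]$. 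Your proposal never invokes Lemma \ref{lem:SameDisc}; attributing the identification to the Swapping Lemma ``together with the fact that their sum is what matters'' leaves the odd shift unaccounted for, and the vague reference to ``keeping track of which shift each statement refers to'' is precisely the step that needs this lemma.

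A smaller point: you appeal to ``the remark following Lemma \ref{lem:swap2}'' as if the reflection property of $\psi_{2^{m+1}}^{\rho}$ about $x=1/2$ were an independently available fact, but in the paper that remark is itself a by-product of the induction you are running (it needs the inductive hypothesis, the Swapping Lemma, the doubling step \emph{and} the shift-invariance), so it cannot be cited as input; your last paragraph essentially re-derives it, which is fine, but only once the missing Lemma \ref{lem:SameDisc} step is inserted. With that insertion your argument becomes the paper's proof.
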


\begin{proof}
We prove the theorem by induction on $m$. The theorem is trivially true for $m=1$. Now let $m=2$. It is easy to see that both permutations give rise to the same $\psi$-function. To turn to the induction step, we assume that the assertion is true up to $m$. In particular this means that all permutations in the set $ \mathcal{P}_m$ generate permuted van der Corput sequences with identical $\psi$-function. 

By Lemma \ref{lem:doubling} all tuples in $2  \mathcal{P}_m$ have identical $\psi$-functions for $x \in [0,1/2]$. 
By Lemma \ref{lem:SameDisc} adding a constant modulo $2^m$ to the permutations in $ \mathcal{P}_m$ does not change the $\psi$-function. Hence, also all tuples in $2  \mathcal{P}_m \oplus a$ have identical $\psi$-functions for $x \in [0,1/2]$.
Furthermore, by Lemma \ref{lem:swap} swapping a permutation does not change its $\psi$-function. Hence, our considerations also apply to all swapped permutations $\sigma \circ \mu_{2^m}$.

Now take arbitrary $\sigma, \tilde{\sigma} \in \mathcal{P}_m$ and build a permutation $(2 \sigma, 2\tilde{\sigma} \oplus a) \in \mathcal{P}_{m+1}$.
We already know that the $\psi$-functions of all such permutations are identical for $x \in [0,1/2]$.
Now let $\tilde{\tau}_k$ denote the first $k$ elements of $2\tilde{\sigma} \oplus a$.
By Lemma \ref{lem:swap2} we know that $\tilde{\tau}_k \cup \{0,2,\ldots, 2^m-2\}$ is $(m+1)$-inverse to $\tilde{\tau}'_{2^m-k}$. Hence, we can apply Lemma \ref{lem:BWsymmetry} to calculate the value of 
$$\psi_{2^{m+1}}^{(\{0,2,\ldots, 2^m-2 \} \cup \tau_k, \ast)}  \left( x \right),$$
for $x \in [\frac{2^m + k}{2^{m+1}},\frac{2^m + k+1}{2^{m+1}}[$.
However, we know that the swapped permutation $\tilde{\sigma} \circ \mu_{2^m}$ has the same $\psi$-function on $[0,1/2]$ as every other permutation in $\mathcal{P}_m$.
Thus, we obtain the values $\psi_{2^{m+1}}^{(2 \sigma, 2\tilde{\sigma} \oplus a)}$ on $[1/2,1]$ via the reflection of the $\psi$ function on $[0,1/2]$ of any other tuple in $2\mathcal{P}_m$.

Since we picked an  arbitrary permutation in $\mathcal{P}_{m+1}$ we conclude that every permutation in $\mathcal{P}_{m+1}$ has the same $\psi$-function.
\end{proof}

\begin{corollary} \label{cor:family}
For $m\geq 1$ with $\sigma_m$ as defined in Lemma \ref{lem:intr2} and $\sigma \in  \mathcal{P}_m$ we have $\psi_{2^m}^{\sigma}(x) = \psi_{2^m}^{\sigma_m}(x)$ for all $x \in [0,1]$, such that for all $N\geq 1$
$$ D_N(\mathcal{S}_{2^m}^{\sigma})=D_N(\mathcal{S}_2). $$
\end{corollary}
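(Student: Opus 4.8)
The statement to prove is Corollary \ref{cor:family}: for $m\geq 1$ with $\sigma_m$ as in Lemma \ref{lem:intr2} and $\sigma\in\mathcal{P}_m$, we have $\psi_{2^m}^{\sigma}(x)=\psi_{2^m}^{\sigma_m}(x)$ for all $x\in[0,1]$, whence $D_N(\mathcal{S}_{2^m}^{\sigma})=D_N(\mathcal{S}_2)$ for all $N\geq 1$.

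My plan is to assemble this directly from the pieces already in place. First, Lemma \ref{lem:intr2} tells me that $\sigma_m\in\mathcal{P}_m$ (this is stated right after the lemma), so $\sigma_m$ is just one particular member of $\mathcal{P}_m$. Then Theorem \ref{thm:family} says that all members of $\mathcal{P}_m$ generate the same $\psi$-function, i.e.\ $\psi_{2^m}^{\sigma}=\psi_{2^m}^{\sigma'}$ on $[0,1]$ for any $\sigma,\sigma'\in\mathcal{P}_m$; applying this with $\sigma'=\sigma_m$ gives the first claim $\psi_{2^m}^{\sigma}=\psi_{2^m}^{\sigma_m}$.

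For the discrepancy identity, I would chain two facts. On one hand, Lemma \ref{lem:intr2} also gives $\psi_{2^m}^{\sigma_m}(x)=F_m^{\sigma_1}(x)$ for all $x\in[0,1]$, where $\sigma_1=(0,1)$ and $F_m^{\sigma_1}(x)=\sum_{j=0}^{m-1}\psi_2^{\sigma_1}(x\,2^j)$ by \eqref{Fn}. On the other hand, the classical van der Corput sequence is $\mathcal{S}_2=\mathcal{S}_2^{\sigma_1}$, and Faure's formula gives $N D_N(\mathcal{S}_2)=\sum_{j=1}^{\infty}\psi_2^{\sigma_1}(N/2^j)$, while the same formula applied in base $2^m$ gives $N D_N(\mathcal{S}_{2^m}^{\sigma})=\sum_{j=1}^{\infty}\psi_{2^m}^{\sigma}(N/2^{mj})$. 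Substituting $\psi_{2^m}^{\sigma}=\psi_{2^m}^{\sigma_m}=F_m^{\sigma_1}$ into the latter and expanding $F_m^{\sigma_1}(N/2^{mj})=\sum_{i=0}^{m-1}\psi_2^{\sigma_1}(N\,2^i/2^{mj})=\sum_{i=0}^{m-1}\psi_2^{\sigma_1}(N/2^{mj-i})$, I see that as $j$ ranges over $\NN$ and $i$ over $\{0,\dots,m-1\}$ the exponents $mj-i$ run exactly once over all positive integers; hence the double sum collapses to $\sum_{\ell=1}^{\infty}\psi_2^{\sigma_1}(N/2^{\ell})=N D_N(\mathcal{S}_2)$. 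Dividing by $N$ yields $D_N(\mathcal{S}_{2^m}^{\sigma})=D_N(\mathcal{S}_2)$ for all $N\geq 1$.

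The only nontrivial ingredient is Theorem \ref{thm:family}, which is already proved in the excerpt, so there is no genuine obstacle here — the corollary is a bookkeeping exercise. The one point that needs a moment of care is the reindexing of the double series: one must check that it converges absolutely (the $\psi$-functions are bounded and the formulas are known to sum to finite values, as noted after Faure's theorem) so that rearrangement is legitimate, and that the map $(j,i)\mapsto mj-i$ is a bijection from $\NN\times\{0,\dots,m-1\}$ onto $\NN$, which is immediate since every positive integer has a unique representation $mj-i$ with $j\geq 1$ and $0\leq i\leq m-1$. With that observation the proof is complete.
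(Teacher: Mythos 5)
Your argument is correct and is essentially the paper's intended route: the corollary is stated without a separate proof precisely because it follows by combining Theorem \ref{thm:family} (all $\sigma\in\mathcal{P}_m$ share one $\psi$-function), Lemma \ref{lem:intr2} ($\sigma_m\in\mathcal{P}_m$ and $\psi_{2^m}^{\sigma_m}=F_m^{\sigma_1}$), and Faure's formula $N D_N(\mathcal{S}_b^{\sigma})=\sum_{j\geq 1}\psi_b^{\sigma}(N/b^j)$, exactly as you do; your reindexing via the bijection $(j,i)\mapsto mj-i$ is legitimate (all terms are nonnegative, and the identity of Lemma \ref{lem:intr2} extends to all of $\RR$ by $1$-periodicity of both sides). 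The only shortcut you could have taken is the remark after Lemma \ref{intr}: the iterated intrication $\sigma_m$ generates literally the same sequence as $(0,1)$, i.e.\ $\mathcal{S}_{2^m}^{\sigma_m}=\mathcal{S}_2$, which gives the discrepancy identity without any series manipulation.
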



\section{Relation to algorithm and proof of Theorem \ref{thm:main}}
\label{sec:algorithm}

In the following we relate the results of Section \ref{sec:vdc} and \ref{sec:family2} to the greedy algorithm and prove Theorem \ref{thm:main}.
\begin{definition}
Let $A>0$ be a constant. A real-valued function $f:[0,A] \rightarrow \RR$ is admissible if (i) it is twice differentiable on $(0,A)$; (ii) $f(x)=f(A-x)$ for $x \in [0,A]$; (iii) $f''(x)>0$ for all $x \in (0,A)$. 
\end{definition}

\begin{lemma} \label{lem:function}
Let $f$ be an admissible function.
Then $g(x):= f(x) + f(A/2+x)$ for $x \in [0,A/2]$ attains a unique minimum at $x=A/4$ and $g$ as a function on $[0,A/2]$ has the same properties as $f$ on $[0,A]$; i.e. $g$ is admissible on $[0,A/2]$.
\end{lemma}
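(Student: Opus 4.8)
The plan is to treat the two claims — that $g$ is admissible on $[0,A/2]$ and that it attains a unique minimum at $A/4$ — in that order, the second following from strict convexity together with a reflection symmetry of $g$ about its midpoint.

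First I would verify the three admissibility conditions for $g$ on $[0,A/2]$. Twice-differentiability on $(0,A/2)$ is immediate, since $x\mapsto f(x)$ and $x\mapsto f(A/2+x)$ are $f$ composed with affine maps carrying $(0,A/2)$ into $(0,A)$, where $f$ is twice differentiable; for the same reason $g''(x)=f''(x)+f''(A/2+x)>0$ on $(0,A/2)$, because both $x$ and $A/2+x$ lie in $(0,A)$. The reflection property is a one-line substitution using $f(t)=f(A-t)$ twice:
\[
g(A/2-x)=f(A/2-x)+f(A-x)=f\big(A-(A/2-x)\big)+f(x)=f(A/2+x)+f(x)=g(x),
\]
so $g(x)=g(A/2-x)$ for $x\in[0,A/2]$, which is precisely symmetry of $g$ about $A/4$, i.e. the analogue for $g$ of condition $f(x)=f(A-x)$. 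Hence $g$ is admissible on $[0,A/2]$.

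For the location of the minimum I would combine strict convexity ($g''>0$) with the symmetry just established. Differentiating $g(x)=g(A/2-x)$ on $(0,A/2)$ gives $g'(x)=-g'(A/2-x)$, so $g'(A/4)=-g'(A/4)=0$; and since $g'$ is strictly increasing on $(0,A/2)$, we get $g'<0$ on $(0,A/4)$ and $g'>0$ on $(A/4,A/2)$. Thus $g$ is strictly decreasing then strictly increasing, giving the unique minimizer $A/4$ on the open interval (and on the closed interval as well, once one notes $g$ is continuous there with $g(0)=f(0)+f(A/2)=g(A/2)>g(A/4)$). An equivalent, derivative-free phrasing: $g$ is strictly convex and continuous on the compact interval $[0,A/2]$, hence has a unique minimizer $x^\ast$; applying $g(x)=g(A/2-x)$ at $x=x^\ast$ shows $A/2-x^\ast$ is also a minimizer, so $x^\ast=A/2-x^\ast=A/4$.

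I do not expect a genuine obstacle here; this is a routine convexity argument. The only minor points to keep track of are that the affine substitutions always keep the argument of $f$ inside the interval on which the relevant hypotheses on $f$ are assumed, and the behaviour of $g$ at the endpoints $0$ and $A/2$, which is harmless in the intended setting (in the motivating example $f(x)=1-\log(2\sin(\pi x))$ one in fact has $g\to+\infty$ at both endpoints, so the unique minimum is trivially interior).
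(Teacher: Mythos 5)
Your proposal is correct and follows essentially the same route as the paper's proof: use the symmetry $f(x)=f(A-x)$ to get $g(x)=g(A/2-x)$ and hence $g'(A/4)=0$, and use $g''=f''(\cdot)+f''(A/2+\cdot)>0$ for admissibility and uniqueness of the minimum. Your write-up is in fact slightly more complete than the paper's (explicit uniqueness via monotonicity of $g'$ and attention to the endpoints), but there is no substantive difference in approach.
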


\begin{proof}
We have that $g'(x)= f'(x) + f'(A/2+x)$. Hence, $g'(x)=0$ if and only if $f'(x) = -f'(A/2+x)$. Since $f(x)=f(A-x)$ we have that $f'(x)=-f'(A-x)$ from which we see that $g'(A/4)=0$. Since $f'$ is increasing, we have that $g''$ is positive and therefore $x=A/4$ is a minimum of $g$.

Moreover, for $x \in [0,A/2]$, we have that 
$$g(x)=f(x) + f(A/2+x) = f(A-x) + f(A-A/2-x)=g(A/2-x)$$ 
and since $f'$ is increasing, also $g'$ is increasing on $[0,A/2]$.
\end{proof}

Due to symmetry we can get a copy of $g$ in the interval $[A/2,A]$.
\begin{definition}
Given an admissible function $f$ on $[0,1]$ and a point set $\mathcal{P}=\{ x_0, \ldots, x_{N-1}\} \subset [0,1]$ we define
$$f_k(x,\mathcal{P}):=f( |x-x_k| ),$$
for all $0\leq k \leq N-1.$
If $\mathcal{P}$ is clear from the context we simply write $f_k(x)$.
\end{definition}
The functional studied by Steinerberger \cite{stefan1, stefan2} is an example of an admissible function; for $x\in [0,1]$
$$ \tilde{f}(x) := 1 - \log( 2 \sin (\pi x) );$$
see Figure \ref{fig:functions} for an illustration.

\begin{figure}
\includegraphics[scale=0.4]{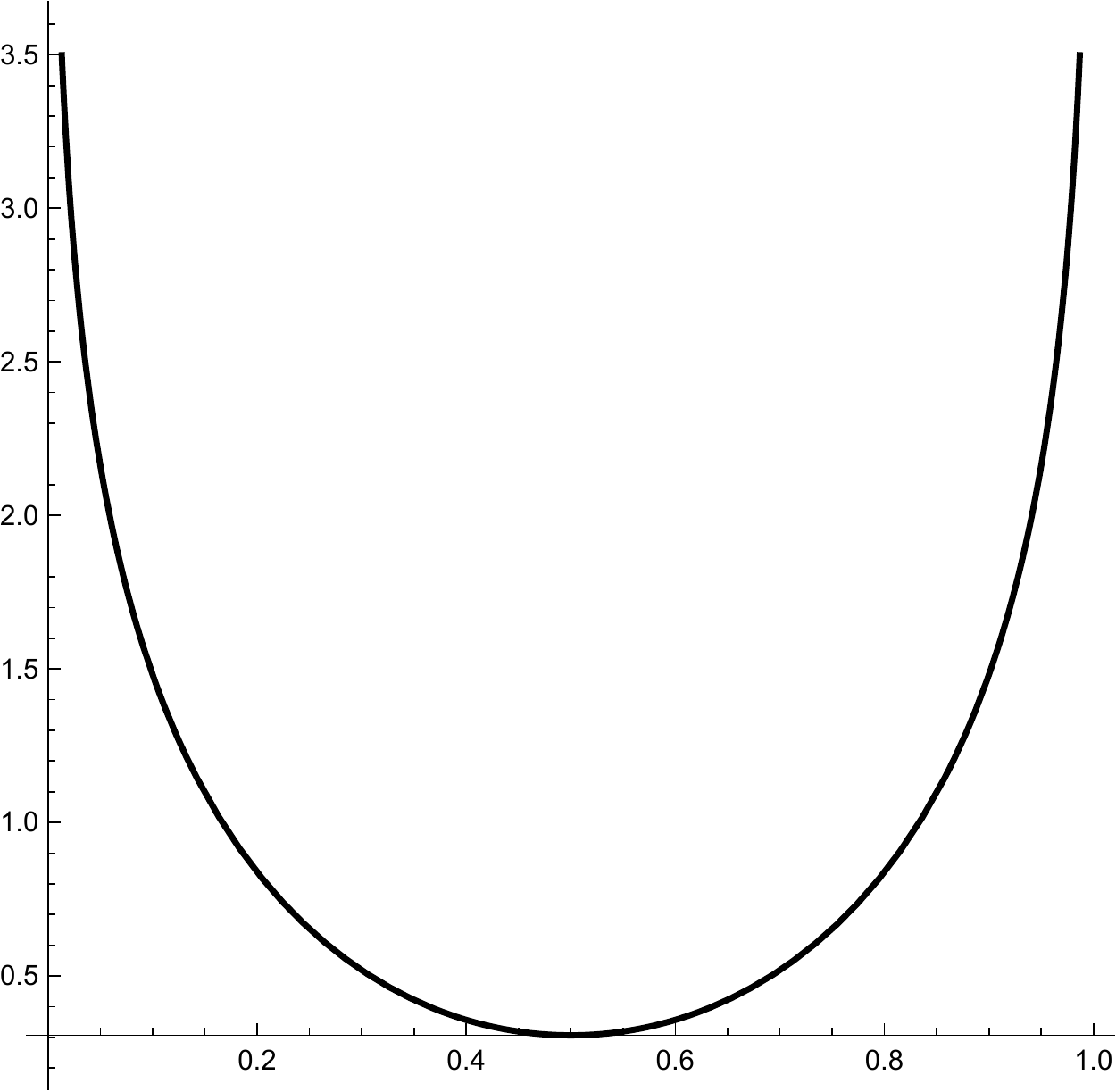} \quad \quad
\includegraphics[scale=0.4]{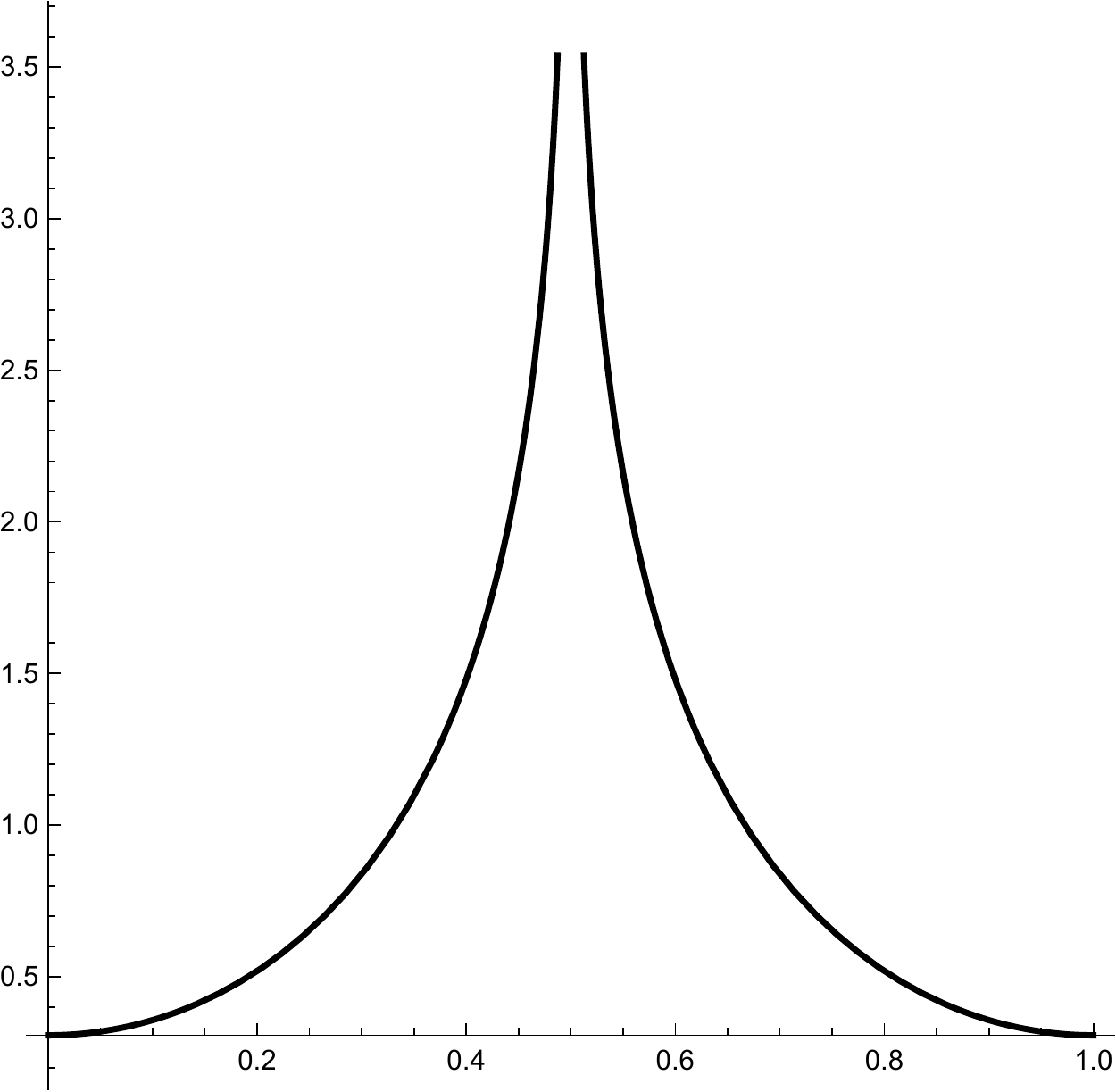}
\caption{The function $\tilde{f}_0$, the function $\tilde{f}_1$ with $x_1=1/2$.} \label{fig:functions}
\end{figure}

\begin{lemma} \label{minima}
Let $f$ be an admissible function on $[0,1]$.
Let $N=2^m$ and let $M_N=\{S_2(0), \dots, S_2(N-1)\}=\{ x_0, \ldots, x_{N-1}\}$ be the set of the first $N$ points of the classical van der Corput sequence. Define
$$ G_m (x):= \sum_{k=0}^{N-1} f_k (x). $$
Then $G_m(x)$ has $2^m$ minima at points of the form $ i/2^m + 1/2^{m+1}$ for $0\leq i \leq N-1$. Moreover, $G_m$ restricted to intervals $[j/2^m, (j+1)/2^m]$ is again admissible.
\end{lemma}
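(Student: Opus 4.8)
\emph{Overall plan.} The key point is that $M_{2^m}$ is the equispaced grid $\{k/2^m:0\le k<2^m\}$, so $G_m$ becomes a periodic function and everything reduces to analysing it on a single fundamental cell. I would then identify the restriction of $G_m$ to that cell with an admissible function in the sense of the definition preceding Lemma \ref{lem:function}, and read off the minimum from the convexity/symmetry structure.

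\emph{Step 1: periodicity and reduction to one cell.} First I would record that, for $0\le n<2^m$, only the binary digits $a_0,\dots,a_{m-1}$ of $n$ occur, so $n\mapsto S_2(n)$ is a bijection of $\{0,\dots,2^m-1\}$ onto $\{c/2^m:0\le c<2^m\}$; hence $M_{2^m}=\{0,2^{-m},2\cdot 2^{-m},\dots,(2^m-1)2^{-m}\}$. Extend $f$ to a $1$-periodic function $\bar f$ on $\RR$; assumption (i) makes $\bar f$ even and gives $f(|x-x_k|)=\bar f(x-x_k)$ for $x,x_k\in[0,1]$. Therefore
$$ G_m(x)=\sum_{k=0}^{2^m-1}\bar f\!\left(x-\tfrac{k}{2^m}\right), $$
a sum of $\bar f$-translates over a full set of residues of $2^{-m}\ZZ/\ZZ$, and so $G_m$ is $2^{-m}$-periodic. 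It thus suffices to understand $G_m$ on $[0,2^{-m}]$. For $x\in[0,2^{-m}]$ the arguments $x-k/2^m$ lie in $(-1,1]$; reducing them modulo $1$ and using $\bar f|_{[0,1]}=f$ yields
$$ G_m(x)=\sum_{j=0}^{2^m-1} f\!\left(x+\tfrac{j}{2^m}\right)=:H(x),\qquad x\in[0,2^{-m}]. $$

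\emph{Step 2: $H$ is admissible on $[0,2^{-m}]$, hence minimised at the midpoint.} I would check the three conditions: each summand $x\mapsto f(x+j/2^m)$ is twice differentiable on $(0,2^{-m})$ since there $x+j/2^m\in(0,1)$, so $H$ is; $H''(x)=\sum_j f''(x+j/2^m)>0$ on $(0,2^{-m})$ by (iii); and the reindexing $j\mapsto 2^m-1-j$ together with $f(y)=f(1-y)$ gives $H(2^{-m}-x)=\sum_j f((j+1)/2^m-x)=\sum_i f(i/2^m+x)=H(x)$, i.e. $H$ is symmetric about $x=2^{-(m+1)}$. Finally, exactly as in the proof of Lemma \ref{lem:function}, an admissible function on $[0,A]$ has a unique minimum at $A/2$: symmetry forces $H'(A/2)=0$, and $H''>0$ makes $H'$ strictly increasing, so $A/2$ is the only interior critical point and it is a minimum (and $H$ is strictly decreasing on $[0,A/2]$). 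With $A=2^{-m}$ this gives that $G_m|_{[0,2^{-m}]}=H$ is admissible with unique minimum at $2^{-(m+1)}=0/2^m+1/2^{m+1}$.

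\emph{Step 3: conclusion.} By $2^{-m}$-periodicity, the restriction of $G_m$ to $[i/2^m,(i+1)/2^m]$ is the translate $H(\,\cdot-i/2^m\,)$, hence admissible on that interval with unique minimum at $i/2^m+1/2^{m+1}$, for $i=0,\dots,2^m-1$. Since $H$ is strictly decreasing on $[0,2^{-(m+1)}]$, at each grid point $i/2^m$ the value of $G_m$ exceeds the value at the neighbouring midpoints, so $i/2^m$ is a local maximum; thus these $2^m$ midpoints are precisely the local minima of $G_m$, which is the claim. The main obstacle is minor bookkeeping: one must track the modular reduction of the arguments in Step 1 and apply $f(y)=f(1-y)$ carefully in the symmetry computation for $H$. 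One footnote: if $f$ is allowed to be $+\infty$ at $0$ (as for $\tilde f(x)=1-\log(2\sin\pi x)$), ``admissible on $[i/2^m,(i+1)/2^m]$'' should be read with the corresponding endpoint convention, which is harmless since all minima are interior. (Alternatively the lemma can be proved by induction on $m$, pairing each point with its antipode, invoking Lemma \ref{lem:function} and rescaling $[0,1/2]$ onto $[0,1]$; but the periodicity argument above is more direct.)
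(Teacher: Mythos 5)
Your proof is correct, but it takes a genuinely different route from the paper. The paper argues by induction on $m$: it splits $M_{2^{m+1}}$ into $M_{2^m}$ and the shifted copy $\tilde{M}_{2^m}=M_{2^m}+2^{-(m+1)}$ (self-similarity, Lemma \ref{selfsimilar}), writes $G_{m+1}=G_m+\tilde{G}_m$ on each interval $[j/2^{m+1},(j+1)/2^{m+1}]$, and invokes Lemma \ref{lem:function} to conclude admissibility and locate the new minima at the quarter points; this doubling structure is deliberately the same machinery reused for general $N$ in Theorem \ref{thm:vdc}. You instead observe at the outset that $M_{2^m}$ is exactly the equispaced dyadic grid, so that after periodizing $f$ the function $G_m$ is a $2^{-m}$-periodic sum of translates; you then reduce to a single cell, verify admissibility of the restriction directly (differentiability and $H''>0$ termwise, symmetry by the reindexing $j\mapsto 2^m-1-j$ combined with $f(y)=f(1-y)$), and get the unique interior minimum at the midpoint from symmetry plus strict convexity --- essentially re-deriving the content of Lemma \ref{lem:function} rather than citing it, and dispensing with induction altogether. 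Your bookkeeping checks out: the bijectivity of $S_2$ on $\{0,\dots,2^m-1\}$, the evenness of the periodization (using $f(0)=f(1)$), the modular reduction giving $H(x)=\sum_j f(x+j/2^m)$ on $[0,2^{-m}]$, and the translation argument for the other cells are all sound, and your endpoint footnote is harmless (indeed unnecessary, since the paper assumes $f:[0,1]\to\RR$). What each approach buys: yours is shorter and more transparent for this specific lemma, exploiting the special fact that the first $2^m$ van der Corput points are equispaced; the paper's inductive proof is less slick here but builds exactly the decomposition-and-shift argument that is then applied to the non-equispaced sets $M_N$ for general $N$, where your periodicity shortcut is no longer available.
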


\begin{proof}
We prove the Lemma by induction on $m$. Direct calculation shows that the claim is true for $m=1$; i.e. for the set $M_2=\{ 0,1/2\}$.

Assume that the claim is true for $m$. Then the function $G_m$ defined via the set $M_{2^m}$ attains $2^m$ minima at points  $i/2^m + 1/2^{m+1}$. Now it is easy to see that shifting all points in $M_{2^m}$ by a constant, simply shifts the graph of the corresponding function $G_m$ but does not change its qualitative behaviour since we are working on the one-dimensional torus; see Figure \ref{fig:functions}.
So we can define a second set 
$$\tilde{M}_{2^m} = M_{2^m} + 1/2^{m+1} = \{S_2(2^m), \ldots, S_2(2^{m+1}-1) \}.$$ 
By our assumption the corresponding function $\tilde{G}_m$ has again $2^m$ minima which all lie at points of the form $i/2^m$ for $i\leq 0 < 2^m$; i.e. the minima of $G_m$ are the points in $\tilde{M}_{2^m}$ and the minima of $\tilde{G}_m$ are the points of $M_{2^m}$; see Figure \ref{fig:functions2} for an illustration.

Next, we define on every interval $[j/2^{m+1}, (j+1)/2^{m+1}]$ the function $G_{m+1}(x):= G_m(x) + \tilde{G}_m(x)$. Applying Lemma \ref{lem:function} in each interval $[j/2^{m+1}, (j+1)/2^{m+1}]$ proves the claim.
\end{proof}

\begin{figure}
\includegraphics[scale=0.5]{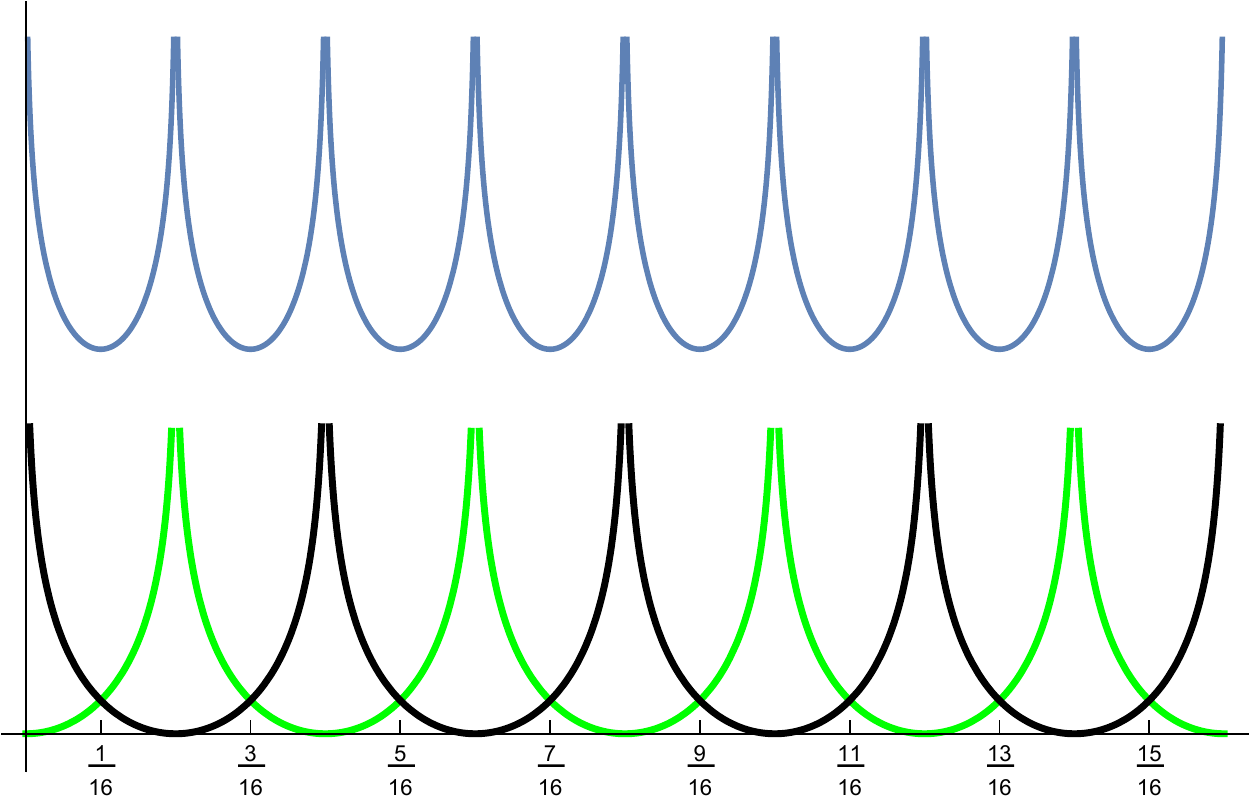} 
\caption{The functions $G_2$ (black), $\tilde{G}_2$ (green) and $G_3$ (blue).} \label{fig:functions2}
\end{figure}

\begin{theorem} \label{thm:vdc}
For arbitrary $N$ let $M_N = \{S_2(0), \dots, S_2(N-1)\}=\{ x_0, \ldots, x_{N-1} \}$ be the set of the first $N$ points of the van der Corput sequence. If $f$ is an admissible function on $[0,1]$ and if we use $M_N$ as a seed for the greedy algorithm defined via \eqref{functional}, then the $(N+1)$st point of the van der Corput sequence is among the minima.
\end{theorem}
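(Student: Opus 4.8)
The plan is to reduce the general case of arbitrary $N$ to the pure-power case $N=2^m$ which is handled by Lemma \ref{minima}, using the self-similarity of the van der Corput sequence from Lemma \ref{selfsimilar}. First I would write $N$ in binary, $N = 2^{m_1} + 2^{m_2} + \cdots + 2^{m_k}$ with $m_1 > m_2 > \cdots > m_k \geq 0$, so that the seed $M_N$ decomposes as a disjoint union of translated dyadic blocks: by Lemma \ref{selfsimilar}, for each $1 \leq \ell \leq k$ the block $M_{N_{\ell-1}, N_\ell}$ (where $N_\ell = \sum_{j=1}^{\ell} 2^{m_j}$, $N_0 = 0$) is a translate $M_{2^{m_\ell}} + c_\ell$ of the dyadic set of size $2^{m_\ell}$, with shift $c_\ell = \sum_{j=1}^{\ell - 1} 2^{-(m_j+1)}$. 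Correspondingly the energy splits as $\sum_{k=0}^{N-1} f_k(x) = \sum_{\ell=1}^{k} G^{(\ell)}(x)$, where $G^{(\ell)}$ is the energy of the $\ell$-th translated block. The $(N+1)$st van der Corput point is $S_2(N) = c_k + 2^{-(m_k+1)}$, i.e. the next point lands in the finest currently-present block, one level below it.

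Next I would locate the minima of each summand $G^{(\ell)}$. Since $G^{(\ell)}$ is, up to the torus translation by $c_\ell$, exactly the function $G_{m_\ell}$ of Lemma \ref{minima} applied to an admissible $f$, it has its $2^{m_\ell}$ minima at the points $c_\ell + i/2^{m_\ell} + 1/2^{m_\ell+1}$, $0 \leq i < 2^{m_\ell}$, and on each interval $[c_\ell + j/2^{m_\ell}, c_\ell + (j+1)/2^{m_\ell}]$ (indices mod $1$ on the torus) it is admissible. The key observation is an alignment fact: because $m_1 > m_2 > \cdots > m_k$, the grid of scale $2^{-m_\ell}$ refines all the coarser grids, and in particular on the finest interval $I := [c_k, c_k + 2^{-m_k}]$ — the interval bracketing the target point $S_2(N)$ — every summand $G^{(\ell)}$ with $\ell < k$ is admissible (it contains no seed point of those blocks in its interior, and these functions restricted to $I$ are of the form $f(\,\cdot\, - p)$ or a sum of such with $p$ outside $I$, hence convex and symmetric about the midpoint $c_k + 2^{-m_k-1}$), while the last summand $G^{(k)}$ restricted to $I$ is precisely $G_{m_k}$-type and by Lemma \ref{minima} is admissible on $I$ with its minimum at the midpoint $c_k + 2^{-m_k-1} = S_2(N)$.

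Then I would invoke (a minor extension of) Lemma \ref{lem:function}: a finite sum of admissible functions on a common interval $[\alpha,\beta]$, each symmetric about the midpoint $(\alpha+\beta)/2$, is again admissible on $[\alpha,\beta]$ and hence attains its unique minimum on that interval at the midpoint. Applying this on $I$ to $\sum_{\ell=1}^{k} G^{(\ell)}|_I$ shows that the global energy, restricted to $I$, is minimized at $S_2(N)$. To finish I must argue that this local minimum over $I$ is in fact a global minimum over all of $[0,1)$: by the same reasoning the energy restricted to any interval of the coarsest relevant grid is admissible and symmetric, so its minima occur at the corresponding midpoints, and a comparison of values across the $2^{m_k}$ translates of $I$ — using the torus symmetry and the fact that each translate carries an identical local energy profile up to the contributions of far-away points — shows the minimum value is attained (at least) at $S_2(N)$. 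Since the statement only claims $S_2(N)$ is \emph{among} the minima, ties are permitted and nothing more is needed.

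The main obstacle is the global-versus-local step: Lemma \ref{minima} cleanly handles the pure-power case where the whole picture is a single self-similar pattern, but for general $N$ the energy landscape is a superposition of patterns at different scales, and one must check that no interval other than the ones bracketing $S_2(N)$ (and its symmetry-equivalent copies) produces a strictly smaller minimal value. I expect this to follow from a careful bookkeeping of which summands are admissible on which grid intervals together with the convexity/symmetry inheritance lemma, but it is the place where the argument requires genuine care rather than a direct appeal to the earlier lemmas.
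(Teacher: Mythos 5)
Your decomposition of the seed into translated dyadic blocks via Lemma \ref{selfsimilar}, and the identification of the minima of each block energy via Lemma \ref{minima}, match the paper's setup, but the way you assemble these pieces has two problems. First, the ``alignment fact'' is wrong as stated: with your convention $m_1>\dots>m_k$, the interval $I=[c_k,c_k+2^{-m_k}]$ has length $2^{-m_k}$, the \emph{coarsest} scale present, and it does contain seed points of the bigger blocks in its interior (for $N=11=2^3+2^1+2^0$ one has $m_k=0$, so $I$ is the whole circle), so the summands $G^{(\ell)}$ with $\ell<k$ oscillate with period $2^{-m_\ell}\ll 2^{-m_k}$ on $I$ and are certainly not admissible there; the seed-point-free interval around $S_2(N)$ is the one of length $2^{-m_1}$ (finest spacing), so the scales in your convexity step are swapped. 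Second, and more seriously, even after repairing this the argument hinges on the local-to-global step that you yourself flag as unresolved, and the sketched resolution does not work: the translates of the bracketing interval do \emph{not} carry identical energy profiles ``up to far-away contributions'' --- those contributions are exactly what distinguishes them, and indeed the global minima of $H_N$ are only the $2^{m_k}$ midpoints that are simultaneously minima of \emph{all} blocks, not all midpoints of the finest grid. So as written the proposal does not prove the theorem.

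The paper's proof shows you do not need any local convexity comparison at all. Since each block is a full coset $c_\ell+2^{-m_\ell}\ZZ/\ZZ$, Lemma \ref{minima} (shifted via Lemma \ref{selfsimilar}) gives the minima of $G^{(\ell)}$ at $c_\ell+i\,2^{-m_\ell}+2^{-(m_\ell+1)}$, and a short computation with the shifts shows these minima sets are \emph{nested}: every minimum of the energy of a smaller block is also a minimum of the energy of every larger block. Hence the intersection of all the minima sets is nonempty (it consists of the $2^{m_k}$ points $i\,2^{-m_k}+\sum_{h=1}^{k}2^{-(m_h+1)}$ in your indexing), and a point lying in this intersection minimizes every summand globally, so it minimizes $H_N=\sum_\ell G^{(\ell)}$ globally; conversely any global minimizer of $H_N$ must minimize each summand, so the set of global minima of $H_N$ \emph{is} this intersection. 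Since $S_2(N)$ is a half-grid point of every block's coset, it lies in the intersection, which finishes the proof. If you replace your interval-restriction step by this observation --- that $S_2(N)$ is simultaneously a global minimizer of every block energy --- your argument closes without any global-versus-local bookkeeping.
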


\begin{proof}
We write $N=N_1=\sum_{j=1}^{k} 2^{m_j}$ in its binary representation such that $m_k>m_{k-1}>\ldots >m_1$ is a decreasing sequence of non-zero integers. According to this representation we split the set $M_N$ into disjoint subsets such that there is one subset $M^j$ containing $2^{m_j}$ points for every $m_j$; in particular we write
\begin{align*} 
M_N &=M^k \cup M^{k-1} \cup \ldots \cup M^1 \\
&=M_{N_k} \cup M_{N_{k-1}, N_k} \cup \ldots \cup M_{N_1, N_2}
\end{align*}
in which $N_i := \sum_{j=i}^k 2^{m_j}$.
Due to the self similarity of the van der Corput sequence (see Lemma \ref{selfsimilar}) we get a shifted version $\tilde{G}_{m_j}$ of the function $G_{m_j}$ for each set $M^j$; see Figure \ref{fig:Nelf} and Figure \ref{fig:Nelf2}.
To conclude our proof we show that for each $k>j\geq 1$ the set of minima of $\tilde{G}_{m_j}$ is contained in the set of minima of the function $\tilde{G}_{m_{j+1}}$.
From Lemma \ref{selfsimilar} we know that 
\begin{align*}
M^j =M_{N_j, N_{j+1}} =M_{2^{m_j}} + \sum_{h=j+1}^k \frac{1}{2^{m_h+1}}
\end{align*}
and $M^{j+1} =M_{N_{j+1}, N_{j+2}}$.
Furthermore from Lemma \ref{minima} we know that the minima of $\tilde{G}_{m_j}$ are at the points
\begin{equation}\label{minJ} \frac{i}{2^{m_j}} + \frac{1}{2^{m_j+1}} + \sum_{h=j+1}^k \frac{1}{2^{m_h+1}}, \end{equation}
whereas the minima of $\tilde{G}_{m_{j+1}}$ are at
\begin{equation}\label{minJJ} \frac{i}{2^{m_{j+1}}} + \frac{1}{2^{m_{j+1}+1}} + \sum_{h=j+2}^k \frac{1}{2^{m_h+1}}. \end{equation}
Now it is easy to see that all arguments of minima of the form \eqref{minJ} are also of the form \eqref{minJJ}.

The set of arguments of global minima of the resulting function $H_N=\sum_{h=0}^{N-1} f_h$ for $M_N$ is then the intersection of the set of all arguments of minima of the functions $\tilde{G}_{m_j}$. In particular, this set contains exactly the $2^{m_1}$ points of the form
\begin{equation}\label{minJJJ} 
\frac{i}{2^{m_1}} + \frac{1}{2^{m_1+1}} + \sum_{h=2}^k \frac{1}{2^{m_h+1}} = \frac{i}{2^{m_1}} + \sum_{h=1}^k \frac{1}{2^{m_h+1}}, \end{equation}
It follows from the definition of the van der Corput sequence that the $(N+1)$st point of the sequence is among them. 
\end{proof}

\begin{figure}
\includegraphics[scale=0.35]{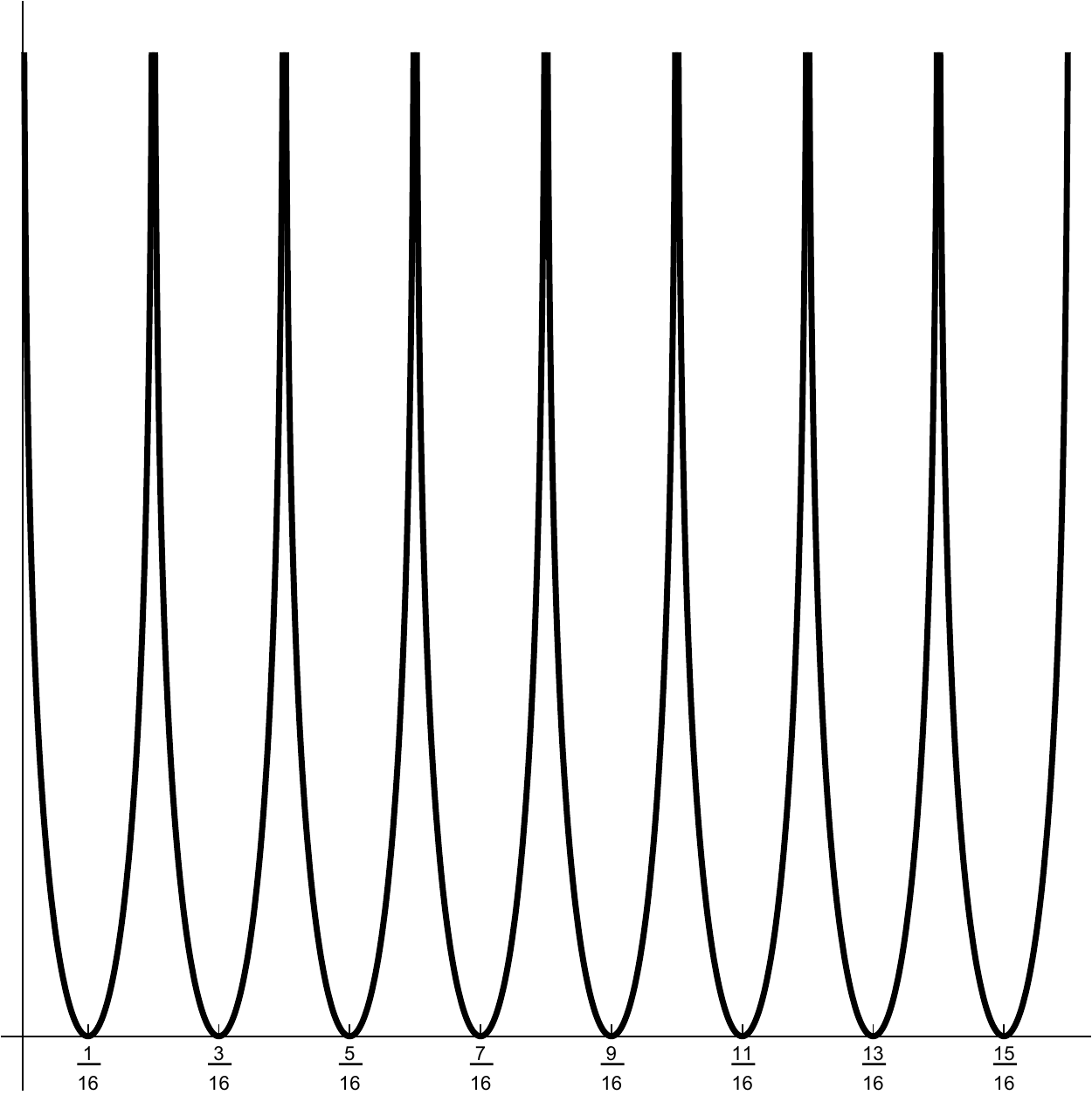} \quad \quad
\includegraphics[scale=0.35]{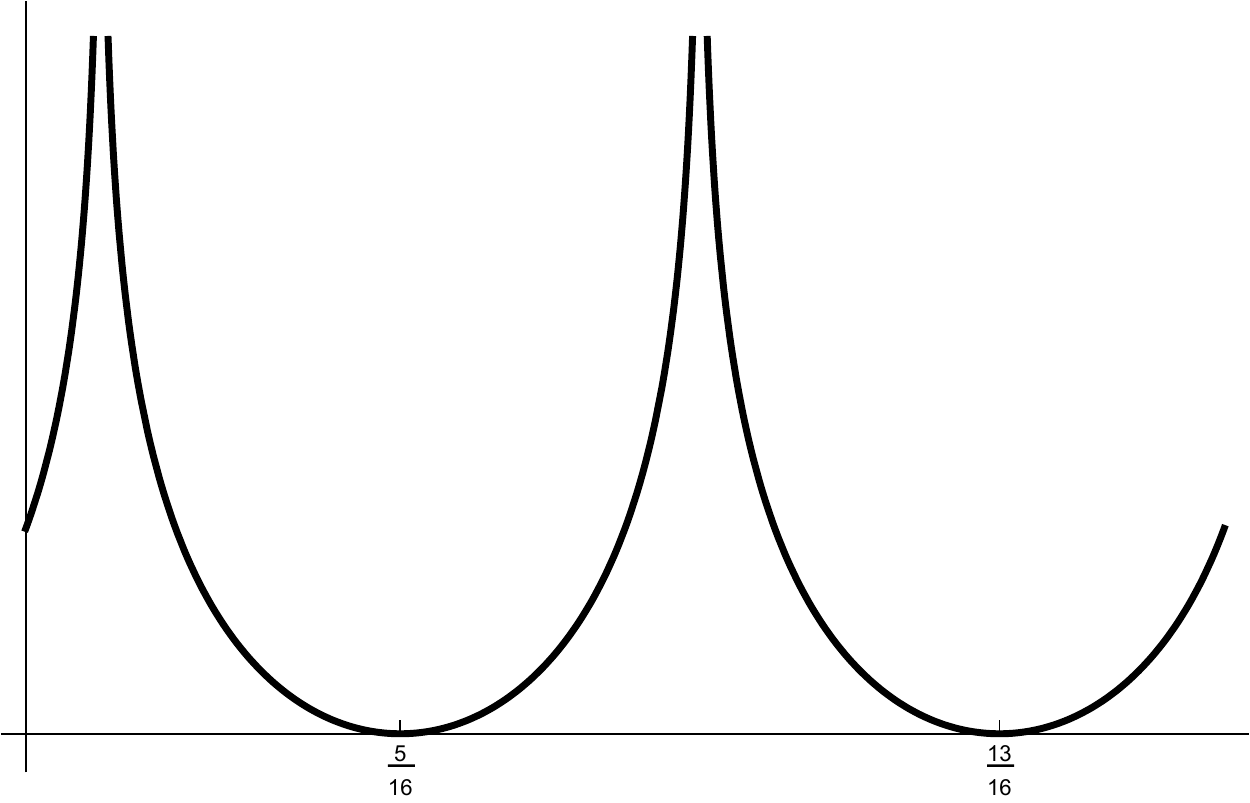} \quad \quad
\includegraphics[scale=0.35]{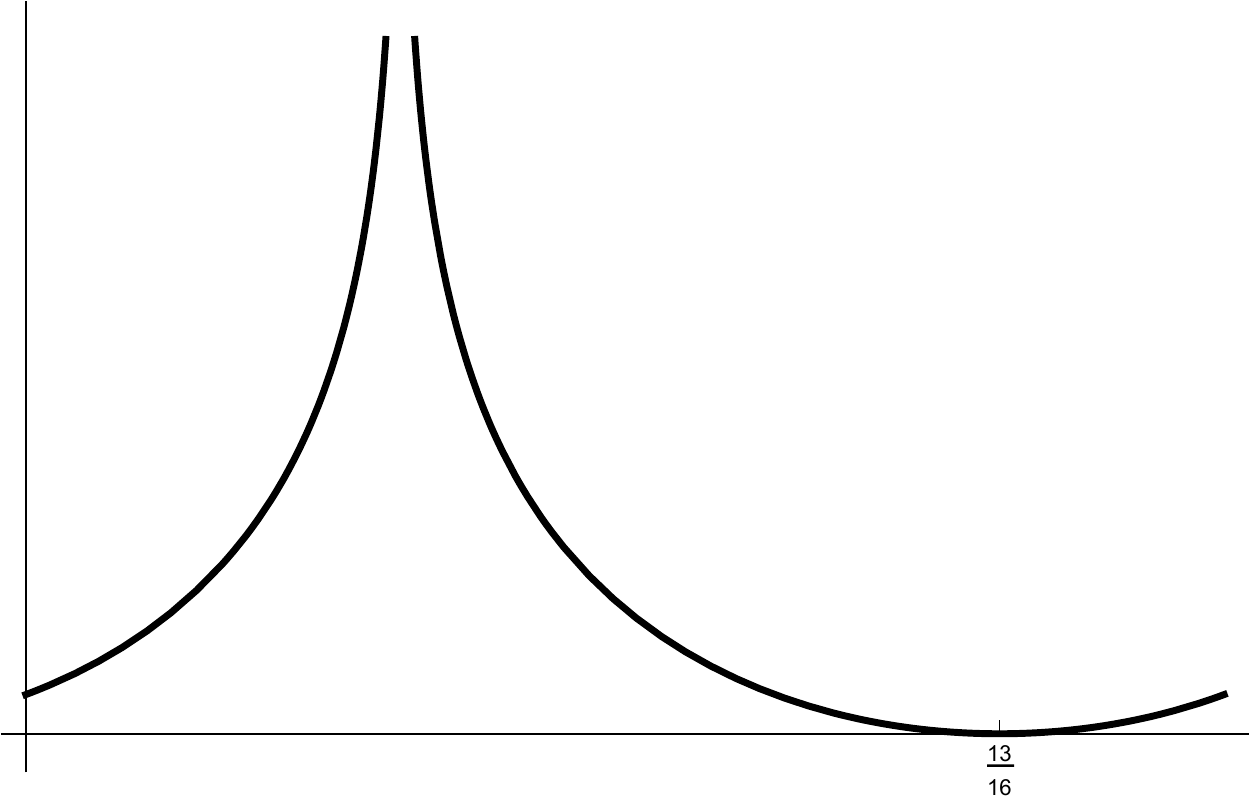} 
\caption{Example $N=11=2^3+2^1+2^0$. Functions of the set $M^3=\{0, 1/8,\ldots, 7/8 \}$, $M^1=\{ 1/16, 9/16\}$ and $M^0=\{ 5/16\}$.} \label{fig:Nelf}
\end{figure}

\begin{figure}
\includegraphics[scale=0.35]{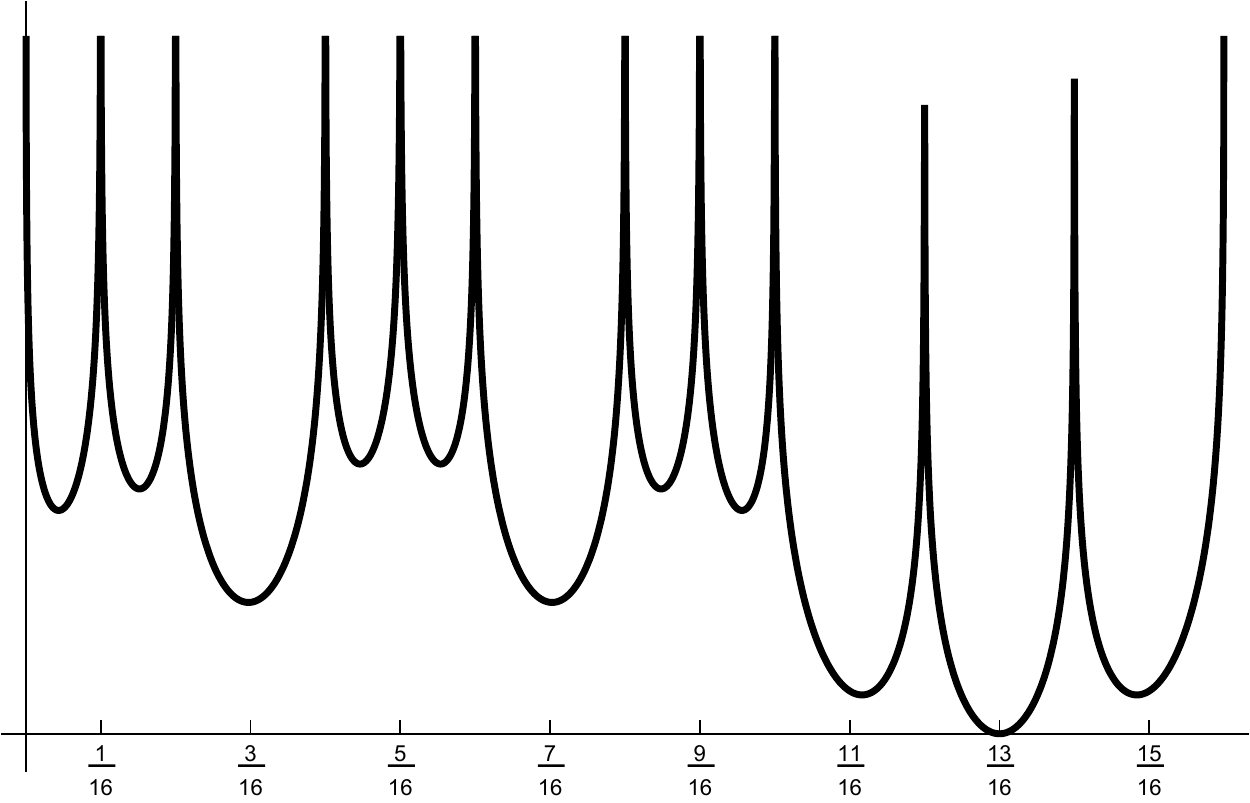} 
\caption{The function $H_{11}$.} \label{fig:Nelf2}
\end{figure}

Finally, we observe that every sequence generated by a permutation in $\mathcal{P}_m$ can be realised by the algorithm (when starting with $\{0\}$) and that for a fixed $2^{m-1}< N \leq 2^{m}$ every output of the algorithm can be found as initial segment of a sequence $\mathcal{S}_{2^m}^{\sigma}$ with $\sigma \in \mathcal{P}_m$. 
\begin{definition}
We call a set $\{x_0, x_1, \ldots, x_{N-1} \}$ of $N$ points \emph{admissible} if it can be obtained as output of the algorithm \eqref{functional} for an admissible function $f$ on $[0,1]$ when starting with the seed $\{0\}$.
\end{definition}

\begin{lemma} \label{lem:admissible}
If $N\geq 2$ and $\{x_0, x_1, \ldots, x_{N-1} \}$ is admissible then there exists an $m$ with $2^{m-1}< N \leq 2^m$ and a $\sigma \in \mathcal{P}_m$ with $x_i = S_{2^m}^{\sigma}(i)$ for all $0 \leq i \leq N-1$.
\end{lemma}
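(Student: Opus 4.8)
The plan is to argue by induction on $N$, with the key mechanism being a careful bookkeeping of where the minima of the energy functional sit after each step of the greedy algorithm. For $N=2$ the only admissible set is $\{0, 1/2\}$ by Lemma \ref{minima} (the single point $x_0=0$ forces $x_1 = 1/2$ as the unique minimum of $f(|x|)$ on $[0,1)$ since $f$ is symmetric and strictly convex), and this matches $S_{2}^{(0,1)}(i)$, so the base case holds with $m=1$, $\sigma = (0,1) \in \mathcal{P}_1$. For the inductive step, suppose $\{x_0, \ldots, x_{N-1}\}$ is admissible and pick $m$ with $2^{m-1} < N \le 2^m$; by the induction hypothesis there is a $\sigma \in \mathcal{P}_m$ with $x_i = S_{2^m}^{\sigma}(i)$ for $0 \le i \le N-1$. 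We must show that $x_N$, the next point produced by the greedy rule, keeps us inside the family: either $N < 2^m$ and the same $\sigma$ still works, or $N = 2^m$ and we move up to base $2^{m+1}$ with an appropriate $\sigma' \in \mathcal{P}_{m+1}$.

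The first main step is a structural description of the energy function $H_N = \sum_{k=0}^{N-1} f_k$ built from an initial segment $\{S_{2^m}^{\sigma}(0), \ldots, S_{2^m}^{\sigma}(N-1)\}$ of a permuted van der Corput sequence with $\sigma \in \mathcal{P}_m$. Because $\sigma$ has the recursive block structure $(2\tilde\sigma, 2\tilde\sigma' \oplus a)$, the first $N$ points of $\mathcal{S}_{2^m}^{\sigma}$ decompose — exactly as in the proof of Theorem \ref{thm:vdc} via Lemma \ref{selfsimilar2} — into dyadic blocks, each of which is a scaled and shifted copy of the full point set $M_{2^\ell}$ of the classical van der Corput sequence for various $\ell$. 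Applying Lemma \ref{minima} to each such block gives a shifted admissible function $\tilde G_\ell$ whose minima lie on a coset of $\frac{1}{2^\ell}\ZZ$; and, exactly as in Theorem \ref{thm:vdc}, the minima of the coarser blocks contain those of the finer ones, so the global minimum set of $H_N$ is the intersection, a coset of $\frac{1}{2^{\ell_{\min}}}\ZZ$ of size $2^{\ell_{\min}}$ where $2^{\ell_{\min}}$ is the least power of two appearing in the binary expansion of $N$. The second step is to identify which element of this minimum set the greedy algorithm actually selects and to check that, written in base $2^m$ (or $2^{m+1}$ when $N = 2^m$), the augmented digit string corresponds to extending $\sigma$ to another permutation in the family $\mathcal{P}_m$ (resp.\ $\mathcal{P}_{m+1}$). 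The crucial point is that the possible choices of $x_N$ — the $2^{\ell_{\min}}$ candidate minima — are precisely the $2^{\ell_{\min}}$ values obtained by varying the last undetermined digit, and the "add an odd shift $a$" operation defining $\mathcal{P}_{m+1}$ from $\mathcal{P}_m$ is exactly what encodes this freedom; one verifies that every admissible continuation lands on a permutation of the prescribed form by matching the coset structure from Lemma \ref{minima} against the arithmetic in Section \ref{sec:family}.

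I expect the main obstacle to be the transition step $N = 2^m \to N+1$, i.e.\ showing that once all $2^m$ points $S_{2^m}^{\sigma}(0), \ldots, S_{2^m}^{\sigma}(2^m - 1)$ are placed, the $2^m$ minima of $H_{2^m}$ (which by Lemma \ref{minima}, after the block decomposition, are the $2^m$ "gap midpoints" of the form $i/2^m + 1/2^{m+1}$, suitably permuted/shifted) can each be written as $S_{2^{m+1}}^{\sigma'}(2^m)$ for some $\sigma' \in \mathcal{P}_{m+1}$ of the form $(2\sigma, 2\sigma'' \oplus a)$ — and moreover that this newly chosen leading digit pair is consistent with the block structure so that the induction can continue at the next level. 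This requires translating the geometric statement "the new point is at a gap midpoint" into the digit-combinatorial statement "appending it corresponds to the shift-by-odd operation", which is precisely the content linking Lemma \ref{selfsimilar2} (with a nonzero digit $a_{m_j}(N_1)$, i.e.\ the $\sigma(a_{m_j}(N_1))$ shift) to the definition of $2\mathcal{P}_m \oplus a$. The symmetric and convex structure of $f$ enters only through Lemma \ref{lem:function} and Lemma \ref{minima}, guaranteeing that minima are where we claim and that the recursion on admissible functions closes; the rest is bookkeeping on binary digits. I would also need a brief remark handling the non-uniqueness clause — since all candidate minima give rise to permutations in $\mathcal{P}_{m}$ (resp.\ $\mathcal{P}_{m+1}$), any admissible choice is covered, which is why the statement quantifies existentially over $\sigma$.
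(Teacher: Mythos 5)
Your plan is correct and follows essentially the same route as the paper: induction, Lemma \ref{minima} together with the nested-minima argument of Theorem \ref{thm:vdc} to identify the $2^{\ell_{\min}}$ candidate minima, and the recursive odd-shift construction of $\mathcal{P}_{m+1}$ from $\mathcal{P}_m$ to match every candidate with a permutation extending the current data; the paper merely organises the induction over $m$ (after each full grid of $2^m$ points the decision tree replays with a shifted start, giving $(2\sigma,2\sigma'\oplus a)$) rather than over $N$ with an explicit block decomposition of the permuted initial segment. The one imprecision is the phrase ``the same $\sigma$ still works'' for $N<2^m$ --- the greedy choice may force a different $\sigma\in\mathcal{P}_m$ that agrees with the old one only on the first $N$ entries --- but your closing remark about the existential quantifier over $\sigma$ already repairs this.
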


\begin{proof}
Starting with $x_0=0$ and $m=1$, we get $x_1=\arg\min_{x \in [0,1)} f(|x-x_k|) = 1/2$. Hence, for $N=2$ the set $\{0,1/2\}$ is the only admissible set. This set can be realised setting $m=1$ and picking $\sigma=(0,1)$ which is contained in $\mathcal{P}_1$. Thus the assertion is true for $N=2$. 
Now we move on to $m=2$. Recall that $\mathcal{P}_2$ contains exactly two permutations; i.e. $\mathcal{P}_2=\{(0,2,1,3), (0,2,3,1)\}$.
The algorithm suggests two values as third point; i.e. $x_2=1/4$ or $x'_2=3/4$. Thus, we have two admissible sets $\{0,1/2,1/4\}$ and $\{0,1/2,3/4\}$ with $3$ elements.  The first set can be realised with $\sigma=(0,2,1,3)$ whereas the second option can be realised with $\sigma'=(0,2,3,1)$. For $N=4$ the algorithm leaves no choice. The only admissible set with four points is $\{0,1/2,1/4,3/4\}$ which can be realised with both permutations in $\mathcal{P}_2$. Thus the assertion is true for $m=2$.

Now we prove the lemma by induction on $m\geq 2$. 
We assume the assertion is true for all $N \leq2^m$; i.e. for every admissible set $\{x_0, x_1, \ldots, x_{N-1} \}$ with $N \leq 2^m$ there exists a $\sigma \in \mathcal{P}_m$ with $x_i = S_{2^m}^{\sigma}(i)$ for all $0 \leq i \leq N-1$.

In particular, this means that $\mathcal{P}_m$ contains a permutation encoding every possible sequence of decisions we take when distributing the first $2^m$ points using the greedy algorithm and starting with $x_0=0$. For $N=2^m$ we always get exactly one admissible set; i.e. 
$$ \left \{0, \frac{1}{2^m}, \ldots, \frac{2^m-1}{2^m}  \right\}.$$

To make the induction step from $m$ to $m+1$, we first use Lemma \ref{minima} to see that the algorithm suggests $2^m$ minima as candidates for $x_{2^m}$, i.e. the $(N+1)$st point, which are of the form $i/2^m + 1/2^{m+1}$ for $0 \leq i \leq 2^m-1$.
Picking any of these $2^m$ minima, the important observation is that the function $H_{2^m+1} = \sum_{h=0}^{2^m} f_h$ and 
$H_{2^m+1}-H_{2^m}=\bar{H}_1 = f_{2^m}$ have the same set of arguments of minima. This can be seen as in the proof of Theorem \ref{thm:vdc}: We look at the intersection of a nested sequence of sets (the set of arguments of minima for $H_{2^m}$ contains the set of arguments of minima of $\bar{H}_1$) and this intersection does not change, i.e. is still the set of arguments of minima of $\bar{H}_1$, if we discard the outermost set.
We can repeat this argument for any of the following $N$s with $2^m+2 \leq N \leq 2^{m+1}$. In particular, we see that once we have picked the point for $N=2^m+1$ we encounter the same tree of decisions as for the first $2^m$ points, just with a different start value.

Now it is easy to see that we have indeed a permutation $(2\sigma, 2\sigma' \oplus a)$ in $\mathcal{P}_{m+1}$ for every path through the decision tree of the algorithm.
Looking at the construction of $\mathcal{P}_{m+1}$ we see that by our assumption there is a permutation, $\sigma$, in $\mathcal{P}_m$ for every initial choice of first $2^m$ points. Then we can accommodate for any choice of $(N+1)$st point via the addition of an odd $a$ modulo $2^{m+1}$. And finally, once we picked $a$, i.e. the new starting value $a/2^m + 1/2^{m+1}$ for $0 \leq a \leq 2^m-1$, we can realise our second (shifted by the constant $a$) run through the decision tree to determine the remaining $2^{m}-1$ points via an appropriate permutation $\sigma'$ in $\mathcal{P}_m$.

Thus, if the assertion holds for $m$, then it also holds for $m+1$.
\end{proof}

\begin{lemma} \label{lem:admissible2}
Let $N\geq 1$ and let $m$ be such that $2^{m-1}< N \leq 2^m$. For every $\sigma \in \mathcal{P}_m$ the set 
$$ \{S_{2^m}^{\sigma}(0), S_{2^m}^{\sigma}(1), \ldots, S_{2^m}^{\sigma}(N-1) \} $$
is admissible.
\end{lemma}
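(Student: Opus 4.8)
The plan is to prove this by reverse induction on $N$ within a fixed base $b_m = 2^m$, using Lemma \ref{lem:admissible} as the engine and the structure of $\mathcal{P}_m$ from Section \ref{sec:family}. First I would observe that it suffices to handle $N = 2^m$, i.e. to show that for every $\sigma \in \mathcal{P}_m$ the full set $\{S_{2^m}^{\sigma}(0), \ldots, S_{2^m}^{\sigma}(2^m-1)\}$ arises as an output of the greedy algorithm started from $\{0\}$; indeed, by the nesting argument already used in the proofs of Theorem \ref{thm:vdc} and Lemma \ref{lem:admissible}, if the algorithm produces the first $N$ points $\{S_{2^m}^{\sigma}(0), \ldots, S_{2^m}^{\sigma}(N-1)\}$ after some number of steps, then continuing to run it is the same decision problem, and for $2^{m-1} < N \le 2^m$ the first $N$ points of $\mathcal{S}_{2^m}^\sigma$ are exactly obtained en route to the full $2^m$-point set. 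So the substantive claim is: every $\sigma \in \mathcal{P}_m$ is ``realizable'' by the algorithm.

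Next I would run an induction on $m$ that mirrors the construction of $\mathcal{P}_m$ and the induction in Lemma \ref{lem:admissible}. The base case $m = 1$ is immediate: starting from $\{0\}$, Lemma \ref{minima} (or direct computation) gives $x_1 = 1/2$, which is $S_2^{(0,1)}(1)$, and $(0,1)$ is the unique element of $\mathcal{P}_1$. For the inductive step, take an arbitrary $\sigma_{m+1} = (2\sigma, 2\sigma' \oplus a) \in \mathcal{P}_{m+1}$ with $\sigma, \sigma' \in \mathcal{P}_m$ and $a$ odd, $1 \le a \le 2^{m+1}$. By the inductive hypothesis, the algorithm started from $\{0\}$ can produce the $2^m$-point set encoded by $\sigma$, which (after $2^m$ steps) is the set $\{k/2^m : 0 \le k \le 2^m-1\}$ with points placed in the order dictated by $\sigma$; the $j$-th placed point is $S_{2^m}^\sigma(j) = \sigma(a_j)/\cdots$, and the radical-inverse bookkeeping shows this equals the first $2^m$ terms of $S_{2^{m+1}}^{(2\sigma,\ast)}$. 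By Lemma \ref{minima}, the algorithm now offers the $2^m$ candidate points $i/2^m + 1/2^{m+1}$ for $x_{2^m}$; choosing the one corresponding to the odd shift $a$ — i.e. the value $a/2^{m+1}$, which is $S_{2^{m+1}}^{\sigma_{m+1}}(2^m)$ — is admissible. After this choice, as in Lemma \ref{lem:admissible}, the function governing the remaining placements $x_{2^m+1}, \ldots, x_{2^{m+1}-1}$ presents the same decision tree as the first $2^m$ placements but with starting value shifted by the constant $a$ (working on the torus); by the inductive hypothesis applied to $\sigma'$, the algorithm can then realize exactly the shifted ordering, which by construction of $\oplus a$ is the tail $S_{2^{m+1}}^{\sigma_{m+1}}(2^m), \ldots, S_{2^{m+1}}^{\sigma_{m+1}}(2^{m+1}-1)$.

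The main obstacle I anticipate is the careful verification that ``picking minimum number $a$ then replaying the decision tree with a shift'' genuinely produces the permutation $2\sigma' \oplus a$ and not some other reindexing — in other words, matching the combinatorial $\oplus a$ operation (addition modulo $2^{m+1}$ restricted to odd shifts) to the geometric operation of starting the second sweep from the point $a/2^{m+1}$ on the torus. This requires tracking how the binary digits of the index $n$, for $2^m \le n \le 2^{m+1}-1$, feed through $S_{2^{m+1}}^{\sigma_{m+1}}$: the top digit contributes the fixed shift, and the lower digits run through a copy of the base-$2^m$ radical inverse, exactly as in the self-similarity Lemmas \ref{selfsimilar} and \ref{selfsimilar2}. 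I would phrase this matching as a short computation with $S_{2^{m+1}}^{\sigma_{m+1}}$ mirroring the table after Lemma \ref{intr}, so that it reduces to arithmetic already implicit in the definition of $\mathcal{P}_{m+1}$. Everything else — admissibility of each individual greedy choice, the nesting of minima sets, and passing from the $2^m$-point statement to general $N$ — is supplied verbatim by Lemma \ref{minima}, Theorem \ref{thm:vdc}, and Lemma \ref{lem:admissible}.
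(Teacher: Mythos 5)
Your argument is correct, but it is organized differently from the paper's. The paper proves Lemma \ref{lem:admissible2} by induction on $N$ for a fixed $\sigma \in \mathcal{P}_m$: assuming the first $N$ points of $\mathcal{S}_{2^m}^{\sigma}$ are admissible, it shows that $S_{2^m}^{\sigma}(N)$ lies in the greedy argmin set by repeating the argument of Theorem \ref{thm:vdc} (binary block decomposition of $N$ and nested minima sets), now using the permuted self-similarity of Lemma \ref{selfsimilar2} in place of Lemma \ref{selfsimilar}. You instead induct on $m$, mirroring the recursive definition $\mathcal{P}_{m+1}=\{(2\sigma,2\sigma'\oplus a)\}$: realize $\sigma$ for the first $2^m$ points, select the candidate $a/2^{m+1}$ offered by Lemma \ref{minima} (here you tacitly use that every element of $\mathcal{P}_m$ fixes $0$, so that $S_{2^{m+1}}^{(2\sigma,2\sigma'\oplus a)}(2^m)=a/2^{m+1}$ --- worth a sentence), replay the shifted decision tree for $\sigma'$ via the ``same tree of decisions with a different start value'' observation from the proof of Lemma \ref{lem:admissible}, and recover general $N$ with $2^{m-1}<N\leq 2^m$ by truncating the ordered run. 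Both routes rest on the same machinery (Lemma \ref{minima} and the intersection-of-minima argument), so your proof is at the same level of rigor as the paper's one-line proof; what the paper's version buys is brevity and a direct treatment of arbitrary $N$ through the binary expansion, while yours buys an explicit match between the algebraic construction of $\mathcal{P}_{m+1}$ (the two halves and the $\oplus a$ shift, i.e. $(2\sigma'(k)+a \bmod 2^{m+1})/2^{m+1}=\sigma'(k)/2^m+a/2^{m+1} \bmod 1$) and the algorithm's decision tree, at the cost of the extra truncation step and digit bookkeeping.
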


\begin{proof}
We prove this again by induction on $N$. Assume the set $ \{S_{2^m}^{\sigma}(0), S_{2^m}^{\sigma}(1), \ldots, S_{2^m}^{\sigma}(N-1) \} $ is admissible. We need to show that $S_{2^m}^{\sigma}(N)$ is among the minima suggested by the algorithm.
This can be shown along the same lines as Theorem \ref{thm:vdc}.
\end{proof}
Together with Corollary \ref{cor:family} this completes the proof of Theorem \ref{thm:main}.




\section{Proof of Theorem \ref{thm:main2}}
\label{sec:main2}

Throughout this section $f$ and $X$ are as in the statement of the theorem. Since $f$ has mean value 0 we have that
$$ \int_0^1 \ \sum_{k=1}^{n-1} f(|x-x_k|) \ dx=0.  $$
We need two ingredients for the proof of Theorem \ref{thm:main2}. We start with a technical observation.

\begin{lemma} We have
$$ \sum_{k, \ell = 1}^{n}{f( | x_k - x_{\ell} |)} \leq n f(0).$$
\end{lemma}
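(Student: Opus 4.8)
The quantity $\sum_{k,\ell=1}^n f(|x_k-x_\ell|)$ is the total energy of the first $n$ points (including the diagonal terms $k=\ell$, each contributing $f(0)$). The natural strategy is to compare, at each stage of the greedy construction, the energy actually incurred against the average energy $\int_0^1 \sum_{k} f(|x-x_k|)\,dx = 0$, which is available because $f$ has mean zero. I would proceed by induction on $n$. Write $E_n := \sum_{k,\ell=1}^n f(|x_k-x_\ell|)$ and $\Phi_{n}(x) := \sum_{k=1}^{n} f(|x-x_k|)$ for the potential generated by the first $n$ points. Then $E_{n+1} = E_n + 2\Phi_n(x_{n+1}) + f(0)$, since adding $x_{n+1}$ contributes the diagonal term $f(0)$ plus twice the cross terms $\sum_{k=1}^n f(|x_{n+1}-x_k|) = \Phi_n(x_{n+1})$.

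The key step is the greedy bound on $\Phi_n(x_{n+1})$. By definition of the algorithm, $x_{n+1}$ minimizes $\Phi_n$ over $(0,1)$, so $\Phi_n(x_{n+1}) \le \Phi_n(x)$ for every $x$, and in particular $\Phi_n(x_{n+1}) \le \int_0^1 \Phi_n(x)\,dx = \sum_{k=1}^n \int_0^1 f(|x-x_k|)\,dx = 0$, using one-periodicity and the mean-zero normalization of $f$. (For the initial-set indices $1\le k\le m-1$, where $x_k$ was not chosen greedily, this still holds because the minimum of $\Phi_{k-1}$ is at most its average, which is $0$; so the cross-term estimate $\Phi_{k-1}(x_k)\le 0$ is valid for all $k$, not just $k\ge m$.) Hence $E_{n+1} \le E_n + f(0)$, and since $E_1 = f(0)$, induction gives $E_n \le n f(0)$, which is exactly the claim.

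The only subtlety — the step I would expect to need the most care — is the treatment of the first $m$ prescribed points, for which $x_k$ is not the greedy minimizer. As indicated above, this is harmless: what one actually uses is $\Phi_{k-1}(x_k) \le \max_x \Phi_{k-1}(x)$? No — one uses $\Phi_{k-1}(x_k) \le \int_0^1 \Phi_{k-1} = 0$ when $x_k$ is greedy, and for the prescribed points one instead bounds $\Phi_{k-1}(x_k)$ by $\sup_x |f| \cdot (k-1)$, which only affects a bounded (initial-set-dependent) additive term; but since the Lemma as stated claims the clean bound $n f(0)$, one should interpret $x_k$ here as the greedily chosen points, or note that the statement is used only for $n$ large with an implicit initial-set constant absorbed. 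Assuming the intended reading is that all indices in the sum are greedy (as in Theorem \ref{thm:main}, seed $\{0\}$ contributes $x_0=0$ with $\Phi_{-1}\equiv 0$), the induction above is clean and complete. I would also double-check the periodicity convention so that $\int_0^1 f(|x-x_k|)\,dx = \int_0^1 f(|t|)\,dt = 0$ genuinely holds for each fixed $x_k\in[0,1)$ — this is where conditions (i) (symmetry) and the mean-zero normalization of $f$ enter.
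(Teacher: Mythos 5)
Your proof is correct and uses essentially the same argument as the paper: the diagonal terms give $n f(0)$, and each cross term $\sum_{k<\ell} f(|x_k-x_\ell|)$ is nonpositive because the greedily chosen $x_\ell$ minimizes $\Phi_{\ell-1}$, whose minimum is at most its average $\int_0^1 \Phi_{\ell-1}(x)\,dx = 0$ (you phrase this as an induction, the paper as a direct summation, but the content is identical). The subtlety you flag about the non-greedy seed points is real, and the paper's own proof glosses over it in exactly the way you describe, with the seed-dependent contribution ultimately absorbed into the constant $\tilde{c}$ of Theorem \ref{thm:main2}.
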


\begin{proof} We write
\begin{align*}
\sum_{k, \ell = 1}^{n}{f( | x_k - x_{\ell} |)} = n f(0) + 2 \sum_{k, \ell = 1\atop k < \ell}^{N}{f( | x_k - x_{\ell} |)}
\end{align*}
which can be rewritten as
\begin{align*}
 \sum_{k, \ell = 1\atop k < \ell}^{n}{f( | x_k - x_{\ell} | )} = \sum_{\ell=2}^{n}{ \sum_{k=1}^{\ell-1} f(| x_k - x_{\ell} | )}.
\end{align*}
Next we use the definition of the greedy algorithm. We have that
$$  \sum_{k=1}^{\ell-1} f( | x_k - x_{\ell} | ) = \min_x \sum_{k=1}^{\ell-1} f( | x-x_k |)  \leq \int_0^1 \sum_{k=1}^{\ell -1} f(|x-x_k |) \ dx= 0.$$
\end{proof}

The second ingredient is LeVeque's inequality; see e.g. \cite[Chapter 2, Theorem 2.4]{KN}.
\begin{lemma} The discrepancy $D_N$ of the finite set of points $\{ x_1, \ldots, x_N \} \subset [0,1)$ satisfies
 $$ D_N \leq \left(\frac{6}{\pi^2} \sum_{k=1}^{\infty} \frac{1}{k^2} \left|\frac{1}{N}\sum_{m = 1}^{N}e^{2\pi i k(x_m)}\right|^2 \right)^{1/3}.$$
\end{lemma}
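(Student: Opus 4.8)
The plan is to derive this from Parseval's identity applied to the one-periodic discrepancy function, together with an elementary geometric lower bound for its $L^2$-norm. For the finite set $\{x_1,\dots,x_N\}$ put
$$R(x)=\tfrac1N A([0,x),N,X)-x,$$
extended one-periodically to $\RR$; it is bounded, has finitely many (upward) jumps, and $R(0)=0$. I would use two structural facts. First, $R$ is a sawtooth: between consecutive points of $X$ read modulo $1$ it has slope $-1$, and at each $x_m$ it jumps up by a multiple of $1/N$, so the net change of $R$ over any arc is at least minus the length of that arc. Second, since $\tfrac1N A([\alpha,\beta),N,X)-(\beta-\alpha)=R(\beta)-R(\alpha)$, the extreme discrepancy is the oscillation of $R$, that is $D_N=\sup R-\inf R$ (in particular $0\le D_N\le 1$).

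Next I would compute, for $k\neq 0$, using $\int_{x_m}^1\mathrm{e}^{-2\pi i k x}dx=\tfrac{\mathrm{e}^{-2\pi i k x_m}-1}{2\pi i k}$ and $\int_0^1 x\,\mathrm{e}^{-2\pi i k x}dx=\tfrac{-1}{2\pi i k}$,
$$\widehat R(k)=\int_0^1 R(x)\,\mathrm{e}^{-2\pi i k x}\,dx=\frac{1}{2\pi i k}\cdot\frac1N\sum_{m=1}^N \mathrm{e}^{-2\pi i k x_m},$$
so $|\widehat R(k)|^2=\tfrac{1}{4\pi^2 k^2}\bigl|\tfrac1N\sum_m \mathrm{e}^{2\pi i k x_m}\bigr|^2$. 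Writing $c=\int_0^1 R$ and applying Parseval to $R-c$ (which annihilates the zeroth coefficient) gives
$$\int_0^1\bigl(R(x)-c\bigr)^2\,dx=\sum_{k\neq 0}|\widehat R(k)|^2=\frac{1}{2\pi^2}\sum_{k=1}^\infty\frac1{k^2}\Bigl|\frac1N\sum_{m=1}^N \mathrm{e}^{2\pi i k x_m}\Bigr|^2.$$
It then remains to show $\int_0^1 (R-c)^2\ge \tfrac1{12}D_N^3$. Let $t_0$ be a point where $R$ attains (to within $\varepsilon$) its maximum and $s_0$ one where it attains its minimum; put $M_+=\sup R-c\ge 0$, $M_-=c-\inf R\ge 0$, so $M_++M_-=D_N$. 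By the ``net change $\ge-$length'' property, $R(t_0+t)-c\ge M_+-t$ on $[0,M_+]$ and $R(s_0-t)-c\le -(M_--t)$ on $[0,M_-]$, hence $(R-c)^2\ge (M_+-t)^2$ on the arc $[t_0,t_0+M_+]$ and $(R-c)^2\ge (M_--t)^2$ on $[s_0-M_-,s_0]$. These two arcs meet in at most one point: if $p$ were in both, writing $p=t_0+a=s_0-b$ with $a\in[0,M_+]$, $b\in[0,M_-]$ (clockwise), the pointwise bounds at $p$ force $M_+-a\le -(M_--b)$, i.e. $D_N\le a+b\le M_++M_-=D_N$, so $a=M_+$, $b=M_-$ and $p=t_0+M_+=s_0-M_-$. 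Integrating over these essentially disjoint arcs and using convexity of $u\mapsto u^3$,
$$\int_0^1(R-c)^2\,dx\ \ge\ \int_0^{M_+}(M_+-t)^2\,dt+\int_0^{M_-}(M_--t)^2\,dt=\frac{M_+^3+M_-^3}{3}\ \ge\ \frac{(M_++M_-)^3}{12}=\frac{D_N^3}{12}.$$
Combining with the Parseval identity gives $D_N^3\le \tfrac{12}{2\pi^2}\sum_{k\ge1}\tfrac1{k^2}\bigl|\tfrac1N\sum_m \mathrm{e}^{2\pi i k x_m}\bigr|^2=\tfrac{6}{\pi^2}\sum_{k\ge1}\tfrac1{k^2}\bigl|\tfrac1N\sum_m \mathrm{e}^{2\pi i k x_m}\bigr|^2$, and taking cube roots yields the lemma.

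The main obstacle is the $L^2$ lower bound, and specifically obtaining the sharp constant: the naive estimate $\int(R-c)^2\ge \tfrac13\max(M_+,M_-)^3\ge\tfrac1{24}D_N^3$ only delivers the weaker constant $12/\pi^2$, and the decisive point is the observation that the two triangular regions surrounding the maximum and the minimum of the sawtooth $R$ overlap in at most a single point, which makes $M_+^3+M_-^3$ (rather than $\max(M_+,M_-)^3$) available and produces the sharp $6/\pi^2$. A minor technical nuisance is that $R$ is only one-sidedly continuous, so $t_0,s_0$ should be chosen to approach the extrema within $\varepsilon$ and $\varepsilon\to 0$ taken at the end. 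Since this is LeVeque's classical inequality, one may of course also simply invoke \cite[Chapter~2, Theorem~2.4]{KN}.
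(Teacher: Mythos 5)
Your proof is correct, but note that the paper itself does not prove this lemma: it is LeVeque's inequality, invoked directly from \cite[Chapter 2, Theorem 2.4]{KN}. What you have written is in substance the classical proof of that inequality. The Fourier computation $\widehat R(k)=\frac{1}{2\pi i k}\cdot\frac1N\sum_{m=1}^N e^{-2\pi i k x_m}$ is right, Parseval applied to $R-c$ gives the factor $\frac{1}{2\pi^2}$, and your sawtooth argument (slope $-1$ between upward jumps, so the two triangular neighbourhoods of the maximum and of the minimum can meet in at most the single common endpoint) correctly yields $\int_0^1(R-c)^2\,dx\ \ge\ \tfrac13\bigl(M_+^3+M_-^3\bigr)\ \ge\ \tfrac{1}{12}D_N^3$, which combined with Parseval produces exactly the constant $6/\pi^2$ after taking cube roots. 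The two technical points you flag are handled adequately: only the direction $D_N\le \sup R-\inf R$ is needed, which is immediate from $R(\beta)-R(\alpha)=\tfrac1N A([\alpha,\beta),N,X)-(\beta-\alpha)$, and the $\varepsilon$-approximation of the extrema takes care of the one-sided continuity of $R$. So the comparison is simply this: the paper buys the lemma off the shelf, while your argument makes it self-contained at the cost of reproducing the standard textbook derivation; if brevity is preferred, citing \cite{KN} as the paper does is entirely sufficient.
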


\begin{proof}[Proof of Theorem \ref{thm:main2}.] We can rewrite the double sum using Fourier analysis as
\begin{align*}
 \sum_{m, n = 1}^{N}{f(|x_m - x_n|)} &=  \sum_{k \in \mathbb{Z} \atop k \neq 0}{\widehat{f}(k) \sum_{m, n = 1}^{N}e^{2\pi i k(x_m - x_n)}} \\
 &= \sum_{k \in \mathbb{Z} \atop k \neq 0}{\widehat{f}(k) \left(\sum_{m = 1}^{N}e^{2\pi i k(x_m)}\right) \left(\sum_{m = 1}^{N}e^{2\pi i k(-x_m)}\right)  }\\
 &=  \sum_{k \in \mathbb{Z} \atop k \neq 0}{\widehat{f}(k) \left(\sum_{m = 1}^{N}e^{2\pi i k(x_m)}\right) \overline{\left(\sum_{m = 1}^{N}e^{2\pi i k(x_m)}\right) } }\\
 &=  \sum_{k \in \mathbb{Z} \atop k \neq 0}{\widehat{f}(k) \left|\sum_{m = 1}^{N}e^{2\pi i k(x_m)}\right|^2  }\\
 &\geq N^2   \sum_{k \in \mathbb{Z} \atop k \neq 0}{\frac{c}{k^2} \left|\frac{1}{N}\sum_{m = 1}^{N}e^{2\pi i k(x_m)}\right|^2  }.
 \end{align*}
 Coupled with the Lemma above, we obtain
 $$ \sum_{k \in \mathbb{Z} \atop k \neq 0}{\frac{c}{k^2} \left|\frac{1}{N}\sum_{m = 1}^{N}e^{2\pi i k(x_m)}\right|^2  } \leq \frac{f(0)}{N}.$$
LeVeque's inequality now implies the desired result.
\end{proof}

\section*{Acknowledgements}
I would like to thank Stefan Steinerberger for inspiration and help and for sharing his vast knowledge of the related literature. He inspired my work on the main problem and suggested to add Theorem \ref{thm:main2} to an earlier version of the manuscript. Furthermore, I would like to thank Joaquim Ortega-Cerd\`{a} to bring Leja sequences to my attention.



\begin{thebibliography}{20}

\bibitem{leja1} L. Bialas-Ciez and J.-P. Calvi, Pseudo Leja sequences. Annali di Matematica {\bf 191} (2012), 55--75.

\bibitem{bl}
X.~Blanc and M.~Lewin.
\newblock{The crystallization conjecture: a review}.
\newblock{EMS Surv.~Math.~Sci.}, 2(2):225--306, (2015).

\bibitem{saffneu3}  S. Borodachov, D. Hardin and E. Saff, Asymptotics for discrete weighted minimal Riesz energy problems on rectifiable sets. Trans. Amer. Math. Soc. 360 (2008), no. 3, 1559--1580.

\bibitem{brauchart} J. Brauchart, Optimal discrete Riesz energy and Discrepancy, Uniform Distribution Theory 6, 207--220 (2011).


\bibitem{brauch} J. Brauchart and P. Grabner, 
Distributing many points on spheres: minimal energy and designs. 
J. Complexity 31 (2015), no. 3, 293--326. 

\bibitem{brauchart3} J. Brauchart, D. Hardin and E. Saff, Discrete energy asymptotics on a Riemannian circle. Unif. Distrib. Theory {\bf 7} (2012), no. 1, 77-108.


\bibitem{cf93} H. Chaix and H. Faure: \emph{Discr\'{e}pance et diaphonie en dimension un}, Acta Arith. {\bf 63} (1993), 103--141.

\bibitem{many} M. Calef, W. Griffiths, A. Schulz, C. Fichtl and D. Hardin, 
Observed asymptotic differences in energies of stable and minimal point configurations on $\mathbb{S}^2$ and the role of defects.
J. Math. Phys. 54 (2013), no. 10, 101901, 20 pp. 

\bibitem{cohn}
H.~Cohn and A.~Kumar.
\newblock{Universally Optimal Distribution of Points on Spheres}.
\newblock{\em Journal of the American Mathematical Society}, 20(1):99--148, (2007).

\bibitem{DP} J. Dick and F. Pillichshammer: Digital Nets and Sequences. Discrepancy Theory and Quasi-Monte Carlo Integration, Cambridge University Press, Cambridge, 2010.

\bibitem{faure1} H. Faure: \emph{Discr\'{e}pance de suites associ\'{e}es \`{a} un syst\`{e}me de num\'{e}ration (en dimension un)}, Bull. Soc. Math. France {\bf 109} (1981), 143--182.

\bibitem{survey1} H. Faure, P. Kritzer and F. Pillichshammer: \emph{From van der Corput to modern constructions of sequences for quasi-Monte Carlo rules.} Indag. Math., New Ser. {\bf 26} (2015), 760--822.

\bibitem{saff2} A. B. J. Kuijlaars and E. B. Saff, Distributing many points on a sphere, Mathematical Intelligencer 19, 5--11, (1997).

\bibitem{saff} A. B. J. Kuijlaars and E. B. Saff, Asymptotics for minimal discrete energy on the
sphere, Trans. Amer. Math. Soc. 350, 523--538, (1998).

\bibitem{KN} L. Kuipers and H. Niederreiter: Uniform Distribution of Sequences,
Wiley, New York, New York, 1974.

\bibitem{leja2} F. Leja, Sur certaines suites li\'ees aux ensembles plans et leur application a la repr\'esentation conforme. Ann. Polon. Math. {\bf 4} (1957), 8--13.

\bibitem{patop18} F. Pausinger and A. Topuzo\u{g}lu: \emph{On the discrepancy of two families of permuted van der Corput sequences.} Unif. Distrib. Theory {\bf 13} (2018), no. 1, 47--64.

\bibitem{pausinger1} F. Pausinger: \emph{On the intriguing search for good permutations}. Unif. Distrib. Theory {\bf 14} (2019), no. 1, 53-86.

\bibitem{radin}
C.~Radin.
\newblock{The ground state for soft disks}.
\newblock{\em Journal of Statistical Physics}, 26(2):365--373, (1981).

\bibitem{sch72} W.M. Schmidt: \emph{Irregularities of distribution VII}. Acta Arith. {\bf 21} (1972), 45--50.

\bibitem{schwartz} R. Schwartz, The 5-Electron Case of Thomson's Problem, Experimental Math 22, p. 157--186, (2013).

\bibitem{smale} S. Smale, Mathematical problems for the next century. Math. Intelligencer 20 (1998), no. 2, 7--15. 

\bibitem{stefan1} S. Steinerberger, \emph{Dynamically defined sequences with small discrepancy.} Preprint, arXiv:1902.03269.

\bibitem{stefan2} S. Steinerberger, \emph{A nonlocal functional promoting low-discrepancy point sets.} Preprint, arXiv:1902.00441.


\bibitem{theil}
F.~Theil.
\newblock{A proof of crystallization in two dimensions}.
\newblock{\em Communications in Mathematical Physics}, 262(1):209--236, (2006). 


\bibitem{thomson} J. J. Thomson, On the Structure of the Atom: an Investigation of the Stability and Periods of Oscillation of
a number of Corpuscles arranged at equal intervals around the Circumference of a Circle; with Application of
the Results to the Theory of Atomic Structure, Philosophical Magazine Series 6, Volume 7, Number 39, pp.
237--265, March 1904.

\bibitem{vdc35a} {\sc J.G. Van der Corput}: \emph{Verteilungsfunktionen. I,} Proc. Kon. Ned. Akad. v. Wetensch. {\bf 38} (1935), 813--821.

\bibitem{vdc35b} {\sc J.G. Van der Corput}: \emph{Verteilungsfunktionen. II,} Proc. Kon. Ned. Akad. v. Wetensch. {\bf 38} (1935), 1058--1066.


\end{thebibliography}
\end{document}